\numberwithin{equation}{section}
\newtheorem{theorem}{Theorem}[section]
\newtheorem{lemma}{Lemma}[section]
\newtheorem{corollary}{Corollary}[section]
\renewcommand{\d}{\mathrm{d}}
\newcommand{\bgamma}{\boldsymbol{\gamma}}
\newcommand{\Dc}{\mathcal{D}}
\newcommand{\bM}{\mathbf{M}}
\newcommand{\Ms}{\mathscr{M}}
\renewcommand{\d}{\mathrm{d}}
\newcommand{\Ds}{\mathscr{D}}
\newcommand{\Hc}{\mathcal{H}}
\newcommand{\Hs}{\mathscr{H}}
\newcommand{\Kc}{\mathcal{K}}
\newcommand{\Mb}{\mathbf{M}}
\newcommand{\Qt}{\tilde{Q}}
\newcommand{\Rc}{\mathcal{R}}
\newcommand{\Vc}{\mathcal{V}}
\newcommand{\bcdot}{\mbox{\boldmath${\cdot}$}}
\newcommand{\btimes}{\mbox{\boldmath${\times}$}}
\newcommand{\bmu}{\mbox{\boldmath${\mu}$}}
\newcommand{\bnabla}{\mbox{\boldmath${\nabla}$}}
\newcommand{\bpsi}{\mbox{\boldmath${\psi}$}}
\newcommand{\brho}{\mbox{\boldmath${\rho}$}}
\newcommand{\bsig}{\mbox{\boldmath${\sigma}$}}
\newcommand{\vecxi}{\mbox{\boldmath${\xi}$}}
\newcommand{\bxi}{\boldsymbol{\xi}}
\newcommand{\bzeta}{\boldsymbol{\zeta}}
\renewcommand{\d}{\mathrm{d}}
\newcommand{\bepsilon}{\mbox{\boldmath${\epsilon}$}}
\newcommand{\bsigma}{\mbox{\boldmath${\sigma}$}}
\renewcommand{\d}{\mathrm{d}}
\newcommand{\vecx}{\boldsymbol{x}}
\newcommand{\vecy}{\boldsymbol{y}}
\newcommand{\vece}{\boldsymbol{e}}
\newcommand{\vecE}{\boldsymbol{E}}
\newcommand{\vecD}{\boldsymbol{D}}
\newcommand{\vecJ}{\boldsymbol{J}}
\begin{document} 
\title{Spectral theory of effective transport for continuous \\uniaxial polycrystalline materials}
\author{N. Benjamin Murphy, Daniel Hallman, Elena Cherkaev, and Kenneth M. Golden}
\address{Department of Mathematics, University of Utah,
155 S 1400 E, Salt Lake City, UT 84112-0090}
\maketitle
\begin{abstract}
Following seminal work in  
the early 1980s that established the existence and representations of the homogenized
transport coefficients for two phase random media,  
we develop a mathematical framework that provides Stieltjes integral representations for the bulk
transport coefficients for uniaxial polycrystalline materials, involving spectral measures of self-adjoint random
operators, which are compositions of non-random and random projection operators. We demonstrate the same
mathematical framework also describes two-component composites, with a simple substitution of the random
projection operator, making the mathematical descriptions of these two distinct physical systems directly
analogous to one another. A detailed analysis establishes the operators arising in each setting are indeed 
self-adjoint on an $L^2$-type Hilbert space, providing a rigorous foundation to the formal spectral theoretic 
framework established by Golden and Papanicolaou in 1983. An abstract extension of the Helmholtz theorem also leads 
to integral representations for the inverses of effective parameters, e.g., effective conductivity and resistivity. An
alternate formulation of the effective parameter problem in terms of a Sobolev-type Hilbert space provides a
rigorous foundation for an approach first established by Bergman and Milton. We show that the correspondence between 
the two formulations is a one-to-one isometry. Rigorous bounds that follow from such 
Stieltjes integrals and partial knowledge about the material geometry are reviewed and validated by numerical 
calculations of the effective parameters for polycrystalline media.
\end{abstract}

\section{Introduction}\label{sec:Introduction}
In this article we formulate a rigorous mathematical framework that provides Stieltjes integral representations for the bulk transport coefficients for  \emph{uniaxial} polycrystalline  \cite{Barabash:JPCM:10323,Gully:PRSA:471:2174,Milton:2002:TC} composite materials, including electrical and thermal conductivity, diffusivity, complex permittivity, and magnetic permeability. All of these transport phenomena are described locally by the same elliptic partial differential equation (PDE)~\cite{Milton:2002:TC}. For example, static electrical conduction \cite{Jackson-1999} is described by $\bnabla \bcdot \, (\bsigma \bnabla \phi)=\rho$ with (free) charge density $\rho$, electrical potential 
$\phi$, electric field $\vecE=-\bnabla\phi$, and local conductivity matrix $\bsigma$ given by \cite{Milton:2002:TC}
$\bsigma=R^T\text{diag}(\sigma_1,\sigma_2,\ldots,\sigma_2)R$ where $\sigma_1$ and $\sigma_2$ are real-valued, i.e., $\bsigma=R^T\text{diag}(\sigma_1,\sigma_2,\sigma_2)R$ for 3D and  $\bsigma=R^T\text{diag}(\sigma_1,\sigma_2)R$ for 2D. This is the general setting for 2D but introduces a uniaxial asymmetry for $d\ge3$, i.e., the local conductivity along one of the crystal axes has the value $\sigma_1$, while the conductivity along all the other crystal axes
has the value $\sigma_2$. Here, $R$ is a rotation matrix determining local crystal orientations.

Consequently, the current and electric fields $\vecJ=\bsigma\vecE$ and $\vecE$ satisfy the electrostatic version of Maxwell's equations $\bnabla\bcdot\vecJ=\rho$ and $\bnabla\btimes\vecE=0$. 
In the long electromagnetic wavelength limit, the electric and displacement fields satisfy the quasistatic limit of Maxwell's equations \cite{Milton:2002:TC} $\bnabla\bcdot\vecD=\rho_b$ and $\bnabla\btimes\vecE=0$, where $\rho_b$ is the bound charge density \cite{Jackson-1999}, the displacement field is given by $\vecD=\bepsilon\vecE$, $\bepsilon=R^T\text{diag}(\epsilon_1,\epsilon_2,\ldots,\epsilon_2)R$, and the crystal permittivities $\epsilon_1(f)$ and $\epsilon_2(f)$ take \emph{complex values} which depend on the electromagnetic wave frequency $f$. In the remainder of this manuscript we formulate the problem of effective transport in terms of electrical conductivity, keeping in mind the broad applicability of the method.

Polycrystalline materials are solids that are composed of many 
crystallites of varying size, shape, and orientation. The crystallites 
are microscopic crystals which are held together by boundaries which can
be highly defective. 
Due to the highly irregular shapes of the crystallites and their
defective boundaries, on a microscopic level, the electromagnetic
transport properties of the medium can be quite erratic. As a
consequence, 
the derivatives in the transport equations may not hold in a classical sense and a weak formulation 
of the transport equation is necessary to provide a rigorous mathematical description of effective transport
for such materials \cite{Papanicolaou:RF-835,Golden:CMP-473}.

In 1982 Papanicolaou and Varadhan \cite{Papanicolaou:RF-835} developed a 
mathematical framework for the homogenization of the transport equation 
$-\bnabla\bcdot(\bsigma\bnabla \phi) = \rho$ for an inhomogeneous composite medium, e.g., a mixture of 
several different materials with different electrical conductivities.  Due to the discontinuous nature of the composite material, 
the derivatives in the transport equation may not hold in a classical sense and a weak formulation 
of the transport equation is necessary to provide a rigorous mathematical description of effective transport
for such materials \cite{Papanicolaou:RF-835}. This mathematical framework holds when 
the components $\sigma_{jk}$, $j,k=1,\ldots,d$, of $\bsigma$ are functions that are spatially periodic, 
almost periodic, or stochastically stationary and ergodic \cite{Papanicolaou:RF-835}. It establishes that the large scale, 
macroscopic electrical transport properties of the system are governed by the homogenized equation
$-\bnabla\bcdot(\bsigma^*\bnabla\bar{\phi}) = \rho$, where $\bsigma^*$ is a (constant) 
\emph{effective conductivity} matrix and $\bar{\phi}$ is a homogenized electrical potential \cite{Bensoussan:Book:1978,Papanicolaou:RF-835}.

In 1983 Golden and Papanicolaou	\cite{Golden:CMP-473} developed the \emph{analytic continuation method},
applying the mathematical framework in \cite{Papanicolaou:RF-835} to the setting of a two-component locally
isotropic composite material. In this setting the components $\sigma_{jk}$, $i,j=1,\ldots,d$, of the local conductivity matrix $\bsigma$ are given by $\sigma_{jk}=\sigma\delta_{jk}$, where $d$ is the system dimension and 
$\delta_{jk}$ is the Kronecker delta. Moreover, $\sigma=\sigma_1\chi_1+\sigma_2\chi_2$, where $\chi_1$ is a characteristic function taking the value 1 in material component one and 0 otherwise, with $\chi_1+\chi_2=1$ and $\chi_i\chi_j=\chi_i\delta_{ij}$. Hence, $\chi_1$ and $\chi_2$ are scalar valued mutually orthogonal  projection operators. In this two-component setting, the components $\sigma^*_{jk}$ of the matrix $\bsigma^*$ have Stieltjes integral representations, introduced by Bergman and Milton  \cite{Bergman:PRL-1285:44,Milton:APL-300,Milton:2002:TC}. 
Golden and Papanicolaou \cite{Golden:CMP-473,Golden:JSP-655:40,Golden:IMA-97} postulated that the Stieltjes measure is a \emph{spectral measure} associated with
the operator $\chi_1\Gamma\chi_1$ \cite{Golden:CMP-473,Murphy:2015:CMS:13:4:825}. Here $\Gamma$ is defined to be $\Gamma=\bnabla(\Delta)^{-1}\bnabla\bcdot$, involving the Laplacian operator $\Delta=\bnabla\bcdot\bnabla=\nabla^2$, and $\chi_1\Gamma\chi_1$ was assumed to be self-adjoint on an appropriate $L^2$-type Hilbert space. The formal analysis by Bergman in  \cite{Bergman:PRC-377:9,Bergman:PRL-1285:44,Bergman:1979:PRB:2359:19:4} is analogous and led to Stieltjes integral representations for the $\sigma^*_{jk}$ in terms of spectral measures associated with an operator on a Sobolev-type Hilbert space. The analytic representation for  the diagonal components $\sigma^*_{kk}$ of $\bsigma^*$ led to rigorous forward and inverse bounds for $\sigma^*_{kk}$ 
\cite{Bergman:PRL-1285:44,Milton:APL-300,Golden:CMP-473,Golden:IMA-97,Milton:2002:TC,Cherkaev:WRM-437,Cherkaev:2001}.

In \cite{Gully:PRSA:471:2174,Barabash:JPCM:10323} the mathematical framework developed in \cite{Golden:CMP-473,Bergman:PRC-377:9,Bergman:1979:PRB:2359:19:4,Milton:APL-300,Milton:2002:TC} was adapted to the setting of uniaxial polycrystalline composite materials, to provide Stieltjes integral representations for the $\sigma^*_{jk}$ in terms of spectral measures associated with operators that were assumed to be self-adjoint on an appropriate Hilbert space. The analysis in \cite{Barabash:JPCM:10323} was 
adapted from the analysis of Bergman in \cite{Bergman:PRC-377:9,Bergman:1979:PRB:2359:19:4}, which involves a Sobolev-type Hilbert space, and \cite{Gully:PRSA:471:2174} adapted the analysis of Golden and Papanicolaou in \cite{Golden:CMP-473} which involves a $L^2$-type Hilbert space. Given the integral representation for the $\sigma^*_{kk}$, both forward and inverse bounds on $\sigma^*_{kk}$ were obtained  \cite{Gully:PRSA:471:2174}.

Here, we write the local conductivity matrix $\bsigma=R^T\text{diag}(\sigma_1,\sigma_2,\ldots,\sigma_2)R$ 
for a uniaxial polycrystalline material  as $\bsigma=\sigma_1X_1+\sigma_2X_2$  
where $X_1$ and $X_2$ are mutually orthogonal projection matrices satisfying $X_1+X_2=I$ and 
$X_i X_j=X_i\delta_{ij}$, where $I$ is the matrix identity. This is a direct analogue of the two-component locally isotropic material setting. We provide a detailed analysis and rigorously establish that 
$\Gamma=\bnabla(\bnabla^*\bnabla)^{-1}\bnabla^*$ is a self-adjoint projection onto the range of $\bnabla$, with respect to an $L^2$-type Hilbert space, where $\bnabla^*$ is the adjoint of a generalized gradient operator $\bnabla$. We prove that 
$X_1\Gamma X_1$ is self-adjoint on the same Hilbert space. The spectral theorem then leads to Stieltjes integral representations for
the components $\sigma^*_{jk}$ of the matrix $\bsigma^*$, involving spectral measures of the self-adjoint operator $X_1\Gamma X_1$. 
These results are extended to provide Stieltjes integral representations for the $\sigma^*_{jk}$ involving spectral measures 
of the self-adjoint operator $X_2\Gamma X_2$.
We provide an abstract formulation of the Helmholtz theorem to extend these results 
to the effective resistivity matrix $\brho^*$, establishing that its components $\rho^*_{jk}$, $j,k=1,\ldots,d$, 
have Stieltjes integral representations involving spectral measures for the operators $X_1\Upsilon X_1$ and $X_2\Upsilon X_2$, where
$\Upsilon=\bnabla\btimes(\bnabla\btimes\bnabla\btimes)^{-1}\bnabla\btimes$ is a self-adjoint projection
onto the range of a generalized curl operator $\bnabla\btimes$.

We develop an alternative formulation that provides Stieltjes integral representations for the components of 
$\bsigma^*$ and $\brho^*$, involving spectral measures for operators that are self-adjoint on a Sobolev-type 
Hilbert space. We show the domains and ranges of the operators associated with the $L^2$-type and Sobolev-type 
Hilbert space formulations are in 1-1 isometric correspondence. We prove that the spectral measures for the two 
formulations are identical, and we provide relationships between the resolution of the identity operators associated 
with each spectral measure.

Due to the direct analogue \cite{Milton:2002:TC}
between the mathematical frameworks for uniaxial polycrystalline media and two-component composite media, 
every result that we establish for the setting of polycrystalline materials also holds for the setting of two-component 
materials. Consequently, we simultaneously provide a rigorous analysis for spectral representations of effective 
parameters for two-component composite materials.

Finally, we review rigorous forward bounds for the effective parameters  \cite{Gully:PRSA:471:2174} that follow 
from their Stieltjes integral representations. We calculate the diagonal components $\epsilon^*_{kk}$ of the 
effective complex permittivity matrix $\bepsilon^*$ for isotropic uniaxial polycrystalline materials with checkerboard 
microstructure and 
with uniform distribution of crystal angles
for both 2D and 3D. We accomplish this by utilizing a discrete, matrix 
formulation of our mathematical framework --- details will be published elsewhere --- and directly calculating the 
spectral measures and effective parameters in terms of eigenvalues and eigenvectors for real-symmetric 
matrix representations for the self-adjoint operators. We demonstrate that numerical calculations for the $\epsilon^*_{kk}$ are captured by rigorous first and second order, nested bounds \cite{Gully:PRSA:471:2174}.

\section{Homogenization and Hilbert space} \label{sec:Homogenizaiton_and_Hilbert}
Towards the goal of formulating a detailed mathematical framework that describes effective transport coefficients 
for uniaxial polycrystalline media, we first review in Section \ref{sec:Homogenizaiton} an abstract theoretical 
framework \cite{Papanicolaou:RF-835,Golden:CMP-473} that addresses the homogenization of the key transport 
equation of interest, e.g., $-\bnabla\bcdot(\bsigma\bnabla \phi) = \rho$ for electrical conductivity,  keeping in mind the broader applicability of the method to thermal conductivity, diffusivity, complex permittivity, and magnetic permeability. In Section \ref{sec:Hilbert_space}, we frame the results of Section \ref{sec:Homogenizaiton} in a Hilbert 
space context which provides analytic properties of the effective transport coefficients and ultimately leads to their 
Stieltjes integral representation involving spectral measures of self-adjoint operators.

\subsection{Homogenization}
\label{sec:Homogenizaiton}
Consider an electrically conductive inhomogeneous medium occupying a $d$-dimensional region of space $\Vc\subset\mathbb{R}^d$, e.g., a composite or mixture of several different materials with different conductivities. 
A key assumption of homogenization theory \cite{Bensoussan:Book:1978,Papanicolaou:RF-835,Golden:CMP-473} is that the value of the local conductivity matrix $\bsigma(\vecx)$ changes 
rapidly as $\vecx$ varies over lengths comparable to the size of the region $\Vc$. This is 
modeled by introducing a small scalar $\delta>0$ and expressing the transport equation as
\begin{align}\label{eq:elliptic_delta}
-\bnabla\bcdot\left(
\bsigma\left(\frac{\vecx}{\delta}\right)
\bnabla\phi^\delta(\vecx)
\right) = \rho(\vecx)\,,
\quad
\vecx\in\Vc,
\end{align}
where differentiation is over the ``slow variable'' $\vecx$ and we define the ``fast variable'' $\vecy=\vecx/\delta$. 
The mathematical framework holds for general homogeneous boundary conditions \cite{Bensoussan:Book:1978,Papanicolaou:RF-835}.

Expanding $\phi^\delta$ in powers of $\delta$ and equating coefficients 
with like powers of $\delta$ shows, in the limit $\delta\to0$, that the large 
scale, macroscopic electrical transport of the system is described by an 
equation similar to \eqref{eq:elliptic_delta} but the inhomogeneous medium is replaced by a homogeneous medium with a \emph{constant} 
effective electrical conductivity matrix $\bsigma^*$ and a homogenized electrical potential $\bar{\phi}$ satisfying the same boundary conditions as $\phi^\delta$ \cite{Bensoussan:Book:1978,Papanicolaou:RF-835}
\begin{align}\label{eq:elliptic_effective}
-\bnabla\bcdot(
\bsigma^*
\bnabla\bar{\phi}(\vecx)
) = \rho(\vecx)\,,
\quad
\vecx\in\Vc.
\end{align}
%
The effective conductivity matrix $\bsigma^*$ is given by \cite{Bensoussan:Book:1978,Papanicolaou:RF-835}  
\begin{align}
\label{eq:effective_conductivity_matrix}
\bsigma^* = \left\langle
\bsigma(\vecy)\left(I + \bnabla\bpsi(\vecy)\right)
\right\rangle,
%
\end{align}
where $I$ is the identity matrix on $\mathbb{R}^d$, $\langle\cdot\rangle$ denotes volume average in the fast 
variable $\vecy=\vecx/\delta$, and partial differentiation associated with the gradient operator $\bnabla$ in 
\eqref{eq:effective_conductivity_matrix} is with respect to the fast variable $\vecy$. The components $\psi_i$ of the 
row vector $\bpsi=(\psi_i,\ldots,\psi_d)$ satisfy the following \emph{cell problem}  
\begin{align}\label{eq:cell_problem}
-\bnabla\bcdot( \bsigma(\vecy)
(\vece_i+\bnabla\psi_i(\vecy)))=0\,,
\quad
\langle\bnabla\psi_i\rangle=0\,,
\quad 
i=1,\ldots,d,
\end{align}
which involves only the fast variable $\vecy$, where $\vece_i$ is the $i$th direction standard basis vector 
\cite{Bensoussan:Book:1978,Papanicolaou:RF-835}.

The physical idea behind equations \eqref{eq:effective_conductivity_matrix} and \eqref{eq:cell_problem} is as 
follows. Define (in the fast variable $\vecy$) the local ``electric field'' $\vecE=\vece_i+\vecE_f$ with  mean zero
fluctuating field $\vecE_f=\bnabla \psi_i$ and $\langle\vecE\rangle=\vece_i$, and define the associated local
``electrical current density'' $\vecJ=\bsigma\vecE$. Then, equation \eqref{eq:cell_problem} can be written as the
electrostatic form of Maxwell's equations $\bnabla\bcdot\vecJ=0$ with $\bnabla\btimes\vecE=0$, with no source term $\rho$, where the vector
valued functions and differentiation depend only on the variable $\vecy$. Applying $\vece_i$ to both sides of
equation \eqref{eq:effective_conductivity_matrix} shows the average of $\vecJ$ can be written as
\begin{align}\label{eq:avg_J}
\langle\vecJ\rangle=\bsigma^*\langle\vecE\rangle.
\end{align}
Therefore, the local conductivity in $\vecJ=\bsigma\vecE$ relates the local ``electric field'' to the local ``current 
density'' while the effective conductivity $\bsigma^*$ relates the average ``electric field'' to the average ``current 
density.''

To summarize, equations \eqref{eq:elliptic_delta}--\eqref{eq:cell_problem} define a \emph{homogeneous} medium
with constant conductivity $\bsigma^*$ that behaves \emph{macroscopically} the same as the inhomogeneous 
medium with spatially dependent conductivity $\bsigma(\vecx)$. In particular, if $\bar{\phi}$ satisfies equation
\eqref{eq:elliptic_effective} with $\bsigma^*$ given by \eqref{eq:effective_conductivity_matrix}, then 
$\lim_{\delta\downarrow0}\phi^\delta=\bar{\phi}$ in a sense that depends on the nature of the composite medium in
$\Vc$ determined by $\bsigma(\vecx)$, which can be periodic, almost periodic, or stochastically stationary
\cite{Papanicolaou:RF-835}. For the stationary and almost periodic settings, such limit theorems and existence proofs
for equations \eqref{eq:elliptic_effective}--\eqref{eq:cell_problem} are given in terms of infinite volume limits
$\Vc\to\mathbb{R}^d$ or directly using $\Vc=\mathbb{R}^d$ \cite{Bensoussan:Book:1978,Papanicolaou:RF-835}. To set ideas, we will
assume the stationary stochastic setting with $\Vc=\mathbb{R}^d$.

\subsection{Weak solutions to Maxwell's equations}
\label{sec:Hilbert_space}
We start this section with a formal discussion to motivate the abstract Hilbert space approach utilized later in the 
section. Fix a geometric realization of a random polycrystalline medium and consider the problem of solving Maxwell's
equations for electrostatics for the geometric realization
\begin{align}   \label{eq:Maxwells_Equations_E0}  
&\bnabla \times\vecE =0, 
\quad
\bnabla \bcdot\vecJ=0,
\quad
\vecJ=\bsigma\vecE. 
\end{align}
Here, $\vecE $ and $\vecJ$ are the physical electric field and current density within
the polycrystalline medium, respectively. 
If the components of the vector field $\vecE $ were continuously differentiable 
on all of $\mathbb{R}^d$ then  
$\bnabla \times\vecE =0$ would imply there exists a scalar 
potential $\varphi$ such that
$\vecE = \vecxi + \bnabla \varphi$ with arbitrary constant vector $\vecxi$ \cite{Jackson-1999}. 
In this case, equation \eqref{eq:Maxwells_Equations_E0} is equivalent to
\begin{align}\label{eq:Alternate_Maxwells_E}  
\bnabla \bcdot\bsigma(\vecxi + \bnabla \varphi)=0\,.
\end{align}
%
However, the functions $\sigma_{jk}$, $j,k=1,\ldots,d$, can be discontinuous, hence non-differentiable, 
across the crystallite boundaries of the polycrystalline medium, so equation
\eqref{eq:Alternate_Maxwells_E} may not be defined in a classical sense. A
weak form~\cite{Folland:95:PDEs} of equation \eqref{eq:Alternate_Maxwells_E} is given by 
\cite{Bensoussan:Book:1978}
%
\begin{align}\label{eq:Weak_Alternate_Maxwell_E}
\langle\bsigma(\vecxi + \bnabla \varphi)\bcdot\bnabla \psi\rangle=0,
\end{align}
where $\psi$ is an infinitely differentiable function with compact support, and we
emphasize that $\bnabla \psi$ is a curl-free vector field. This would directly bypass the 
non-differentiability of the $\sigma_{jk}$. Although, indirectly, the discontinuous nature of these 
functions might still render the potential $\varphi$  non-differentiable. Consequently, the weak form 
\eqref{eq:Weak_Alternate_Maxwell_E}, again, may not be defined in a classical
sense.  We address such issues by providing abstract Hilbert space
formulation~\cite{Papanicolaou:RF-835,Golden:CMP-473} of 
equation~\eqref{eq:Weak_Alternate_Maxwell_E}.

The abstract framework presented in \cite{Papanicolaou:RF-835} unifies the homogenization problem discussed in
Section \ref{sec:Homogenizaiton} for settings where $\bsigma$ is periodic, almost periodic, and random with just
stationarity, ergodicity, and uniform ellipticity assumptions. Here, we only review details \cite{Papanicolaou:RF-835,Golden:CMP-473} needed to provide Stieltjes integral representations for the effective
transport coefficients for uniaxial polycrystalline media for the stochastically stationary setting, as the other settings
directly follow from this formulation \cite{Papanicolaou:RF-835}. Consider the random setting and let $(\Omega,P)$ 
be a probability space with underlying sigma-algebra of $P$-measurable sets and $\omega\in\Omega$ labeling the 
geometric realization of the random medium with electrical conductivity $\bsigma=\bsigma(\vecx,\omega)$ with
$\vecx\in\mathbb{R}^d$. Consider the Hilbert space $\Hs_0=L^2(\Omega,P)$ of scalar valued functions. (Later we
will focus on the Hilbert space $\Hs=\bigotimes_{i=1}^d \Hs_0$ of $d$-dimensional vector valued functions).
Furthermore, let $(\mathbb{R}^d,\nu)$ be a measure space with underlying sigma-algebra of Lebesgue measurable 
cylinder sets with rational diameters and $\nu$ denotes the Lebesgue measure. Consider the Hilbert space 
$L^2(\mathbb{R}^d,\nu)$~\cite{Papanicolaou:RF-835}.

We restrict the support of $P$ to a uniformly bounded, strictly stationary 
stochastic process $\bsigma(\vecx,\omega)$ satisfying
\begin{align}\label{eq:bounded_sigma}
\sigma_0|\bxi|^2\le
\bsig(\vecx,\omega)\bxi\bcdot\bxi
\le\sigma_0^{-1} |\bxi|^2
\end{align}
for some constant $\sigma_0>0$ and for all $\vecx,\bxi\in\mathbb{R}^d$ 
and $\omega\in\Omega$. Strict stationarity means for any $\vecxi\in\mathbb{R}^d$, 
any integer $n=1,2,\ldots,$ and any points $\vecx_1,\ldots,\vecx_n$ in $\mathbb{R}^d$, 
that the joint distribution of 
$\sigma_{jk}(\vecx _1,\omega),\ldots,\sigma_{jk}(\vecx _n,\omega)$ 
and that of $\sigma_{jk}(\vecx_1+\vecxi,\omega),\ldots,\sigma_{jk}(\vecx_n+\vecxi,\omega)$ 
are the same for all $j,k=1,\ldots,d$~\cite{Golden:CMP-473,Papanicolaou:RF-835}. We also 
assume that the processes $\sigma_{jk}(\vecx,\omega)$ are stochastically 
continuous~\cite{Papanicolaou:RF-835}. We assume that the probability measure $P$ is 
invariant with respect to the translation group 
$\tau_x:\Omega\to\Omega$~\cite{Papanicolaou:RF-835}. More specifically, we assume 
that the group is one-to-one and preserves the measure $P$, i.e. 
$P(\tau_x A)=P(A)$ for all $P$-measurable sets $A$, with $\tau_x\tau_y=\tau_{x+y}$
\cite{Golden:CMP-473,Papanicolaou:RF-835}.
The group of transformations $\tau_x$ on $\Omega$ induces a group of
operators $T_x$ on the Hilbert space $\Hs_0$ defined by
$(T_x f^\prime)(\omega)=f^\prime(\tau_{-x}\,\omega)$ for all $f^\prime\in \Hs_0$. 
We may associate the stationary process $f$ with 
$f(\vecx,\omega)=(T_x f^\prime)(\omega)=f^\prime(\tau_{-x}\,\omega)$, 
for all $f^\prime\in \Hs_0$. The prime notation 
is used to associate a function with its translates that form the stationary process.
We assume as in \cite{Papanicolaou:RF-835} that there are measurable functions $\sigma^{\,\prime}_{jk}(\omega)=\sigma_{jk}(0,\omega)$ on $\Omega$ such that
%
\begin{align}\label{eq:Stationary_sig}
\sigma_{jk}(\vecx ,\omega)=
\sigma^{\,\prime}_{jk}(\tau_{-x}\,\omega), 
\quad
\forall \ \vecx \in\mathbb{R}^d, \ \omega\in\Omega. 
\end{align}

Since the group generated by $\tau_x$ (through composition) is measure preserving, 
the operators $T_x$ form a unitary group. When a single coordinate variable $x_i$ 
of $\vecx=(x_1,\ldots,x_d)$ varies at a time in the group $T_x$, we obtain 
$d$ one-parameter strongly continuous unitary groups in $\Hs_0$ that 
commute with each other. Let $L_1,L_2,\ldots,L_d$ denote the infinitesimal 
generators of these groups, which are closed and densely defined with domains 
$\mathscr{D}_i\subset \Hs_0$~\cite{Papanicolaou:RF-835}.  They commute with one another 
$L_iL_j=L_jL_i$ and they are skew adjoint with respect to the sesquilinear 
$\Hs_0$-inner-product $\langle\cdot,\cdot\rangle_0$, satisfying 
$\langle L_i f^\prime, g^\prime\rangle_0=-\langle f^\prime,L_i g^\prime\rangle_0$
for all $f^\prime, g^\prime\in\mathscr{D}_i$~\cite{Golden:CMP-473,Papanicolaou:RF-835}. 
For $f^\prime\in\mathscr{D}_i$ we have  
\begin{align}\label{eq:Li}
(L_i f^\prime)(\omega)=
\left.\frac{\partial}{\partial x_i}(T_x f^\prime)(\omega) \right|_{x=0} 
\end{align}
where differentiation in~\eqref{eq:Li} is defined in the sense of 
convergence in $\Hs_0$ for elements of $\mathscr{D}_i$
\cite{Golden:CMP-473,Papanicolaou:RF-835}.

The closed linear space $\mathscr{D}=\cap_{i=1}^d\mathscr{D}_i$ of $\Hs_0$ becomes 
a Hilbert space with inner product $\langle\cdot,\cdot\rangle_{0,1}$ 
given by 
$\langle f^\prime,g^\prime\rangle_{0,1}=\langle f^\prime,g^\prime\rangle_0+\sum_{i=1}^d\langle
L_if^\prime,L_ig^\prime\rangle_0$~\cite{Golden:CMP-473,Papanicolaou:RF-835}.
The Hilbert space can be identified with the space of all stationary random 
processes on $\mathbb{R}^d$, $\Hs_S(\mathbb{R}^d;\Hs_0^1)$~\cite{Papanicolaou:RF-835}. Similarly,
the Hilbert space $\Hs_0^1$ can be identified with the set of mean square
differentiable, stationary random processes 
$\Hs^1_S(\mathbb{R}^d;\Hs_0^1)$~\cite{Papanicolaou:RF-835}.
If $f^\prime\in\Hs_0^1$ then its spatial partial derivatives form stationary processes and 
\begin{align}\label{eq:weak_derivative}
\frac{\partial f(\vecx,\omega)}{\partial x_i}=
L_i f(\vecx,\omega),
\quad
\nu\times P 
\ \
a.e.
\end{align}
with equality holding $\nu\times P$ almost everywhere~\cite{Papanicolaou:RF-835}.

In this abstract setting the curl and divergence formulas,
$\bnabla\times\vecE=0$ and $\bnabla\bcdot\vecJ=0$ are interpreted in the 
following weak sense~\cite{Golden:CMP-473}, respectively
\begin{align}\label{eq:Li_curl_div}
L_i E^\prime_j - L_j E^\prime_i = 0 
\text{ for all } 
i,j=1,\ldots,d, 
\text{  and  } \sum_{i=1}^d L_i J^\prime_i = 0\,,
\end{align}
involving the generalized gradient operator $\bnabla=(L_1,\ldots,L_d)$.
Consider the Hilbert space $\Hs=\bigotimes_{i=1}^d \Hs_0$ with
\emph{sesquilinear} inner product $\langle\cdot,\cdot\rangle$ defined by
$\langle\vecxi^\prime,\bzeta^\prime \rangle=
\langle\vecxi^\prime\bcdot\bzeta^\prime \rangle=
\sum_i\langle\xi_i^\prime,\zeta_i^\prime\rangle_0$, 
with
$\langle\bzeta^\prime ,\vecxi^\prime\rangle=
\overline{\langle\vecxi^\prime,\bzeta^\prime \rangle}$, 
where $\xi_i^\prime$ is the $i$th component of the vector $\vecxi^\prime$, 
$\langle\cdot\rangle$ denotes ensemble average over $\Omega$ with respect to the
measure $P$,
and 
$\overline{z}$
denotes complex conjugation of the complex number $z$. 
Now define the
Hilbert spaces  of ``curl-free'' and ``divergence-free'' random fields~\cite{Golden:CMP-473}   
\begin{align}\label{eq:curlfreeHilbert}
&\Hs_\times=
\left\{\vecxi^\prime(\omega)\in \Hs \ | \ L_i\xi_j^{\,\prime}-L_j\xi_i^{\,\prime}=0 
\ \text{ weakly and }
\langle\vecxi^\prime\,\rangle=0
\right\}, \\
&\Hs_{\bullet}=
\left\{\bzeta ^\prime(\omega)\in \Hs \ \Big| \ \sum_{i=1}^dL_i\zeta_i^{\,\prime}=0 
\ \text{ weakly and }
\langle\bzeta^\prime\,\rangle=0\right\}\,,
\notag 
\end{align}  
where we use the simplified vector notation 
$\langle\vecxi^\prime\,\rangle=
(\langle\xi_1^\prime\rangle,\ldots,\langle\xi_d^\prime\rangle)^T$ and the completion of the spaces 
$\Hs_\times$ and $\Hs_\bullet$ are with respect to the norm induced by the $\Hs$-inner-product
$\|\cdot\|^2=\langle\cdot,\cdot\rangle$.

We are now ready to state the main results of this section. Consider the problem of solving Maxwell's equations for 
electrostatics in this abstract setting: find stationary random vector fields 
$\vecE (\vecx ,\omega)$ and $\vecJ(\vecx ,\omega)$
such that
\begin{align}   \label{eq:Maxwells_Equations_E}  
&\bnabla \times\vecE =0, \quad
\bnabla \bcdot\vecJ=0,\quad
\vecJ=\bsigma\vecE ,\quad
\langle\vecE \,\rangle=\vecE _0\,. 
\end{align}
Here,
$\vecE $ and $\vecJ$ are the physical electric field and current density within
the polycrystalline medium, respectively. 
The average $\vecE _0$ is assumed to be \emph{given}, and by stationarity it is independent of $\vecx \in\mathbb{R}^d$.  More specifically, we seek a stationary solution $\vecE $ to equation
\eqref{eq:Maxwells_Equations_E} of the form    
%
\begin{align}\label{eq:Stationary_E}
\vecE (\vecx ,\omega)=\vecE ^{\,\prime}(\tau_{-x}\,\omega), \quad
\forall \ \vecx \in\mathbb{R}^d, \ \omega\in\Omega,
\end{align}
where $\vecE ^{\,\prime}(\omega)=\vecE (0,\omega)$ and
$\vecE _0=\langle\vecE ^{\,\prime}(\omega)\rangle$. We are ready to state one of the main 
results of this section in the following theorem. 

\begin{theorem}
\label{thm:existenceE}
If $\bsigma$ satisfies \eqref{eq:bounded_sigma} then there exists 
unique $\vecE_f\in\Hs_{\times}$ satisfying $\langle\vecE_f\rangle=0$ such that 
$\vecE (\vecx ,\omega)=\vecE _0+\vecE _f(\vecx ,\omega)$ and $\vecE$ satisfies
the systems of partial differential equations in \eqref{eq:Maxwells_Equations_E}
in the weak sense discussed above. 
\end{theorem}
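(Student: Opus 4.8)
The plan is to recast the weak Maxwell system \eqref{eq:Maxwells_Equations_E} as a single variational problem on the curl-free Hilbert space $\Hs_\times$ and solve it via the Lax--Milgram theorem. Writing $\vecE=\vecE_0+\vecE_f$ with $\vecE_f\in\Hs_\times$ automatically disposes of three of the four conditions: the constant field $\vecE_0$ and every element of $\Hs_\times$ are curl-free, so $\bnabla\times\vecE=0$ weakly; and since every member of $\Hs_\times$ has mean zero by definition, $\langle\vecE\rangle=\vecE_0+\langle\vecE_f\rangle=\vecE_0$. The only remaining requirement is that $\vecJ=\bsigma\vecE$ be divergence-free in the weak sense. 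First I would show that, in this stationary setting, $\bnabla\bcdot\vecJ=0$ weakly is equivalent to the orthogonality $\langle\vecJ,\vecg\rangle=\langle\vecJ\bcdot\vecg\rangle=0$ for every $\vecg\in\Hs_\times$: testing the divergence against a gradient $\bnabla\psi$ and using the skew-adjointness of the $L_i$ gives $\langle\vecJ\bcdot\bnabla\psi\rangle=0$, and the mean-zero gradients are dense in $\Hs_\times$. The whole problem thus reduces to finding $\vecE_f\in\Hs_\times$ with
\[
\langle\bsigma(\vecE_0+\vecE_f)\bcdot\vecg\rangle=0\qquad\text{for all }\vecg\in\Hs_\times .
\]

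Next I would introduce the bilinear form $B(\vecu,\vecg)=\langle\bsigma\vecu\bcdot\vecg\rangle$ and the linear functional $F(\vecg)=-\langle\bsigma\vecE_0\bcdot\vecg\rangle$ on $\Hs_\times$, so that the displayed equation becomes $B(\vecE_f,\vecg)=F(\vecg)$ for all $\vecg\in\Hs_\times$. The hypotheses \eqref{eq:bounded_sigma} are precisely what is needed to check the Lax--Milgram conditions on the closed subspace $\Hs_\times$. Boundedness of $B$ follows from the upper bound in \eqref{eq:bounded_sigma}, which gives the pointwise operator bound $\|\bsigma\|\le\sigma_0^{-1}$ and hence $|B(\vecu,\vecg)|\le\sigma_0^{-1}\|\vecu\|\,\|\vecg\|$ by Cauchy--Schwarz; coercivity follows from the lower bound, giving $B(\vecu,\vecu)=\langle\bsigma\vecu\bcdot\vecu\rangle\ge\sigma_0\|\vecu\|^2$; and $F$ is bounded since $\vecE_0$ is constant, so that $|F(\vecg)|\le\sigma_0^{-1}|\vecE_0|\,\|\vecg\|$. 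As $\Hs_\times$ is a closed subspace of $\Hs$, hence itself a Hilbert space, the Lax--Milgram theorem yields a unique $\vecE_f\in\Hs_\times$ solving $B(\vecE_f,\vecg)=F(\vecg)$ for all $\vecg\in\Hs_\times$.

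Finally I would translate back: this $\vecE_f$ has mean zero and is curl-free by membership in $\Hs_\times$, the field $\vecE=\vecE_0+\vecE_f$ satisfies $\langle\vecE\rangle=\vecE_0$ and $\bnabla\times\vecE=0$, and the variational identity says exactly that $\vecJ=\bsigma\vecE$ is weakly divergence-free; uniqueness of $\vecE_f$ is inherited directly from Lax--Milgram. The step I expect to require the most care is the first one: pinning down the precise weak meaning of $\bnabla\bcdot\vecJ=0$ in the abstract stationary-process Hilbert space and confirming it is equivalent to orthogonality to all of $\Hs_\times$ rather than merely to smooth gradients, together with the verification that $\Hs_\times$ is genuinely closed so that the variational machinery applies. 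If $\bsigma$ is permitted to be complex (as for the permittivity problem), coercivity must instead be extracted from the real part of the form, but the real uniformly elliptic case governed by \eqref{eq:bounded_sigma} is handled cleanly as above.
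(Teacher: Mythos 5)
Your proposal is correct and follows essentially the same route as the paper: recast the weak Maxwell system as the variational problem $\langle\bsigma(\vecE_0+\vecE_f)\bcdot\vecg\rangle=0$ for all $\vecg\in\Hs_\times$ and apply Lax--Milgram on the closed subspace $\Hs_\times$. The one point of divergence is that you derive coercivity directly from the lower bound in \eqref{eq:bounded_sigma}, which is valid for the real uniformly elliptic case stated in the hypothesis, whereas the paper temporarily \emph{assumes} coercivity in this proof and only establishes it later (Theorem \ref{thm:Analytic_h}) for complex contrast $h\notin(-\infty,0]$ --- so your argument is, if anything, more self-contained for the theorem as stated.
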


\begin{proof}
In view of equations~\eqref{eq:Weak_Alternate_Maxwell_E},~\eqref{eq:Stationary_sig},
and \eqref{eq:Stationary_E}, we consider the following variational problem
\cite{Golden:CMP-473}. Given $\bsigma^{\,\prime}$ and $\vecE _0$ find 
$\vecE _f^{\,\prime}\in\Hs_\times$ 
such that         
\begin{align}
    \label{eq:Weak_Curl_Free_Variational_Form_E}
    \langle\bsigma^{\,\prime}(\omega)
    (\vecE _0+\vecE _f^{\prime}(\omega))
    \bcdot\vecxi^{\prime}(\omega)\rangle=0
    \quad  \forall \ \vecxi^{\prime}\in\Hs_\times\,.
\end{align}
We rewrite equation~\eqref{eq:Weak_Curl_Free_Variational_Form_E}  as \cite{Golden:CMP-473}
%
\begin{align}
    \label{eq:Bilinear_functional_E}
    \Phi(\vecE _f^{\,\prime},\vecxi^{\,\prime})=f_\sigma(\vecxi^{\,\prime}),
    \qquad  
    \Phi(\vecE _f^{\,\prime},\vecxi^{\,\prime})
    =\langle\bsigma^{\,\prime}(\omega)\vecE _f^{\,\prime}(\omega)\bcdot\vecxi^{\,\prime}(\omega)\rangle,
    \quad
    f_\sigma(\vecxi^{\,\prime})=
    -\langle\bsigma^{\,\prime}(\omega)\vecE _0\bcdot\vecxi^{\,\prime}(\omega)\rangle
\end{align}
involving the bilinear functional $\Phi$ and the linear functional $f_\sigma$.

By the Cauchy--Schwartz inequality and equation \eqref{eq:bounded_sigma}, 
$\Phi$ and $f_\sigma$ are bounded \cite{Folland:99:RealAnalysis,Folland:95:PDEs,Golden:CMP-473}.  
For now, we will assume that $\Phi$ is 
coercive, i.e., that there exists a positive constant 
$\kappa>0$ such that $\Phi(\vecxi,\vecxi\,)\geq\kappa\|\vecxi\,\|^2$ for all 
$\vecxi\in\Hs_\times$, where $\|\cdot\|$ denotes the norm induced by the
$\Hs$-inner-product \cite{Folland:99:RealAnalysis}. 
Later, we will demonstrate that the
coercivity condition determines analytic properties of the
effective conductivity matrix $\bsigma^*$ in \eqref{eq:avg_J}. 
By the Lax-Milgram theorem \cite{Folland:95:PDEs}, there exists unique
$\vecE _f^{\,\prime}\in\Hs_\times$ 
that satisfies equation~\eqref{eq:Bilinear_functional_E}  
for all $\vecxi^{\,\prime}\in\Hs_\times$. 
\end{proof}

We now provide an alternate approach to solving Maxwell's Equations 
in terms of the resistivity matrix $\brho=\bsigma^{-1}$ 
(the matrix inverse of the conductivity matrix). 
Consider the alternate problem 
\begin{align}   \label{eq:Maxwells_Equations_J}  
\bnabla \times\vecE =0, \quad
\bnabla \bcdot\vecJ=0,\quad
\vecE=\brho\vecJ ,\quad
\langle\vecJ \,\rangle=\vecJ _0\,. 
\end{align}
More specifically, we seek a stationary solution $\vecJ$ to equation
\eqref{eq:Maxwells_Equations_J} of the form    
\begin{align}\label{eq:Stationary_J}
\vecJ (\vecx ,\omega)=\vecJ ^{\,\prime}(\tau_{-x}\,\omega), \quad
\forall \ \vecx \in\mathbb{R}^d, \ \omega\in\Omega,
\end{align}
where $\vecJ ^{\,\prime}(\omega)=\vecJ (0,\omega)$ and
$\vecJ _0=\langle\vecJ ^{\,\prime}(\omega)\rangle$. 
We assume as in \cite{Papanicolaou:RF-835,Golden:CMP-473} that there are measurable functions $\rho^{\,\prime}_{jk}(\omega)=\rho_{jk}(0,\omega)$ on $\Omega$ such that
%
\begin{align}\label{eq:Stationary_rho}
\rho_{jk}(\vecx ,\omega)=
\rho^{\,\prime}_{jk}(\tau_{-x}\,\omega), 
\quad
\forall \ \vecx \in\mathbb{R}^d, \ \omega\in\Omega. 
\end{align}
We restrict the support of $P$ to a uniformly bounded, strictly stationary 
stochastic process $\brho(\vecx,\omega)$ satisfying
\begin{align}\label{eq:bounded_rho}
\rho_0|\bxi|^2\le
\brho(\vecx,\omega)\bxi\bcdot\bxi
\le\rho_0^{-1} |\bxi|^2
\end{align}
for some constant $\rho_0>0$ and for all $\vecx,\bxi\in\mathbb{R}^d$ 
and $\omega\in\Omega$.

\begin{corollary}
If $\brho$ satisfies \eqref{eq:bounded_rho}  then there exists 
unique $\vecJ_f\in\Hs_{\bullet}$ satisfying $\langle\vecJ_f\rangle=0$ such that 
$\vecJ (\vecx ,\omega)=\vecJ _0+\vecJ _f(\vecx ,\omega)$ and $\vecJ$ satisfies
the systems of partial differential equations in \eqref{eq:Maxwells_Equations_J}
in the weak sense discussed above. 
\label{cor:existenceJ}
\end{corollary}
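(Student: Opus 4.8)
**The plan is to mirror the proof of Theorem \ref{thm:existenceE} exactly, exploiting the duality between the curl-free and divergence-free settings.** The corollary is the resistivity-side analogue of the theorem: equation \eqref{eq:Maxwells_Equations_J} replaces $\vecJ=\bsigma\vecE$ with $\vecE=\brho\vecJ$, swaps the roles of $\vecE$ and $\vecJ$, fixes $\langle\vecJ\rangle=\vecJ_0$ instead of $\langle\vecE\rangle=\vecE_0$, and seeks the fluctuating part $\vecJ_f$ in the divergence-free space $\Hs_\bullet$ rather than the curl-free space $\Hs_\times$. Since $\brho$ satisfies the same uniform ellipticity bound \eqref{eq:bounded_rho} that $\bsigma$ satisfies in \eqref{eq:bounded_sigma}, every ingredient of the earlier proof has a direct counterpart.

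First I would set up the dual variational problem: in view of \eqref{eq:Stationary_rho} and \eqref{eq:Stationary_J}, given $\brho^{\,\prime}$ and $\vecJ_0$, find $\vecJ_f^{\,\prime}\in\Hs_\bullet$ such that
\begin{align*}
\langle\brho^{\,\prime}(\omega)(\vecJ_0+\vecJ_f^{\,\prime}(\omega))\bcdot\bzeta^{\,\prime}(\omega)\rangle=0
\quad\forall\ \bzeta^{\,\prime}\in\Hs_\bullet.
\end{align*}
This is the weak form of $\bnabla\times\vecE=\bnabla\times(\brho\vecJ)=0$ tested against divergence-free fields $\bzeta^{\,\prime}\in\Hs_\bullet$, exactly dual to how \eqref{eq:Weak_Curl_Free_Variational_Form_E} encodes $\bnabla\bcdot\vecJ=\bnabla\bcdot(\bsigma\vecE)=0$ tested against curl-free fields. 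I would then rewrite this as $\Psi(\vecJ_f^{\,\prime},\bzeta^{\,\prime})=g_\rho(\bzeta^{\,\prime})$ with bilinear form $\Psi(\vecJ_f^{\,\prime},\bzeta^{\,\prime})=\langle\brho^{\,\prime}\vecJ_f^{\,\prime}\bcdot\bzeta^{\,\prime}\rangle$ and linear functional $g_\rho(\bzeta^{\,\prime})=-\langle\brho^{\,\prime}\vecJ_0\bcdot\bzeta^{\,\prime}\rangle$.

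Next I would verify the Lax--Milgram hypotheses on the Hilbert space $\Hs_\bullet$. Boundedness of $\Psi$ and $g_\rho$ follows from Cauchy--Schwarz together with the upper bound in \eqref{eq:bounded_rho}, precisely as boundedness of $\Phi$ and $f_\sigma$ followed from \eqref{eq:bounded_sigma}. Coercivity of $\Psi$ on $\Hs_\bullet$ follows from the lower bound $\rho_0|\bxi|^2\le\brho\bxi\bcdot\bxi$ in \eqref{eq:bounded_rho}, giving $\Psi(\bzeta^{\,\prime},\bzeta^{\,\prime})\ge\rho_0\|\bzeta^{\,\prime}\|^2$. Since $\Hs_\bullet$ is a closed subspace of $\Hs$ and hence a Hilbert space in its own right, the Lax--Milgram theorem yields a unique $\vecJ_f^{\,\prime}\in\Hs_\bullet$ solving the variational problem. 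Setting $\vecJ=\vecJ_0+\vecJ_f$ with $\langle\vecJ_f\rangle=0$ (automatic from the defining condition $\langle\bzeta^{\,\prime}\rangle=0$ built into $\Hs_\bullet$) recovers a weak solution of \eqref{eq:Maxwells_Equations_J}.

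**The only genuine subtlety, and hence the part I would flag, is the same coercivity gap that the proof of Theorem \ref{thm:existenceE} explicitly defers:** there the authors \emph{assume} $\Phi$ is coercive and promise to establish it later from the analytic structure of $\bsigma^*$. For real, uniformly elliptic $\brho$, coercivity is immediate from \eqref{eq:bounded_rho}, but for the complex-valued permittivity/resistivity case that motivates the Stieltjes framework the lower bound must be reinterpreted (e.g., via a sign condition on the imaginary part or a rotation in the complex plane), so I would either invoke the same deferred coercivity argument used for $\Phi$ or note that the real uniformly elliptic bound \eqref{eq:bounded_rho} suffices to close the argument directly in that case. Everything else is a verbatim transcription of the theorem's proof with $(\bsigma,\vecE,\Hs_\times)$ replaced by $(\brho,\vecJ,\Hs_\bullet)$.
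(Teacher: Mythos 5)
Your proposal is correct and matches the paper exactly: the paper's entire proof of this corollary is the single sentence that it is ``analogous to that of Theorem~\ref{thm:existenceE},'' and your write-up is precisely that analogous argument, with the dual variational problem on $\Hs_\bullet$ matching the form $\langle\brho(\vecJ_0+\vecJ_f)\bcdot\bzeta\rangle=0$ that the paper itself records just after the corollary. Your flag about the deferred coercivity step is also consistent with how the paper handles it (assumed in Theorem~\ref{thm:existenceE}, established later via Theorem~\ref{thm:Analytic_h} and Corollary~\ref{thm:Analytic_z}).
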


The proof of Corollary  \ref{cor:existenceJ} is analogous to that of Theorem  
\ref{thm:existenceE}. By construction,
\begin{align}\label{eq:Unique_Solutions}
\vecE ^{\,\prime}(\omega)&=\vecE _0+\vecE _f^{\,\prime}(\omega), \qquad
\vecJ^{\,\prime}(\omega)=\bsigma^{\,\prime}(\omega)\vecE ^{\,\prime}(\omega),
\\
\vecJ^{\,\prime}(\omega)&=\vecJ_0+\vecJ_f^{\,\prime}(\omega), \qquad
\vecE ^{\,\prime}(\omega)=\brho^{\,\prime}(\omega)\vecJ^{\,\prime}(\omega),
\end{align}
are the unique (weak) solutions of equations~\eqref{eq:Maxwells_Equations_E}
and~\eqref{eq:Maxwells_Equations_J}, respectively, via equations
\eqref{eq:Stationary_sig}, \eqref{eq:Stationary_E},  
\eqref{eq:Stationary_rho}, \eqref{eq:Stationary_J}. 
To simplify notation, we will henceforth drop
the distinction between the primed variables $\vecE _f^{\,\prime}(\omega)$ and 
$\vecE _f(\vecx ,\omega)$, for example, as the context of each notation
is now clear. Note that since
$\vecE _f\in\Hs_\times$ and $\vecJ_f\in\Hs_\bullet$, by equation
\eqref{eq:Weak_Curl_Free_Variational_Form_E}, 
$\langle\bsigma(\vecE_0+\vecE_f)\bcdot\vecxi\rangle=0$ for all $\vecxi\in\Hs_\times$, 
and its analogue, 
$\langle\brho(\vecJ_0+\vecJ_f)\bcdot\bzeta\rangle=0$
for all $ \bzeta\in\Hs_{\bullet}$,
we have the energy \cite{Jackson-1999} constraints $\langle\vecJ\bcdot\vecE _f\rangle=0$ and
$\langle\vecE \bcdot\vecJ_f\rangle=0$, respectively, which lead to the following
reduced energy representations $\langle\vecJ\bcdot\vecE \,\rangle=\langle\vecJ\,\rangle\bcdot\vecE _0$
and $\langle\vecE \bcdot\vecJ\,\rangle=\langle\vecE \,\rangle\bcdot\vecJ_0$. Consequently, by equation
\eqref{eq:avg_J}, $\langle\vecJ\rangle=\bsigma^*\langle\vecE\rangle$, and its analogue 
$\langle\vecE\rangle=\brho^*\langle\vecJ\rangle$ for the alternative problem, 
we have the following energy representations
involving the effective parameters  
%
\begin{align}\label{eq:Energy_Reps}
\langle\vecJ\bcdot\vecE \rangle=\bsigma^*\vecE _0\bcdot\vecE _0=\brho^*\vecJ_0\bcdot\vecJ_0.
\end{align}

Before we move on to the next section to discuss the analytic properties of the effective parameters $\bsigma^*$ and $\brho^*$, we briefly discuss the distinction between the \emph{definition} of the effective conductivity
matrix given in terms of the energy equation \eqref{eq:Energy_Reps} and the effective conductivity matrix arising from the homogenization theorem. The homogenization theorem leads to the formulas for the effective conductivity matrix in equations \eqref{eq:effective_conductivity_matrix} and \eqref{eq:avg_J} which involves the solution to 
the cell problem in \eqref{eq:cell_problem}. Specifically, the effective conductivity arising in the homogenization
theorem involves the \emph{non-physical} ``electric field" $\vecE=\vece_i+\bnabla\psi_i$ and ``current density"
$\vecJ=\bsigma\vecE$  and yields the formula for the effective conductivity 
$\langle\vecJ\rangle=\bsigma^*\langle\vecE\rangle$. This connects the effective conductivity to the homogenized 
equation in \eqref{eq:elliptic_effective}. Alternatively, we may start with the electrostatic version (or quasistatic limit of) Maxwell's equations involving the \emph{physical} electric field $\vecE$ and current density $\vecJ$ (or displacement field $\vecD$), without a source term, and bypass the discussion regarding slow and fast 
variables, and use the results given in the current section to simply \emph{define} the effective conductivity in 
terms of system energy as in \eqref{eq:Energy_Reps}. This alternative approach, while 
lacking the connection to the homogenized equation, instead provides an analytic representation for the physical 
energy, which is beneficial for statistical mechanics models of composite materials --- details will be 
published elsewhere. For the remainder of this manuscript we will focus on the alternative approach involving
physical fields and take $\langle\vecJ\rangle=\bsigma^*\langle\vecE\rangle$ and $\langle\vecE\rangle=\brho^*\langle\vecJ\rangle$ as a definitions.

\section{Analytic properties of effective parameters}
\label{sec:Analytic_properties}
The formulation of the effective parameter problem in Section \ref{sec:Homogenizaiton_and_Hilbert} holds 
for quite general periodic, almost periodic, and random composite materials 
\cite{Papanicolaou:RF-835,Golden:CMP-473}. For example, multi-component
composite materials have locally isotropic microgeometry with conductivity function that varies from place to place  
in space, for each $\omega\in\Omega$. Consequently, $\sigma_{jk}(\vecx,\omega)=\sigma(\vecx,\omega)\delta_{jk}$, 
where $\delta_{jk}$ is the Kronecker-delta, and similarly $\rho_{jk}(\vecx,\omega)=\rho(\vecx,\omega)\delta_{jk}$.
For $n$-components, it is clear that the scalar functions
$\sigma(\vecx,\omega)$ and $\rho(\vecx,\omega)$ can be written as 
\begin{align}\label{eq:projection_decomp_sigma_multicomp}
\sigma(\vecx,\omega)=\sum_{i=1}^n\sigma_i\chi_i(\vecx,\omega),  
\quad
\rho(\vecx,\omega)=\sum_{i=1}^n\rho_i\chi_i(\vecx,\omega),  
\qquad
\chi_i\chi_j=\chi_i\delta_{ij}\,,
\quad
\sum_{i=1}^n\chi_i=1\,,
\end{align}
where $\rho_i=1/\sigma_i$ and the characteristic function, $\chi_i$, takes the value $\chi_i(\vecx,\omega)=1$ 
when the material has component $i$ located at $\vecx$ for $\omega\in\Omega$ and 
$\chi_i(\vecx,\omega)=0$ otherwise. 

For two-component materials ($n=2$), Stieltjes integral representations can be obtained for the components $\sigma^*_{jk}$, $j,k=1,\ldots,d$, of the effective conductivity matrix $\bsigma^*$,  
involving spectral measures of a self-adjoint random operator and a single complex variable \cite{Bergman:AP-78,Golden:CMP-473,Milton:APL-300,Murphy:2015:CMS:13:4:825}.  Rigorous bounds for the diagonal components $\sigma^*_{kk}$ can be obtained \cite{Bergman:AP-78,Milton:APL-300} using the theory of Pad\'{e} approximants associated with such Stieltjes integrals \cite{Baker:1996:Book:Pade}, as well as the theory of convex combinations of extremal measures \cite{Bergman:AP-78,Golden:CMP-473}. 
When $n\ge3$, rigorous bounds for $\sigma^*_{kk}$ can be obtained using methods of complex analysis for functions of several ($n-1$) complex variables, using a polydisk integral representation of $\sigma^*_{kk}$ \cite{Golden:JSP-655:40}.

\subsection{General polycrystalline media}\label{sec:general_polycrystalline_media}
We now show that the theory of effective transport for multicomponent media and that for general polycrystalline
media are direct analogues of one another. This follows from the fact that the conductivity matrix $\bsigma$ for polycrystalline media can be written as a direct analogue of equation 
\eqref{eq:projection_decomp_sigma_multicomp}.
\begin{lemma}\label{lem:projection_decomp_sigma_poly}
The conductivity $\bsigma$ and resistivity $\brho$ matrices for general polycrystalline media can be written as
\begin{align}\label{eq:projection_decomp_sigma_poly}
    \bsigma(\vecx,\omega)=\sum_{i=1}^d\sigma_i X_i(\vecx,\omega),  
    \quad
    \brho(\vecx,\omega)=\sum_{i=1}^d\rho_i X_i(\vecx,\omega),  
    \qquad
    X_i X_j=X_i\delta_{ij}\,,
    \quad
    \sum_{i=1}^d X_i=I\,,
\end{align}
where $I$ is the identity matrix on $\mathbb{R}^d$ and the $X_i$, $i=1,\ldots,d$, are mutually orthogonal
projection matrices.
\end{lemma}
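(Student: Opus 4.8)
The plan is to exploit the eigenstructure already present in the representation $\bsigma=R^T\text{diag}(\sigma_1,\ldots,\sigma_d)R$, where $R=R(\vecx,\omega)$ is the orthogonal rotation matrix encoding the local crystal orientation. The key observation is that conjugation by $R$ transports the trivial diagonal decomposition of $\text{diag}(\sigma_1,\ldots,\sigma_d)$ into the asserted projection decomposition of $\bsigma$.

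First I would write the diagonal matrix as $\text{diag}(\sigma_1,\ldots,\sigma_d)=\sum_{i=1}^d\sigma_i\,\vece_i\vece_i^T$, where $\vece_i\vece_i^T$ is the rank-one coordinate projection onto the $i$th axis. Substituting this into $\bsigma=R^T\text{diag}(\sigma_1,\ldots,\sigma_d)R$ and using linearity to move $R^T$ and $R$ inside the finite sum yields $\bsigma=\sum_{i=1}^d\sigma_i X_i$ with $X_i:=R^T\vece_i\vece_i^T R$. This \emph{defines} the candidate projection matrices.

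Next I would verify the stated algebra by reducing everything to the elementary identities $\vece_i^T\vece_j=\delta_{ij}$ and $R^TR=RR^T=I$. The computation $X_iX_j=R^T\vece_i\vece_i^T(RR^T)\vece_j\vece_j^T R=R^T\vece_i(\vece_i^T\vece_j)\vece_j^T R=\delta_{ij}X_i$ simultaneously yields idempotence ($i=j$) and mutual orthogonality ($i\neq j$), i.e.\ $X_iX_j=X_i\delta_{ij}$; symmetry $X_i^T=X_i$ is immediate from $(\vece_i\vece_i^T)^T=\vece_i\vece_i^T$, so each $X_i$ is a genuine orthogonal projection. Completeness follows from $\sum_{i=1}^d X_i=R^T(\sum_{i=1}^d\vece_i\vece_i^T)R=R^TIR=I$. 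For the resistivity claim, the uniform ellipticity bound \eqref{eq:bounded_sigma} forces $\bsigma$ to be positive definite, hence invertible; since $R^{-1}=R^T$ one gets $\brho=\bsigma^{-1}=R^T\text{diag}(\rho_1,\ldots,\rho_d)R$ with $\rho_i=1/\sigma_i$, and reusing the very same $X_i$ gives $\brho=\sum_{i=1}^d\rho_i X_i$.

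There is no deep obstacle here; the result is in essence a change-of-basis identity. The one point that genuinely requires care is that a single family $\{X_i\}$ must represent both $\bsigma$ and $\brho$ at once. This works precisely because inversion of the diagonal factor commutes with conjugation by $R$, so the spectral projections $X_i=R^T\vece_i\vece_i^T R$ are common to both matrices. I would also remark that, since $R(\vecx,\omega)$ inherits measurability and stationarity from the underlying medium, so do the $X_i$, which legitimizes their subsequent role as the random projection operators of the Hilbert-space framework.
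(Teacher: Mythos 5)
Your proposal is correct and follows essentially the same route as the paper: the paper defines $X_i=R^T C_i R$ with $C_i=\text{diag}(\vece_i)$, which is exactly your $R^T\vece_i\vece_i^T R$, and derives the projection algebra from $R^TR=I$ and the orthogonality of the $C_i$. Your explicit verification of $X_iX_j=X_i\delta_{ij}$ and the inversion argument for $\brho$ merely spell out steps the paper leaves as ``follows similarly.''
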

\begin{proof}
The conductivity matrix $\bsigma$ for  polycrystalline media is given by   
\begin{align}\label{eq:polycrystal_parameters}
    \bsigma(\vecx ,\omega)=R^{\,T}(\vecx ,\omega)\text{diag}(\sigma_1,\sigma_2,\ldots,\sigma_d)R(\vecx ,\omega),
\end{align}
where $R(\vecx ,\omega)$ is a rotation matrix satisfying
$R^{\,T}=R^{\,-1}$, where $R^{\,T}$ and $R^{\,-1}$ denote transpose and inverse of the matrix $R$. 
For example, when $d=2$ we have 
\begin{align}\label{eq:polycrystal_parameters_2D}
    \bsigma=R^{\,T}
    \left[
    \begin{array}{cc}
        \sigma_1& 0\\
        0 & \sigma_2\\
    \end{array}
    \right]
    R\,,
    \quad
    R=
    \left[
    \begin{array}{rr}
        \cos\theta& -\sin\theta\\
        \sin\theta & \cos\theta\\
    \end{array}
    \right]\,,
\end{align}
where $\theta=\theta(\vecx ,\omega)$ is the orientation angle of the crystal which has an 
interior containing $\vecx \in\mathbb{R}^d$ for $\omega\in\Omega$, measured from the
direction $\vece _1$. Here, $\vece _j$,
$j=1,\ldots,d$, are standard basis vectors with components
$(\vece _j)_k=\delta_{jk}$. In higher
dimensions, $d\geq3$, the rotation matrix $R$ is a composition of
``basic'' rotation matrices 
$R_i$, e.g., $R=\prod_{j=1}^dR_j$, where the matrix $R_j(\vecx ,\omega)$
rotates vectors in $\mathbb{R}^d$ by an angle
$\theta_j=\theta_j(\vecx ,\omega)$ about the $\vece _j$ axis. For example, in three
dimensions 
\begin{align}\label{eq:polycrystal_parameters_3D}
    R_1=
    \left[
    \begin{array}{ccc}
        1     &0     &    0\\
        0     &\cos\theta_1 & -\sin\theta_1\\
        0     &\sin\theta_1 & \cos\theta_1\\
    \end{array}
    \right],
    \quad
    R_2=
    \left[
    \begin{array}{ccc}
        \cos\theta_2  & 0     &\sin\theta_2 \\
        0      & 1     & 0\\
        -\sin\theta_2 & 0     &\cos\theta_2\\
    \end{array}
    \right],
    \quad
    R_3=
    \left[
    \begin{array}{ccc}
        \cos\theta_3 & -\sin\theta_3 & 0\\
        \sin\theta_3 & \cos\theta_3  & 0\\
        0     & 0      & 1\\
    \end{array}
    \right].
\end{align}

Utilizing the projection matrices $C_i=\text{diag}(\vece_i)$, $i=1,\ldots,d$, 
equation~\eqref{eq:polycrystal_parameters} can be written as
\begin{align}
    \bsigma(\vecx,\omega)=\sum_{i=1}^n\sigma_i X_i(\vecx,\omega),
    \quad
    X_i=R^T(\vecx,\omega)C_i   R(\vecx,\omega).	  
\end{align}
The formula for $\bsigma$ in equation \eqref{eq:projection_decomp_sigma_poly} then follows from the fact that $R$ is a rotation matrix satisfying $R^T=R^{-1}$ and the $C_i$ are mutually orthogonal projection matrices satisfying $C_i C_j=C_i\delta_{ij}$ and $\sum_i C_i=I$. The formula for $\brho$ in equation \eqref{eq:projection_decomp_sigma_poly} follows similarly.

\end{proof}

Defining the material \emph{contrast parameters} 
$h_i=\sigma_i/\sigma_d$ and $z_i=\sigma_d/\sigma_i=1/h_i$,
$i=1,\ldots,d$, equations \eqref{eq:avg_J} and \eqref{eq:projection_decomp_sigma_poly}  can be written as
\begin{align}
&\bsigma^*\vecE _0=\sigma_d\left\langle(h_1X_1+h_2X_2+h_{d-1}X_{d-1}+X_d)\vecE \right\rangle,
\\
&\brho^*\vecJ _0=(1/\sigma_d)\left\langle(z_1X_1+z_2X_2+h_{d-1}X_{d-1}+X_d)\vecJ \right\rangle,
\notag
\end{align}
which are functions of the $h_i$ and $z_i$. Due to the direct analogue of equation 
\eqref{eq:projection_decomp_sigma_multicomp} with equation
\eqref{eq:projection_decomp_sigma_poly} 
the analytical properties of $\bsigma^*$ and $\brho^*$ for dimensions $d\ge3$ can be 
described using methods that are direct analogues of the methods developed in
\cite{Golden:JSP-655:40} for multi-component $(n\ge3)$ media --- details will be published elsewhere.

\subsection{Uniaxial polycrystalline media}

In this section we restrict our attention to uniaxial polycrystalline media, where $\sigma_2=\sigma_3=\cdots=\sigma_d$. 
For 2D this poses no restrictions on the class of polycrystalline media. However,
for 3D this imposes a uniaxial microscopic asymmetry. 
We continue our presentation in terms of electrical conductivity and resistivity 
%
\begin{align}\label{eq:two-phase_eps}
\bsigma(\vecx ,\omega)=\sigma_1X_1(\vecx ,\omega)+\sigma_2X_2(\vecx ,\omega)\,, 
\qquad
\brho(\vecx ,\omega)=X_1(\vecx ,\omega)/\sigma_1+X_2(\vecx ,\omega)/\sigma_2\,,
\end{align}
keeping in mind the broader applicability of the method, where $X_1=R^{\,T}CR$, $X_2=R^{\,T}(I-C)R$, and 
$C=\text{diag}(\vece_1)$. Due to the direct analogue of equation 
\eqref{eq:projection_decomp_sigma_multicomp} for two-component materials $(n=2)$ 
with equations \eqref{eq:projection_decomp_sigma_poly} 
for general 2D polycrystalline media or uniaxial polycrystalline media for $d\ge3$, the analytical properties of 
$\bsigma^*$ and $\brho^*$ can be described using methods that are direct analogues of the methods 
developed in \cite{Bergman:PRL-1285:44,Milton:APL-300,Golden:CMP-473,Murphy:2015:CMS:13:4:825} 
for multi-component media. We now develop this mathematical theory.

By equations \eqref{eq:Stationary_sig} and \eqref{eq:Stationary_rho}, we have the existence of 
measurable functions $[X_i^{\,\prime}(\omega)]_{jk}$, $j,k=1,\ldots,d$, such that
$[X_i(\vecx ,\omega)]_{jk}=[X_i^{\,\prime}(\tau_{-x}\,\omega)]_{jk}$ for all
$\vecx \in\mathbb{R}^d$ and $\omega\in\Omega$. More specifically for 2D, from equation \eqref{eq:polycrystal_parameters_2D} we have the existence of a measurable function 
$\theta(\vecx ,\omega)=\theta^{\,\prime}(\tau_{-x}\,\omega)$ for all
$\vecx \in\mathbb{R}^d$ and $\omega\in\Omega$ that determines the orientation of the crystal with
$\vecx$ as an interior point. For 3D, from equation  \eqref{eq:polycrystal_parameters_3D}
we have the existence of measurable functions 
$\theta_i(\vecx ,\omega)=\theta_i^{\,\prime}(\tau_{-x}\,\omega)$, $i=1,2,3$, for all
$\vecx \in\mathbb{R}^d$ and $\omega\in\Omega$, and similarly for $d>3$.

Recall that the field $\vecE$ defined by $\bnabla\btimes\vecE=0$ is
determined only up to an arbitrary additive constant, hence the definition $\vecE=\vecE_0+\vecE_f$, where the field $\vecE_f$ is uniquely determined and the field $\vecE_0$ is arbitrary. Therefore, for the formulation of the effective parameter
problem involving $\Hs_\times$ and $\bsigma^*$, define the
coordinate system so that in~\eqref{eq:Maxwells_Equations_E} the constant
vector $\vecE _0=\langle\vecE \rangle$ is given by
$\vecE _0=E_0\,\vece _j$ for some $j=1,\ldots,d$. Similarly, for the formulation of the effective parameter
problem involving $\Hs_\bullet$ and $\brho^*$, instead, define the
coordinate system so that in~\eqref{eq:Maxwells_Equations_J} the constant
vector $\vecJ _0=\langle\vecJ \rangle$ is given by
$\vecJ _0=J_0\,\vece _j$. (Note that, in general, the vectors $\vecE_0$ and
$\vecJ_0$ do not have the same direction.)

Define the complex 
material contrast variables $h=\sigma_1/\sigma_2$, $z=\sigma_2/\sigma_1=1/h$,
and use equations
\eqref{eq:avg_J} and \eqref{eq:two-phase_eps} to write
$\bsigma^*$ and $\brho^*$ as
\begin{align}
\label{eq:eps*_rho*}
&\bsigma^*(h)\vecE _0=\sigma_2\left\langle(hX_1+X_2)\vecE \right\rangle,
\quad
\brho^*(h)\vecJ_0=(1/\sigma_1)\left\langle(X_1+hX_2)\vecJ\,\right\rangle,
\\
&\bsigma^*(z)\vecE _0=\sigma_1\left\langle(X_1+zX_2)\vecE \right\rangle,
\quad
\brho^*(z)\vecJ_0=(1/\sigma_2)\left\langle(zX_1+X_2)\vecJ\,\right\rangle.
\notag
\end{align}
which are functions of the material \emph{contrast parameters} $h$ and $z$. 
We now establish that the components $\sigma_{jk}^*(h)=\bsigma^*(h)\vece_j\bcdot\vece_k$,
$j,k=1,\ldots,d$ of $\bsigma^*(h)$ are analytic function of $h$ for $h\notin(-\infty,0]$.
\begin{theorem}\label{thm:Analytic_h}
The bilinear functional $\Phi$ in equation~\eqref{eq:Bilinear_functional_E} is coercive when the complex variable 
$h$ is off the negative real axis, $h\notin(-\infty,0]$. Moreover, the effective conductivity representation $\sigma^*_{jk}(h)$ is an analytic function of $h$ for $h\notin(-\infty,0]$.
\end{theorem}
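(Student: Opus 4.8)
The plan is to treat the two assertions in turn, deriving analyticity from coercivity. First I would evaluate the form on the diagonal. Using the factorization $\bsigma=\sigma_2(hX_1+X_2)$ from \eqref{eq:two-phase_eps} together with the projection identities $X_iX_j=X_i\delta_{ij}$, $X_1+X_2=I$, and the symmetry $X_i=X_i^{\,T}$, the ranges of $X_1$ and $X_2$ decompose $\Hs$ orthogonally, so that for every $\vecxi\in\Hs_\times$ the cross terms vanish and
\[
\|\vecxi\|^2=\|X_1\vecxi\|^2+\|X_2\vecxi\|^2,\qquad
\Phi(\vecxi,\vecxi)=\sigma_2\bigl(h\,\|X_1\vecxi\|^2+\|X_2\vecxi\|^2\bigr).
\]
Writing $a=\|X_1\vecxi\|^2\ge0$ and $b=\|X_2\vecxi\|^2\ge0$, coercivity reduces to the scalar claim that $|ha+b|$ is bounded below by a positive multiple of $a+b$, uniformly in $a,b\ge0$, whenever $h\notin(-\infty,0]$.

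The key geometric step is this: for $a+b>0$ the quantity $ha+b$ is, up to the positive scale $a+b$, a convex combination of the points $1$ and $h$ in the complex plane, and the set of such combinations is the segment $[1,h]$, which passes through the origin precisely when $h\in(-\infty,0]$. Hence for $h\notin(-\infty,0]$ the segment avoids $0$ and $|ha+b|\ge\operatorname{dist}(0,[1,h])\,(a+b)>0$. Concretely, writing $h=|h|\e^{\mathrm{i}\vartheta}$ with $\vartheta\in(-\pi,\pi)$ and rotating by $\e^{-\mathrm{i}\vartheta/2}$ gives $\Real\bigl(\e^{-\mathrm{i}\vartheta/2}(ha+b)\bigr)=\cos(\vartheta/2)\,(|h|a+b)\ge\cos(\vartheta/2)\min(|h|,1)\,(a+b)$ with $\cos(\vartheta/2)>0$. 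Thus $\Phi$ is coercive (with $\kappa=\sigma_2\cos(\vartheta/2)\min(|h|,1)>0$ after this harmless phase rotation), and since $\Phi$ and $f_\sigma$ remain bounded for each fixed $h$, the Lax--Milgram argument of \thmref{thm:existenceE} again produces a unique $\vecE_f=\vecE_f(h)\in\Hs_\times$.

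Next I would phrase the variational problem \eqref{eq:Weak_Curl_Free_Variational_Form_E} operator-theoretically on $\Hs_\times$. By the Riesz representation theorem there are bounded operators with $\Phi(u,v)=\langle\mathcal{A}(h)u,v\rangle$ and a vector $F(h)$ with $f_\sigma(v)=\langle F(h),v\rangle$, and by the computation above $\mathcal{A}(h)=\sigma_2(h\,\mathcal{M}_1+\mathcal{M}_2)$ and $F(h)=\sigma_2(hF_1+F_2)$ are \emph{affine}, hence entire, functions of $h$, where $\mathcal{M}_i$ are the bounded positive operators induced by $\langle X_i\,\cdot\,,\cdot\rangle$ on $\Hs_\times$. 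The solution is $\vecE_f(h)=\mathcal{A}(h)^{-1}F(h)$, and coercivity guarantees that $\mathcal{A}(h)$ is boundedly invertible for $h\notin(-\infty,0]$. Around any such $h_0$ the Neumann expansion $\mathcal{A}(h)^{-1}=\bigl[I+(h-h_0)\sigma_2\mathcal{A}(h_0)^{-1}\mathcal{M}_1\bigr]^{-1}\mathcal{A}(h_0)^{-1}$ converges in operator norm for $|h-h_0|$ small, so $h\mapsto\mathcal{A}(h)^{-1}$, and therefore $h\mapsto\vecE_f(h)$, is an $\Hs_\times$-valued analytic map on $\mathbb{C}\setminus(-\infty,0]$. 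Finally, reading off $\sigma^*_{jk}(h)$ from \eqref{eq:eps*_rho*} with $\vecE_0=E_0\vece_j$ expresses it as a polynomial in $h$ paired, through continuous functionals, with the analytic vector $\vecE_f(h)$; a finite combination of such terms is analytic, giving analyticity of $\sigma^*_{jk}(h)$ off $(-\infty,0]$.

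I expect the main obstacle to be the coercivity estimate itself --- specifically, isolating the geometric fact that $ha+b$ stays away from the origin exactly on the cut plane and selecting the phase $\e^{-\mathrm{i}\vartheta/2}$ that converts the complex lower bound into a genuine positive real coercivity constant. Once coercivity is in hand, the passage to analyticity is the standard Neumann-series/resolvent argument for an affine, boundedly invertible operator family, and the extraction of $\sigma^*_{jk}(h)$ is routine.
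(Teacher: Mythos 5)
Your proposal is correct and follows essentially the same route as the paper: both reduce coercivity to showing that the convex combination $h\gamma+(1-\gamma)$ (your $a/(a+b)$ playing the role of the paper's $\gamma=\langle X_1\vecxi\bcdot\vecxi\rangle/\|\vecxi\|^2$) stays away from the origin exactly when $h\notin(-\infty,0]$, and then deduce analyticity of $\vecE_f(h)$ and hence of $\sigma^*_{jk}(h)$. Your version supplies two details the paper leaves implicit --- an explicit uniform coercivity constant via the phase rotation $\e^{-\mathrm{i}\vartheta/2}$ (the paper only verifies pointwise nonvanishing in $\gamma$), and the Neumann-series/resolvent argument for the affine family $\mathcal{A}(h)$ where the paper says only ``by differentiation with respect to $h$ one deduces easily'' --- but these are refinements of the same argument, not a different one.
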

\begin{proof}
The bounded bilinear functional $\Phi$ in equation \eqref{eq:Bilinear_functional_E}  is coercive if there exists a $\kappa>0$ such that $|\Phi(\vecxi,\vecxi)|\geq\kappa\|\vecxi\,\|^2$ for all
$\vecxi\in\Hs_\times$ such that $\|\vecxi\,\|\neq0$. From equation \eqref{eq:two-phase_eps} this is
true only if  
\begin{align}\label{eq:Coercive_Phi_h}
    \left|
    \left\langle(hX_1+X_2)\vecxi\bcdot\vecxi\;\right\rangle
    \right|\geq (\kappa/|\sigma_2|)\|\vecxi\,\|^2,
\end{align}
where $\vecxi$ is complex-valued when $h$ is complex and $\vecxi\bcdot\vecxi$
involves complex conjugation in the rightmost position.
Denote $\gamma$ by the ratio 
\begin{align} 
    \gamma=\frac{\left\langle X_1\vecxi\bcdot\vecxi\;\right\rangle}          
    {\|\vecxi\,\|^2}\,.
\end{align}
Since $X_1$ is a real-valued projection matrix, we have
$\langle X_1\vecxi\bcdot\vecxi\rangle=\langle X_1\vecxi\bcdot X_1\vecxi\rangle=\|X_1\vecxi\,\|^2\geq0$.
Consequently, $\gamma$ is real and $\gamma\geq0$, 
Moreover, since $X_1$ is bounded by 1 in operator norm we also have $\gamma\le1$.
Consequently, we have that $0\leq\gamma\leq1$. Since
$X_2=I-X_1$, the coercivity condition in \eqref{eq:Coercive_Phi_h} can now be
written as
\begin{align}\label{eq:Coercive_Phi_h_beta}
    |h\gamma+1-\gamma|\geq\kappa/|\sigma_2|>0.
\end{align}

Let $h=h_r+\imath h_i$, where $h_r$ and $h_i$ are the real and imaginary
parts of the complex number $h$, respectively. Since 
$|h\gamma+1-\gamma|^2=|1+\gamma(h_r-1)|^2+\gamma^2|h_i|^2$, the formula in
\eqref{eq:Coercive_Phi_h_beta} always holds when $h_i\neq0$. If $h$ is
real, then $h\gamma+1-\gamma=0$ if and only if $h=1-1/\gamma$  
for $0<\gamma\leq1$, i.e. for $h$ on the negative real axis. In conclusion,
equation~\eqref{eq:Coercive_Phi_h_beta} holds if and only if $h$ is
\emph{off} of the negative real axis including zero, i.e.,  $h\notin(-\infty,0]$
\cite{Golden:CMP-473}. For any such complex value of $h$, $\Phi$ is
coercive and there exists a unique solution $\vecE _f$ to equation
\eqref{eq:Bilinear_functional_E}, hence to equation
\eqref{eq:Maxwells_Equations_E} (weakly). 
%
%
By differentiation with respect to
$h$, one deduces easily 
that $\vecE _f$ is analytic
in $h$ off the negative real axis with values in
$\Hs_\times$. Therefore by equation~\eqref{eq:eps*_rho*},
$\sigma^*_{jk}(h)$ is analytic off the negative real axis in the complex
$h$-plane. 
\end{proof}

\begin{corollary}
The effective resistivity representation $\rho^*_{jk}(h)$ is an analytic function of $h$ for $h\notin(-\infty,0]$.
The effective conductivity and resistivity representations $\sigma^*_{jk}(z)$ and $\rho^*_{jk}(z)$ are analytic functions of $z$ for $z\notin(-\infty,0]$.
\label{thm:Analytic_z}
\end{corollary}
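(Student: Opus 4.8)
The plan is to obtain all three claims by replaying the coercivity computation of \thmref{thm:Analytic_h} on the appropriate bounded bilinear functional, exploiting that each of the four representations in \eqref{eq:eps*_rho*} has the form $\langle(aX_1+bX_2)\vecu\bcdot\vecu\rangle$ for a field $\vecu$ ranging over either $\Hs_\times$ or $\Hs_\bullet$. The key algebraic fact is that, using $X_1+X_2=I$, any such combination collapses to $I+(c-1)X_i$, so that the associated Rayleigh quotient reduces to the canonical expression $|c\gamma+1-\gamma|$ analyzed in \eqref{eq:Coercive_Phi_h_beta}, where $c$ is the relevant contrast variable and $\gamma=\langle X_i\vecu\bcdot\vecu\rangle/\|\vecu\|^2\in[0,1]$ since $X_i$ is a real orthogonal projection.

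First I would treat $\rho^*_{jk}(h)$. Here the governing variational problem is the resistivity analogue guaranteed by \corref{cor:existenceJ}, with fields in $\Hs_\bullet$ and bilinear functional built from $\brho=(1/\sigma_1)(X_1+hX_2)$. Writing $X_1=I-X_2$ and setting $\gamma=\langle X_2\vecu\bcdot\vecu\rangle/\|\vecu\|^2$, the coercivity inequality becomes $|h\gamma+1-\gamma|\geq\kappa/|\sigma_1|$, identical in form to \eqref{eq:Coercive_Phi_h_beta}, hence it holds exactly for $h\notin(-\infty,0]$. The two $z$-representations are handled the same way: for $\sigma^*_{jk}(z)$, whose representation in \eqref{eq:eps*_rho*} involves $\sigma_1\langle(X_1+zX_2)\vecE\rangle$, one works in $\Hs_\times$ with $\gamma=\langle X_2\vecxi\bcdot\vecxi\rangle/\|\vecxi\|^2$; for $\rho^*_{jk}(z)$, whose representation involves $(1/\sigma_2)\langle(zX_1+X_2)\vecJ\rangle$, one works in $\Hs_\bullet$ with $\gamma=\langle X_1\bzeta\bcdot\bzeta\rangle/\|\bzeta\|^2$. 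In both the quotient reduces to $|z\gamma+1-\gamma|$, giving coercivity precisely for $z\notin(-\infty,0]$.

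In each case coercivity together with boundedness invokes Lax--Milgram to produce a unique solution ($\vecE_f\in\Hs_\times$ or $\vecJ_f\in\Hs_\bullet$) depending on the contrast variable, and differentiating the defining weak equation in that variable shows the solution is analytic off the negative real axis with values in the relevant Hilbert space; the representation \eqref{eq:eps*_rho*} then transfers this analyticity to $\rho^*_{jk}(h)$, $\sigma^*_{jk}(z)$, and $\rho^*_{jk}(z)$, exactly as at the close of \thmref{thm:Analytic_h}. I expect no conceptual obstacle: the only care required is bookkeeping --- correctly pairing each representation with its Hilbert space ($\Hs_\times$ for conductivity, $\Hs_\bullet$ for resistivity), the projection whose Rayleigh quotient supplies $\gamma\in[0,1]$, and the contrast variable that governs coercivity. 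As a consistency check one may note that $z=1/h$ and that $h\mapsto 1/h$ maps $\mathbb{C}\setminus(-\infty,0]$ biholomorphically onto itself, so the $z$-statements are formally consistent with the $h$-statement under this change of variable.
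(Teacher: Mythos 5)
Your proposal is correct and follows essentially the same route as the paper: the paper's own proof simply notes that $h=1/z$ preserves the cut plane $\mathbb{C}\setminus(-\infty,0]$ and declares ``the remainder of the proof is analogous to Theorem~\ref{thm:Analytic_h},'' which is exactly the coercivity replay you carry out explicitly for each of the four representations in \eqref{eq:eps*_rho*}. Your bookkeeping --- pairing the resistivity representations with $\Hs_\bullet$, the conductivity ones with $\Hs_\times$, and reducing each Rayleigh quotient to $|c\gamma+1-\gamma|$ with $\gamma\in[0,1]$ --- correctly fills in the details the paper leaves implicit.
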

\begin{proof}
We note that the transformation $h=1/z$ maps the interval $(-\infty,0)$ in the $h$-plane to the interval $(-\infty,0)$
in the $z$-plane. Therefore, in studies of the analytic properties of bulk transport coefficients, one can focus on $h$\
with $z(h)\not\in(-\infty,0)$ when $h\not\in(-\infty,0)$ \cite{Murphy:JMP:063506}. One representation of the bulk
transport coefficient $\sigma^*_{jk}(h)$ or $\sigma^*_{jk}(z)$ (and similarly for $\rho^*_{jk}$) in equation \eqref{eq:m_h} is often more convenient than the other in studies of the limits
$h\to0$ or $h\to\infty$, depending on which limit is being considered \cite{Murphy:JMP:063506}. With these 
observations, the remainder of the proof is analogous to Theorem \eqref{thm:Analytic_h}.
\end{proof}

\section{Spectral theory for effective parameters}
\label{sec:Representation_formulas}
In this section we show that the abstract Hilbert space formulation for the effective 
parameter problem given in Section \ref{sec:Homogenizaiton_and_Hilbert} leads
to Stieltjes integral representations for $\bsigma^*$ and $\brho^*$, involving spectral
measures of self-adjoint operators. In Section \ref{sec:Hilbert_operators} we introduce Hilbert
spaces and linear operators encountered in the spectral theory for effective parameters, and prove that 
each operator is self-adjoint on an appropriate Hilbert space. In Section \ref{sec:Resolvent_Formulas}
we provide resolvent formulas for electric and current fields involving the self-adjoint operators. In
Section \ref{sec:Stieltjes_integral} we utilize the resolvent formulas and the spectral theorem 
\cite{Stone:64,Schmudgen:2012:2012942602} to provide Stieltjes integral representations for $\bsigma^*$ 
and $\brho^*$, involving spectral measures of the self-adjoint operators.

\subsection{Hilbert space and self-adjoint operators}
\label{sec:Hilbert_operators}
We start this section by defining Hilbert spaces used in the remainder of this manuscript. Recall 
the Hilbert spaces $\Hs_0=L^2(\Omega,P)$ and $\Hs=\bigotimes_{i=1}^d \Hs_0$, 
first defined beneath  equation \eqref{eq:Li_curl_div}, each with 
\emph{sesquilinear} inner products. The $\Hs_0$-inner-product is defined by 
$\langle\xi,\zeta\rangle_0=\langle\xi\bar{\zeta}\rangle$ with 
$\langle\zeta ,\xi\rangle_0=\overline{\langle\xi,\zeta \rangle}_0$, where 
$\langle\cdot\rangle$ denotes ensemble average over $\Omega$ with respect to the
measure $P$. This inner-product induces the norm $\langle\xi,\xi\rangle_0=\|\xi\|_0^2\,$.
The $\Hs$-inner-product is defined by 
$\langle\vecxi,\bzeta \rangle=
\langle\vecxi\bcdot\bzeta \rangle=
\sum_i\langle\xi_i,\zeta_i\rangle_0$, 
with
$\langle\bzeta ,\vecxi\rangle=
\overline{\langle\vecxi,\bzeta \rangle}$.
This inner-product induces the norm 
$\langle\vecxi,\vecxi\rangle=\|\vecxi\|^2=\langle|\vecxi|^2\rangle$.

Also recall the operators $L_i$, $i=1,\ldots,d$, defined above equation \eqref{eq:Li}, that are densely 
defined with domains $\Ds_i\subset\Hs_0$, and recall the definition of the closed linear space $\Ds=\cap_{i=1}^d\Ds_i$, where $\Ds\subset\Hs_0$.
The operators encountered in this section are symmetric with respect to the $\Hs$-inner-product,
but involve the operators $L_i$ and the operators $L_i L_j$,
$i,j=1,\ldots,d$. The following spaces $\Hs^1_0$, $\Hs^2_0$, $\Hs^1=\bigotimes_{i=1}^d \Hs^1_0$, and
$\Hs^2=\bigotimes_{i=1}^d \Hs^2_0$ are normed vector spaces, where
\begin{align}\label{eq:H1_H2}
&\Hs^1_0=\{\xi\in\Ds:\|L_i\xi\|_0<\infty \text{ for all } i=1,\ldots,d, \ \langle\xi\rangle=0 \},
\\\notag
&\Hs^2_0=\{\xi\in\Ds:\|L_iL_j\xi\|_0<\infty \text{ for all } i,j=1,\ldots,d,  \ \langle\xi\rangle=0 \}.
\notag
\end{align}
Every normed vector space can be isometrically embedded onto a dense vector subspace of some Banach space, 
where this Banach space is called a completion of the normed space \cite{Folland:99:RealAnalysis}. 
Consequently, we can take these normed vector spaces to be complete Hilbert spaces, where 
$\Hs^1_0$ and $\Hs^2_0$ are equipt with the $\Hs_0$-inner-product $\langle\cdot,\cdot\rangle_0$, while
$\Hs^1$ and $\Hs^2$ are equipt with the $\Hs$-inner-product $\langle\cdot,\cdot\rangle$.

In this manuscript, we will show that a self-adjoint operator $X_1\Gamma X_1$ --- a composition of the 
projection operator $X_1$ with another projection operator $\Gamma$ --- appears in a resolvent 
representation for the field $X_1\vecE$, which leads to Stieltjes integral representations for the 
components of the effective conductivity matrix $\bsigma^*$ involving spectral measures for the self-adjoint operator $X_1\Gamma X_1$. Therefore, orthogonal properties of projection operators play a central role in 
our analysis. 
Denote by $\Dc(T)$, $\Rc(T)$ and $\Kc(T)$ the \emph{domain}, \emph{range}, and \emph{kernel} 
of an operator $T$, respectively. The following lemma establishes key properties of projection 
operators, e.g., $X_1$ and $\Gamma$ \cite{Folland:99:RealAnalysis}.
\begin{lemma}
\label{lem:orthogonal}	
Let $\Hc$ be a Hilbert space with inner-product $\langle\cdot,\cdot\rangle$. 
If $\xi,\zeta\in\Hc$, we say that $\xi$ is \emph{orthogonal} to $\zeta$ and write $\xi\perp\zeta$ if $\langle\xi,\zeta\rangle=0$. If $\Ms\subset \Hc$ we define
\begin{align}
    \Ms^\perp=\{\xi\in\Hc:\langle\xi,\zeta\rangle=0 \text{ for all } \zeta\in\Ms\},
\end{align}
which is a closed subspace of $\Hc$. If $\Ms$ is a closed subspace of $\Hc$ then 
$\Hc=\Ms\oplus\Ms^\perp$, i.e., each $\xi\in\Hc$ can be uniquely written as $\xi=\zeta+\gamma$
with $\zeta\in\Ms$ and $\gamma\in\Ms^\perp$ with $\zeta\perp\gamma$. If $\Ms$ is a subset of $\Hc$, then $(\Ms^\perp)^\perp$
is the smallest closed subspace of $\Hc$ containing $\Ms$. Finally, let $\Ms$ be a closed subspace of $\Hc$
and for $\xi\in\Hc$, let $P\xi$ be the element of $\Ms$ such that $\xi-P\xi\in\Ms^\perp$, then $P:\Hc\mapsto\Hc$
is a linear operator satisfying $P^*=P$ and $P^2=P$. Conversely, if $P$ is a linear operator satisfying 
$P^*=P$ and $P^2=P$ then $\Rc(P)$ is closed and $P$ is the orthogonal projection onto 
$\Rc(P)$ \cite{Folland:99:RealAnalysis}.
\end{lemma}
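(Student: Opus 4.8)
The plan is to follow the standard Hilbert-space geometry underlying the cited reference, treating the assertions in dependency order, with the orthogonal decomposition theorem as the analytic core on which everything else rests. First I would dispatch the elementary claim that $\Ms^\perp$ is a closed subspace by writing $\Ms^\perp=\bigcap_{\zeta\in\Ms}\Kc(\ell_\zeta)$, where $\ell_\zeta(\xi)=\langle\xi,\zeta\rangle$ is a bounded linear functional (bounded by Cauchy--Schwarz); each $\Kc(\ell_\zeta)$ is then a closed subspace, and an arbitrary intersection of closed subspaces is again a closed subspace. Linearity of $\Ms^\perp$ is immediate from linearity of the inner product in its first slot.

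The central step is to produce, for a closed subspace $\Ms$ and $\xi\in\Hc$, the decomposition $\xi=\zeta+\gamma$. I would set $d=\inf_{\eta\in\Ms}\|\xi-\eta\|$, take a minimizing sequence $(\eta_n)$, and apply the parallelogram law to $\xi-\eta_n$ and $\xi-\eta_m$ to obtain $\|\eta_n-\eta_m\|^2\le 2\|\xi-\eta_n\|^2+2\|\xi-\eta_m\|^2-4d^2\to0$, using that $\tfrac12(\eta_n+\eta_m)\in\Ms$ has distance at least $d$ from $\xi$. Completeness of $\Hc$ together with closedness of $\Ms$ then yields a limit $\zeta\in\Ms$ realizing the infimum. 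Setting $\gamma=\xi-\zeta$, the variational inequality $\|\gamma+t\eta\|^2\ge\|\gamma\|^2$ for all real $t$ and all $\eta\in\Ms$ forces $\Real\langle\gamma,\eta\rangle=0$, and replacing $\eta$ by $\imath\eta$ kills the imaginary part as well, so $\gamma\in\Ms^\perp$; uniqueness follows from $\Ms\cap\Ms^\perp=\{0\}$. This existence argument — the only place where completeness and the parallelogram identity are genuinely used — is the main obstacle, and everything that follows is formal.

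With the decomposition in hand I would treat the double-complement claim. Since orthogonality to $\Ms$ coincides with orthogonality to the smallest closed subspace $N$ containing $\Ms$ (by continuity and bilinearity of the inner product), one has $\Ms^\perp=N^\perp$, so it suffices to show $(N^\perp)^\perp=N$ for closed $N$. The inclusion $N\subseteq(N^\perp)^\perp$ is trivial; conversely, decomposing $\xi\in(N^\perp)^\perp$ as $\xi=n+m$ with $n\in N$, $m\in N^\perp$, the containment $N\subseteq(N^\perp)^\perp$ gives $m=\xi-n\in(N^\perp)^\perp$, while $m\in N^\perp$, whence $\langle m,m\rangle=0$ and $\xi=n\in N$. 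Since $(N^\perp)^\perp$ is closed (being an orthogonal complement) and contains $\Ms$, it is exactly the smallest such closed subspace.

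Finally, for the projection operator $P$ I would verify $P^2=P$ directly, since for $\xi\in\Hc$ the element $P\xi\in\Ms$ decomposes as itself plus $0\in\Ms^\perp$, so $P(P\xi)=P\xi$; and I would establish $P^*=P$ by decomposing both arguments and using orthogonality of the $\Ms$- and $\Ms^\perp$-components to get $\langle P\xi,\eta\rangle=\langle P\xi,P\eta\rangle=\langle\xi,P\eta\rangle$. For the converse, given $P^2=P$ and $P^*=P$, I would first note $\|P\xi\|^2=\langle P\xi,P\xi\rangle=\langle\xi,P^2\xi\rangle=\langle\xi,P\xi\rangle\le\|\xi\|\,\|P\xi\|$, so $P$ is bounded; then $\Rc(P)=\Kc(I-P)$ is closed, and for any $\xi$ and any $\eta=P\zeta\in\Rc(P)$ one has $\langle\xi-P\xi,\eta\rangle=\langle P(\xi-P\xi),\zeta\rangle=\langle P\xi-P^2\xi,\zeta\rangle=0$, so $\xi-P\xi\in\Rc(P)^\perp$, and uniqueness of the decomposition identifies $P$ with the orthogonal projection onto $\Rc(P)$.
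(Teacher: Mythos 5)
Your proof is correct and complete; the paper itself gives no proof of this lemma, stating it as a collection of standard facts and simply citing \cite{Folland:99:RealAnalysis}. Your argument --- closest-point existence via a minimizing sequence and the parallelogram law, the variational characterization of the orthogonal component, the double-complement identity obtained from the decomposition, and the algebraic characterization of orthogonal projections (including the boundedness estimate $\|P\xi\|^2=\langle\xi,P\xi\rangle\le\|\xi\|\,\|P\xi\|$ needed for the converse) --- is exactly the standard development in that reference, so there is nothing to add.
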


The following lemma introduces the projection operator $\Gamma$ and establishes that the operators
$\Gamma$ and $X_1\Gamma X_1$ are self-adjoint on $\Hs^1$.
\begin{theorem}\label{thm:self-adjoint_X1GammaX1}
Assume the projection matrix defined in equation \eqref{eq:two-phase_eps} satisfies $X_1:\Hs^1\mapsto\Hs^1$.
Consider the generalized gradient operator $\bnabla=(L_1,L_2,\ldots,L_d)^T$ and its adjoint $\bnabla^*$.
The operator $\Gamma=\bnabla (\bnabla^*\bnabla)^{-1}\bnabla^*$ is the orthogonal projection 
onto $\Rc(\bnabla)$ and satisfies $\Gamma:\Hs^1\mapsto\Rc(\bnabla)\cap\Hs^1$. 
The operators $\Gamma$ and $X_1\Gamma X_1$ are self-adjoint on the Hilbert space $\Hs^1$.
The spectrum $\Sigma$ of the self-adjoint operator $X_1\Gamma X_1$ satisfies $\Sigma\subseteq[0,1]$.
\end{theorem}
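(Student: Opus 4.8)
The plan is to verify the three assertions in turn: first treat $\Gamma$ as an orthogonal projection, then establish self-adjointness of the sandwiched operator $X_1\Gamma X_1$, and finally bound its numerical range. Throughout I would work with the $\Hs$-inner-product and exploit \lemref{lem:orthogonal}, which reduces the claim that $\Gamma$ is an orthogonal projection onto $\Rc(\bnabla)$ to the two algebraic identities $\Gamma^2=\Gamma$ and $\Gamma^*=\Gamma$. Idempotency is immediate: writing $\Gamma=\bnabla(\bnabla^*\bnabla)^{-1}\bnabla^*$ and cancelling the middle factor $\bnabla^*\bnabla$ against its inverse gives $\Gamma^2=\bnabla(\bnabla^*\bnabla)^{-1}(\bnabla^*\bnabla)(\bnabla^*\bnabla)^{-1}\bnabla^*=\Gamma$. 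For symmetry I would use that $\bnabla$ is closed, so that $\bnabla^{**}=\bnabla$, together with $(\bnabla^*\bnabla)^*=\bnabla^*\bnabla$, to conclude $\Gamma^*=\bnabla\big((\bnabla^*\bnabla)^{-1}\big)^*\bnabla^*=\Gamma$. Once $\Gamma^2=\Gamma=\Gamma^*$ is known, \lemref{lem:orthogonal} yields that $\Gamma$ is the orthogonal projection onto the closed subspace $\Rc(\Gamma)$; the inclusions $\Rc(\Gamma)\subseteq\Rc(\bnabla)$ (read off from the leading factor $\bnabla$) together with $\Gamma\bnabla\xi=\bnabla(\bnabla^*\bnabla)^{-1}(\bnabla^*\bnabla)\xi=\bnabla\xi$ (giving $\Rc(\bnabla)\subseteq\Rc(\Gamma)$) identify $\Rc(\Gamma)=\Rc(\bnabla)$ and show this range is closed.

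The delicate point, and what I expect to be the main obstacle, is making sense of $(\bnabla^*\bnabla)^{-1}$ and of the composition defining $\Gamma$ in the unbounded setting, since the generators $L_i$, and hence $\bnabla$, are unbounded. The plan is to invoke von Neumann's theorem: because $\bnabla$ is closed and densely defined, $\bnabla^*\bnabla$ is self-adjoint and non-negative and $(I+\bnabla^*\bnabla)^{-1}$ is bounded and everywhere defined, yielding the orthogonal Helmholtz-type decomposition $\Hs=\overline{\Rc(\bnabla)}\oplus\Kc(\bnabla^*)$. On the mean-zero space underlying $\Hs^1$ the kernel $\Kc(\bnabla)$ consists only of constants and is therefore trivial, so $\bnabla^*\bnabla$ is injective and $(\bnabla^*\bnabla)^{-1}$ is a well-defined (a priori unbounded) self-adjoint operator on its range; the composition $\bnabla(\bnabla^*\bnabla)^{-1}\bnabla^*$ nonetheless extends to the bounded orthogonal projection onto $\overline{\Rc(\bnabla)}$, which the range computation above shows equals $\Rc(\bnabla)$. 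The mapping property $\Gamma:\Hs^1\mapsto\Rc(\bnabla)\cap\Hs^1$ is then an elliptic-regularity statement: for $\bzeta\in\Hs^1$ the potential $\varphi=(\bnabla^*\bnabla)^{-1}\bnabla^*\bzeta$ lies in $\Hs^2_0$, so its gradient $\Gamma\bzeta=\bnabla\varphi$ lies in $\Hs^1$, using the second-order control built into $\Hs^2_0$.

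For the self-adjointness of $X_1\Gamma X_1$ I would first record that, since $X_1=R^{\,T}CR$ is a real, symmetric, idempotent matrix pointwise in $(\vecx,\omega)$, the induced multiplication operator satisfies $X_1^*=X_1$ and $X_1^2=X_1$ and is bounded by $1$ on $\Hs$; by hypothesis it also maps $\Hs^1$ into itself. Since $\Gamma$ is a bounded orthogonal projection preserving $\Hs^1$ and $X_1$ is bounded on $\Hs^1$, the composition $X_1\Gamma X_1$ is a bounded operator on $\Hs^1$, and $(X_1\Gamma X_1)^*=X_1^*\Gamma^*X_1^*=X_1\Gamma X_1$; a bounded symmetric operator is self-adjoint, which settles the claim.

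Finally, the spectral bound $\Sigma\subseteq[0,1]$ follows from the numerical range of the bounded self-adjoint operator $X_1\Gamma X_1$, mirroring the estimate $0\le\gamma\le1$ in the proof of \thmref{thm:Analytic_h}. For any $\vecxi\in\Hs^1$, using $X_1^*=X_1$, $\Gamma^*=\Gamma$, and $\Gamma^2=\Gamma$, I compute
\begin{align*}
\langle X_1\Gamma X_1\vecxi,\vecxi\rangle=\langle\Gamma X_1\vecxi,X_1\vecxi\rangle=\langle\Gamma X_1\vecxi,\Gamma X_1\vecxi\rangle=\|\Gamma X_1\vecxi\|^2,
\end{align*}
so the quadratic form is non-negative, while $\|\Gamma X_1\vecxi\|^2\le\|X_1\vecxi\|^2\le\|\vecxi\|^2$ because $\Gamma$ and $X_1$ each have operator norm at most $1$. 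Hence $0\le\langle X_1\Gamma X_1\vecxi,\vecxi\rangle\le\|\vecxi\|^2$ for all $\vecxi$, and since the spectrum of a bounded self-adjoint operator is contained in the closed interval determined by its numerical range, $\Sigma\subseteq[0,1]$.
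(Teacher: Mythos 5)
Your proposal is correct and follows essentially the same route as the paper: idempotency and symmetry of $\Gamma$ reduced via \lemref{lem:orthogonal} to identifying it as the orthogonal projection onto $\Rc(\bnabla)$, invertibility of $\bnabla^*\bnabla$ from triviality of its kernel on the mean-zero space, and then positivity plus the operator-norm bound (equivalently, the numerical range) to place the spectrum of the bounded symmetric operator $X_1\Gamma X_1$ in $[0,1]$. The only cosmetic differences are that you invoke von Neumann's theorem by name where the paper cites the existence of positive self-adjoint extensions of the modulus-square operator, and you appeal to the closure of the numerical range where the paper combines positivity with the spectral-radius bound; these are the same underlying facts.
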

\begin{proof}
The abstract gradient operator $\bnabla\xi=(L_1\xi,L_2\xi,\ldots,L_d\xi)^T$ is well defined for $\xi\in\Hs^1_0$.
We therefore take $\Dc(\bnabla)=\Hs^1_0$ and $\Rc(\bnabla)\subseteq\Hs$,  i.e., $\bnabla:\Hs^1_0\mapsto\Hs$.
Since the operators $L_i$ are skew-symmetric \cite{Papanicolaou:RF-835} on $\Hs^1_0$, by the definition 
of the adjoint operator $\bnabla^*$  \cite{Schmudgen:2012:2012942602,Stone:64}, we have
\begin{align}
    \langle\bnabla\xi,\bzeta\rangle=
    \sum_{i=1}^d\langle L_i\xi,\zeta_i\rangle_0=
    -\sum_{i=1}^d\langle\xi,L_i\zeta_i\rangle_0=
    \langle\xi,\bnabla^*\bzeta\rangle_0\,.
\end{align}
Therefore, the adjoint operator $\bnabla^*$ is given by a generalization of the negative divergence operator 
with $\bnabla^*\bzeta=-\sum_{i=1}^d L_i\zeta_i$, which is well defined for $\bzeta\in\Hs^1$. 
We therefore take $\Dc(\bnabla^*)=\Hs^1$ and $\Rc(\bnabla^*)\subseteq\Hs_0$,  
i.e., $\bnabla:\Hs^1\mapsto\Hs_0$.

It follows that the ``modulus square'' operator $\bnabla^*\bnabla$ is given by an elliptical operator, with 
$\bnabla^*\bnabla\xi=-\sum_i L_i^2\xi$, which is well defined for $\xi\in\Hs^2_0$.
We therefore take $\Dc(\bnabla^*\bnabla)=\Hs^2_0$ and $\Rc(\bnabla^*\bnabla)\subseteq\Hs_0$,  
i.e., $\bnabla^*\bnabla:\Hs^2_0\mapsto\Hs_0$, where $\Rc(\bnabla^*\bnabla)\subseteq\Rc(\bnabla^*)\subseteq\Hs_0$.
Since $\langle\bnabla^*\bnabla\xi,\xi\rangle_0=\langle\bnabla\xi,\bnabla\xi\rangle=\|\bnabla\xi\|^2\ge0$
for all $\xi\in\Hs^2_0$, the operator $\bnabla^*\bnabla$ is positive on $\Hs^2_0$. The ``modulus square''
operator of an arbitrary densely defined operator always has a positive self-adjoint extension 
\cite{Zoltzan:2023:03081087,Schmudgen:2012:2012942602}. Therefore, since the $L_i$ are closed and densely 
defined~\cite{Papanicolaou:RF-835}, the operator $\bnabla^*\bnabla$ can be considered to be a positive 
self-adjoint operator on $\Hs^2_0$.

We now establish that the operator $\bnabla^*\bnabla$ has a well defined inverse satisfying
$(\bnabla^*\bnabla)^{-1}:\Rc(\bnabla^*\bnabla)\mapsto\Hs^2_0$. From equation
\eqref{eq:weak_derivative}, within an inner-product, we have $L_i=\partial_i$, where $\partial_i$ denotes classical
partial differentiation, so $-\bnabla^*\bnabla$ is given by the
Laplacian $\Delta=\nabla^2$. We now show that when the polycrystalline material is contained within a spatial  domain $\Vc$ that is bounded in one or more spatial directions, the operator $\Delta$ with homogeneous 
boundary conditions is one-to-one on $\Hs^2_0$ and therefore has a well defined inverse. Indeed, the
Poincar\'{e} inequality then states that for all $\xi\in\Hs^2_0$ with $\xi\neq0$, there exists a constant 
$\beta>0$ such that 
\begin{align}
    \langle -\Delta\xi,\xi\rangle_0=
    \langle \bnabla\xi\bcdot\bnabla\xi\rangle\ge
    \beta\langle\xi,\xi\rangle_0=
    \beta\|\xi\|_0^2>0.
\end{align}
Now if $f\neq g$ on $\Hs^2_0$ but $\Delta f=\Delta g$ then
\begin{align}
    \langle \Delta(f-g),(f-g)\rangle_0=
    \langle (\Delta f-\Delta g),(f-g)\rangle_0=
    \langle 0,(f-g)\rangle_0=
    0.
\end{align}
However, by the Poincar\'{e} inequality, $\langle -\Delta(f-g),(f-g)\rangle_0\ge\beta\|f-g\|_0^2>0$, as $f\neq g$.
This contradiction establishes that $\Delta$, hence $\bnabla^*\bnabla$  is one-to-one on $\Hs^2_0$ and 
therefore has a well defined inverse satisfying $(\bnabla^*\bnabla)^{-1}:\Hs_0\mapsto\Hs^2_0$.

For the case of an unbounded domain $\Vc=\mathbb{R}^d$, we note that if 
$\xi\in\Kc(\bnabla^*\bnabla)$, where $\Kc(\bnabla^*\bnabla)\subset\Hs^2_0$,
then
\begin{align}
    0=\langle\bnabla^*\bnabla\xi,\xi\rangle=
    \langle\bnabla\xi,\bnabla\xi\rangle=
    ||\bnabla\xi||^2,
\end{align}
hence $\bnabla\xi=0$ almost everywhere (a.e.) \cite{Folland:99:RealAnalysis}. Therefore, $\xi$ is constant
a.e.. 
However, by the definition of $\Hs^2_0$, $\xi\in\Hs^2_0$ implies $\langle\xi\rangle=0$,  hence $\xi=0$ a.e.. 
Consequently, 
$\Kc(\bnabla^*\bnabla)=\{0\}$ which implies that $\bnabla^*\bnabla$ has a well defined inverse 
\cite{Zoltzan:2023:03081087,Schmudgen:2012:2012942602},
$(\bnabla^*\bnabla)^{-1}:\Rc(\bnabla^*\bnabla)\mapsto\Hs^2_0$.
The theory of Greens functions provides an explicit representation for the operator $(-\Delta)^{-1}$ in terms
of convolution with the Greens function for the Laplacian \cite{Stakgold:BVP:2000}.

We now establish that the operator $(\bnabla^*\bnabla)^{-1}$ is self-adjoint on $\Hs_0\,$. 
Since the operator $\bnabla^*\bnabla$ is unbounded on $\Hs^2_0$  the operator 
$(\bnabla^*\bnabla)^{-1}$ is bounded \cite{Murphy:ADSTPF-2017,Stone:64,Schmudgen:2012:2012942602,Stakgold:BVP:2000}. Since $\bnabla^*\bnabla$ is symmetric, $(\bnabla^*\bnabla)^{-1}$ is also symmetric. Indeed, denoting $T=\bnabla^*\bnabla$, $T\xi=f$, and
$T\zeta=g$, we have
\begin{align}
    \langle T^{-1}f,g\rangle_0=
    \langle \xi,T\zeta\rangle_0=
    \langle T\xi,\zeta\rangle_0=
    \langle f,T^{-1}g\rangle_0\,.
\end{align}
The bounded symmetric operator $(\bnabla^*\bnabla)^{-1}$ has bounded self-adjoint extensions 
\cite{Zoltzan:2023:03081087,Schmudgen:2012:2012942602} and can therefore be considered to be a 
self-adjoint operator on $\Hs_0$.

The gradient operator $\bnabla$ with $\Dc(\bnabla)=\Hs^2_0$ satisfies $\bnabla:\Hs^2_0\mapsto\Hs^1$.
Our analyses in this proof 
has established that 
the operator $\Gamma=\bnabla(\bnabla^*\bnabla)^{-1}\bnabla^*$ with $\Dc(\Gamma)=\Hs^1$ is linear 
and symmetric, $\Gamma^*=\Gamma$, and has $\Rc(\Gamma)=\Hs^1$, i.e., $\Gamma:\Hs^1\mapsto\Hs^1$.  
More specifically, recalling that $\Rc(\bnabla^*\bnabla)\subseteq\Rc(\bnabla^*)$,

\begin{align}
    \bnabla^*:\Hs^1\mapsto\Rc(\bnabla^*)\cap\Hs_0,
    \quad
    (\bnabla^*\bnabla)^{-1}:\Rc(\bnabla^*\bnabla)\cap\Hs_0\mapsto\Hs^2_0,
    \quad
    \bnabla:\Hs^2_0\mapsto\Rc(\bnabla)\cap\Hs^1.
\end{align}
It is clear that $\Gamma^2=\Gamma$ and  $\Gamma\bnabla\xi=\bnabla\xi$, weakly. From Lemma 
\ref{lem:orthogonal}, it follows that the operator $\Gamma$ is the orthogonal projection 
onto $\Rc(\bnabla)$ and $\Rc(\bnabla)$ is a closed subset of $\Hs^1$. Therefore $\Gamma$ has 
bounded operator norm $\|\Gamma\|\le1$ on $\Hs^1$ \cite{Folland:99:RealAnalysis}.

We now establish that the operators $\Gamma$ and $X_1\Gamma X_1$ are positive self-adjoint operators on the Hilbert space $\Hs^1$. 
Since $X_1$ is a matrix-valued projection operator, the operator $X_1\Gamma X_1$ has bounded operator 
norm $\|X_1\Gamma X_1\|\le1$ on $\Hs^1$  \cite{Folland:99:RealAnalysis}. The components of $X_1$ 
can be discontinuous functions of the spatial variable $\vecx\in\mathbb{R}^d$; we assume that 
$X_1:\Hs^1\mapsto\Hs^1$, hence $X_1\Gamma X_1:\Hs^1\mapsto\Hs^1$. Since $\Gamma^2=\Gamma$, 
it is a positive operator on $\Hs^1$ \cite{Schmudgen:2012:2012942602}. Moreover, since $X_1$ is a
real-symmetric projection matrix satisfying $X_1^*=X_1^T=X_1$,
and $X_1^2=X_1$, the operator $X_1\Gamma X_1$ is also positive on
$\Hs^1$: 
\begin{align}
    \langle X_1\Gamma X_1\vecxi,\vecxi\rangle=
    \langle \Gamma X_1\vecxi,\Gamma X_1\vecxi\rangle=
    \|\Gamma X_1\vecxi\|^2\ge0.
\end{align}
The bounded positive symmetric operators $\Gamma$ and $X_1\Gamma X_1$ have bounded positive 
self-adjoint extensions \cite{Zoltzan:2023:03081087,Schmudgen:2012:2012942602} and can therefore 
be considered to be positive self-adjoint operators on $\Hs^1$. Since the self-adjoint operator 
$X_1\Gamma X_1$ is positive, the spectrum satisfies \cite{Schmudgen:2012:2012942602,Stone:64}  $\Sigma\subseteq[0,\infty)$. The spectrum $\Sigma$ of $X_1\Gamma X_1$ also satisfies \cite{Schmudgen:2012:2012942602,Stone:64}  
$\Sigma\subseteq[-\|X_1\Gamma X_1\|,\|X_1\Gamma X_1\|]$.  
Therefore, since $\|X_1\Gamma X_1\|\le1$, we have $\Sigma\subseteq[0,1]$.
\end{proof}

In the following two Corollaries, we collect relevant parts of the proof of Theorem \ref{thm:self-adjoint_X1GammaX1} that will be needed later in the
transcript.
\begin{corollary}\label{cor:inverse_Laplacian}
The operator $\bnabla^*\bnabla:\Hs^2_0\mapsto\Hs_0$ is one-to-one when the spatial domain $\Vc$ is
bounded in one or more directions and $\bnabla^*\bnabla$ has homogeneous boundary conditions. When 
$\Vc=\mathbb{R}^d$, we have $\Kc(\bnabla^*\bnabla)=\{0\}$. Therefore, in either case, $\bnabla^*\bnabla$ 
has a well defined inverse $(\bnabla^*\bnabla)^{-1}:\Rc(\bnabla^*\bnabla)\mapsto\Hs^2_0$.
\end{corollary}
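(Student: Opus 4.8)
The plan is to assemble the two injectivity arguments already carried out within the proof of Theorem~\ref{thm:self-adjoint_X1GammaX1}, treating the bounded and unbounded domains separately, and then observe that triviality of the kernel is exactly what is needed to define the inverse on the range. Since this corollary only collects conclusions established there, the work is essentially organizational rather than new.

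First, for the case where $\Vc$ is bounded in one or more directions and $\bnabla^*\bnabla$ is equipped with homogeneous boundary conditions, I would invoke the Poincar\'{e} inequality: there exists $\beta>0$ with $\langle-\Delta\xi,\xi\rangle_0=\langle\bnabla\xi\bcdot\bnabla\xi\rangle\ge\beta\|\xi\|_0^2>0$ for every nonzero $\xi\in\Hs^2_0$, where $-\bnabla^*\bnabla$ coincides with $\Delta$ within an inner product by equation~\eqref{eq:weak_derivative}. This strict positivity forces injectivity: if $\bnabla^*\bnabla f=\bnabla^*\bnabla g$ with $f\neq g$, then pairing $\Delta(f-g)=0$ against $f-g$ yields $0$, contradicting the lower bound $\beta\|f-g\|_0^2>0$. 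Hence $\bnabla^*\bnabla$ is one-to-one on $\Hs^2_0$ and $\Kc(\bnabla^*\bnabla)=\{0\}$.

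Next, for $\Vc=\mathbb{R}^d$, I would argue directly from the energy identity. If $\xi\in\Kc(\bnabla^*\bnabla)$, then $0=\langle\bnabla^*\bnabla\xi,\xi\rangle=\|\bnabla\xi\|^2$, so $\bnabla\xi=0$ almost everywhere and $\xi$ is constant almost everywhere. The mean-zero constraint $\langle\xi\rangle=0$ built into the definition of $\Hs^2_0$ then forces $\xi=0$ a.e., giving $\Kc(\bnabla^*\bnabla)=\{0\}$ once more. In either case the triviality of the kernel is precisely the injectivity of $\bnabla^*\bnabla$, so its inverse is well defined as a map $(\bnabla^*\bnabla)^{-1}:\Rc(\bnabla^*\bnabla)\mapsto\Hs^2_0$.

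Because every step merely restates an argument already completed in the proof of Theorem~\ref{thm:self-adjoint_X1GammaX1}, there is no genuine obstacle here. The only point deserving a little care is the implication ``$\bnabla\xi=0$ a.e.\ $\Rightarrow$ $\xi$ constant'' in the unbounded case, which relies on the identification in~\eqref{eq:weak_derivative} of the generators $L_i$ with weak spatial derivatives, together with the stationarity and ergodicity of the underlying process that render a translation-invariant element of $\Hs_0$ almost surely constant.
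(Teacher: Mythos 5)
Your proposal is correct and follows essentially the same route as the paper: the corollary is explicitly stated as a collection of results already established inside the proof of Theorem~\ref{thm:self-adjoint_X1GammaX1}, and you reproduce exactly those two arguments (the Poincar\'{e}-inequality contradiction for a bounded domain, and the energy identity plus the mean-zero constraint for $\Vc=\mathbb{R}^d$) before concluding that trivial kernel gives a well-defined inverse on the range. Your added remark on why ``$\bnabla\xi=0$ a.e.\ implies $\xi$ constant'' is a reasonable clarification but does not change the approach.
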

\begin{corollary}\label{eq:Laplacian_trivial_kernel}
The operator $\bnabla^*\bnabla$ defined with domain $\Dc(\bnabla^*\bnabla)=\Hs^2_0$ has the trivial kernel 
$\Kc(\bnabla^*\bnabla)=\{0\}$.
\end{corollary}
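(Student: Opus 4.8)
The plan is to extract the kernel computation already performed inside the proof of Theorem~\ref{thm:self-adjoint_X1GammaX1}, since this Corollary merely isolates that conclusion. First I would fix an arbitrary $\xi\in\Kc(\bnabla^*\bnabla)$, so that $\bnabla^*\bnabla\xi=0$, and pair this identity against $\xi$ in the $\Hs_0$-inner-product. Using the adjoint relation $\langle\bnabla^*\bnabla\xi,\xi\rangle_0=\langle\bnabla\xi,\bnabla\xi\rangle=\|\bnabla\xi\|^2$ established in that theorem, the kernel condition forces $\|\bnabla\xi\|^2=0$, whence $\bnabla\xi=0$ holds $\nu\times P$-almost everywhere.

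Next I would promote $\bnabla\xi=0$ to $\xi=0$. Since each generalized derivative $L_i\xi$ agrees with the weak spatial derivative $\partial_i\xi$ by equation~\eqref{eq:weak_derivative}, the vanishing of $\bnabla\xi$ forces $\xi$ to be spatially constant almost everywhere. But the very definition of $\Hs^2_0$ in equation~\eqref{eq:H1_H2} imposes the mean-zero normalization $\langle\xi\rangle=0$, so the constant must equal zero and therefore $\xi=0$ a.e. This yields $\Kc(\bnabla^*\bnabla)=\{0\}$.

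The only delicate point is the implication ``$\bnabla\xi=0\Rightarrow\xi$ constant.'' In the stationary stochastic setting with $\Vc=\mathbb{R}^d$, this is the statement that a stationary process whose weak gradient vanishes is invariant under the translation group $\tau_x$, hence almost surely equal to a deterministic constant by ergodicity; the normalization $\langle\xi\rangle=0$ then annihilates that constant. For a domain $\Vc$ bounded in at least one direction the same conclusion is immediate from the Poincar\'{e} inequality already invoked in Theorem~\ref{thm:self-adjoint_X1GammaX1} and Corollary~\ref{cor:inverse_Laplacian}, namely $\langle-\Delta\xi,\xi\rangle_0\ge\beta\|\xi\|_0^2$, which forces $\xi=0$ as soon as $\bnabla^*\bnabla\xi=0$ (equivalently, the injectivity proved there is exactly triviality of the kernel). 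In either geometry the kernel is trivial, as claimed.
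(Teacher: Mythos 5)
Your proposal is correct and follows essentially the same route as the paper, which establishes this corollary inside the proof of Theorem~\ref{thm:self-adjoint_X1GammaX1}: pairing $\bnabla^*\bnabla\xi=0$ against $\xi$ to get $\|\bnabla\xi\|^2=0$, concluding $\xi$ is constant a.e., and using the mean-zero normalization in the definition of $\Hs^2_0$ to kill the constant (with the Poincar\'e inequality handling the bounded-domain case). The only difference is that you make explicit the ergodicity/stationarity justification for ``$\bnabla\xi=0\Rightarrow\xi$ constant,'' which the paper leaves implicit --- a welcome addition rather than a deviation.
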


We now turn our attention from the projection $\Gamma$ onto $\Rc(\bnabla)$ towards a projection $\Upsilon$ 
onto $\Rc(\bnabla\btimes)$ that arises in Stieltjes
integration representations for the components of the effective resistivity matrix $\brho^*$, involving
spectral measures of a self-adjoint operator that incorporates $\Upsilon$.
We start by providing a result that relates the ranges and kernels of an operator $T$ and its adjoint $T^*$.
\begin{lemma}
\label{lem:generalization_FTLA}
Let $\Hc$ be a Hilbert space with inner-product $\langle\cdot,\cdot\rangle$ and 
let $T$ be a \emph{closed} linear operator on $\Hc$ with adjoint $T^*$. Then, 
\begin{align}
    \label{eq:perp1}	
    \Kc(T)=\Rc(T^*)^\perp,
    \\
    \label{eq:perp2}
    (\Rc(T)^\perp)^\perp=\Kc(T^*)^\perp.
\end{align}
\end{lemma}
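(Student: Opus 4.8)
The plan is to first establish the single master identity
\begin{align}
\Kc(S^*)=\Rc(S)^\perp,
\end{align}
valid for \emph{any} densely defined operator $S$ on $\Hc$, and then obtain both \eqref{eq:perp1} and \eqref{eq:perp2} as formal consequences. The master identity follows directly from the definition of the adjoint: a vector $y$ lies in $\Rc(S)^\perp$ precisely when $\langle Sx,y\rangle=0$ for every $x\in\Dc(S)$, and by the defining property of $\Dc(S^*)$ this says exactly that $y\in\Dc(S^*)$ with $S^*y=0$, i.e. $y\in\Kc(S^*)$. No closedness is needed here, only that $S$ is densely defined so that $S^*$ exists; recall that the very hypothesis that $T$ possesses an adjoint $T^*$ already presupposes $\Dc(T)$ dense in $\Hc$.

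For \eqref{eq:perp2} I would simply apply the master identity with $S=T$ to get $\Kc(T^*)=\Rc(T)^\perp$, and then take orthogonal complements of both sides to obtain $\Kc(T^*)^\perp=(\Rc(T)^\perp)^\perp$, which is exactly \eqref{eq:perp2}. By Lemma \ref{lem:orthogonal} the right-hand side is the smallest closed subspace containing $\Rc(T)$, namely $\overline{\Rc(T)}$, so \eqref{eq:perp2} also records the familiar fact $\overline{\Rc(T)}=\Kc(T^*)^\perp$.

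For \eqref{eq:perp1} I would instead apply the master identity to $S=T^*$, which is legitimate because a closed, densely defined $T$ has a densely defined adjoint $T^*$; this gives $\Kc(T^{**})=\Rc(T^*)^\perp$. The remaining step, and the only place where the closedness hypothesis is actually used, is to invoke von Neumann's double-adjoint theorem, $T^{**}=T$ for closed densely defined $T$, whence $\Kc(T^{**})=\Kc(T)$ and \eqref{eq:perp1} follows. I expect this identification $T^{**}=T$ to be the main (and essentially the only) obstacle. One may instead argue the nontrivial inclusion $\Rc(T^*)^\perp\subseteq\Kc(T)$ by hand: if $x\perp\Rc(T^*)$ then $\langle T^*y,x\rangle=0=\langle y,0\rangle$ for all $y\in\Dc(T^*)$, so by definition $x\in\Dc(T^{**})$ with $T^{**}x=0$, and closedness again yields $x\in\Dc(T)$ with $Tx=0$. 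The reverse inclusion $\Kc(T)\subseteq\Rc(T^*)^\perp$ is immediate from $\langle x,T^*y\rangle=\langle Tx,y\rangle=0$ and needs no closedness.
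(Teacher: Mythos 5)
Your proposal is correct, and it uses the same two ingredients as the paper (the defining property of the adjoint plus von Neumann's theorem $T^{**}=T$ for closed densely defined $T$), but it organizes them in the opposite order. The paper proves \eqref{eq:perp1} directly by verifying both inclusions, and then obtains \eqref{eq:perp2} by applying \eqref{eq:perp1} to $T^*$ and invoking $T^{**}=T$. You instead prove the unconditional identity $\Kc(S^*)=\Rc(S)^\perp$ first, read off \eqref{eq:perp2} immediately with $S=T$, and then get \eqref{eq:perp1} by taking $S=T^*$ and using $T^{**}=T$. Your arrangement has a genuine advantage: it isolates the closedness hypothesis exactly where it is needed. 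In the paper's direct proof of the inclusion $\Rc(T^*)^\perp\subseteq\Kc(T)$, one takes $\xi\in\Rc(T^*)^\perp$ and writes $\langle T\xi,\zeta\rangle=\langle\xi,T^*\zeta\rangle$, which tacitly presupposes $\xi\in\Dc(T)$; a priori $\xi$ is only known to be orthogonal to $\Rc(T^*)$. Your hand argument fixes this: the orthogonality relation $\langle T^*y,\xi\rangle=\langle y,0\rangle$ for all $y\in\Dc(T^*)$ shows $\xi\in\Dc(T^{**})$ with $T^{**}\xi=0$, and only then does closedness convert this into $\xi\in\Dc(T)$ with $T\xi=0$. (In the paper's application the operators $\bnabla$ and $\bnabla\btimes$ are defined on all of the relevant spaces, so the domain issue is harmless there, but your version is the one that is correct at the stated level of generality.) One small caveat: your master identity requires $S$ densely defined so that $S^*$ exists, and when you apply it to $S=T^*$ you need $T^*$ densely defined; you correctly note that this follows from $T$ being closed and densely defined, so no gap remains.
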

\begin{proof}
For the proof of equation \eqref{eq:perp1}, let $\xi\in\Kc(T)$ and $\zeta\in\Rc(T^*)$. 
Then $\zeta=T^*\gamma$ for some $\gamma$ in $\Hc$. Therefore,
\begin{align}
    \langle\xi,\zeta\rangle=
    \langle\xi,T^*\gamma\rangle=
    \langle T\xi,\gamma\rangle=
    \langle 0,\gamma\rangle=0,
\end{align}
as $\xi\in\Kc(T)$, hence $\Kc(T)\subseteq\Rc(T^*)^\perp$. On the other hand, let $\xi\in\Rc(T^*)^\perp$, 
then for all $\zeta\in\Hc$
\begin{align}
    \langle T\xi,\zeta\rangle=
    \langle \xi,T^*\zeta\rangle=0,
\end{align}
as $T^*\zeta\in\Rc(T^*)$. Therefore, we must have $T\xi=0$ \cite{Folland:99:RealAnalysis}, hence
$\Rc(T^*)^\perp\subseteq\Kc(T)$. Consequently, we have $\Kc(T)=\Rc(T^*)^\perp$, which establishes 
equation \eqref{eq:perp1}.

For the proof of equation \eqref{eq:perp2}, since $T$ is closed we have $T^{**}=T$ 
\cite{Schmudgen:2012:2012942602}. Therefore, from equation \eqref{eq:perp1} we have 
$\Kc(T^*)=\Rc(T)^\perp$, hence $\Kc(T^*)^\perp=[\Rc(T)^\perp]^\perp$, where $[\Rc(T)^\perp]^\perp$ 
is the smallest closed set containing $\Rc(T)$ \cite{Folland:99:RealAnalysis}, which establishes 
equation  \eqref{eq:perp2}.
\end{proof}

Define the generalized curl operator
\begin{align}
\label{eq:generalized_curl}
\bnabla\btimes&=\left[
\begin{array}{rrr}
    O&-L_3&L_2\\
    L_3&O&-L_1\\  
    -L_2&L_1&O
\end{array}
\right]
\;\text{ for 3D,} 
\\
\bnabla\btimes&=[-L_2,L_1]
\;\text{ for 2D} 
\notag
\end{align}
where $O$ denotes the null operator satisfying $O\xi=0$ for all $\xi\in\Hs_0$. We first focus on the 3D setting.
Since $\bnabla\btimes$ is given in terms of the skew-symmetric Levi-Civita tensor and the operators $L_i$
are skew-symmetric on $\Hs^1_0$, the curl operator is symmetric, $(\bnabla\btimes)^*=\bnabla\btimes$.
The operation $\bnabla\btimes\bxi$ is well defined for $\bxi\in\Hs^1$. We therefore take 
$\Dc(\bnabla\btimes)=\Hs^1$ and $\Rc(\bnabla\btimes)\subseteq\Hs$,  i.e., 
$\bnabla\btimes:\Hs^1\mapsto\Hs$.
Since the operators $L_i$ commute on $\Hs^2_0$ \cite{Papanicolaou:RF-835,Golden:CMP-473}, a direct
calculation shows that $\bnabla^*\bnabla\btimes\bxi=0$, weakly, for all $\bxi\in\Hs^2$ and 
$\bnabla\btimes\bnabla\xi=0$, weakly, for all $\xi\in\Hs^2_0$. Therefore,
\begin{align}\label{eq:curl_grad_div_Ran_Ker}
\Rc(\bnabla\btimes)\subseteq\Kc(\bnabla^*),
\quad
\Rc(\bnabla)\subseteq\Kc(\bnabla\btimes).
\end{align}
Since each operator $L_i$, $i=1,\ldots,d$, is closed with densely defined domains $\Ds_i\in\Hs_0$, 
the operator $\bnabla$ with domain $\Hs^1_0$ and 
the operator $\bnabla\btimes$ with domain $\Hs^1$ are
also closed, so Lemma \ref{lem:generalization_FTLA} can be applied to 
$\bnabla$ and $\bnabla\btimes$.

We now summarize the results of Lemma \ref{lem:generalization_FTLA} in the context of the operators 
$\bnabla$ and $\bnabla\btimes$ --- taking in account the symmetry $\bnabla\btimes^*=\bnabla\btimes\,$.
\begin{align}\label{eq:kernel_range}
\Kc(\bnabla)=\Rc(\bnabla^*)^\perp,
\quad
\Kc(\bnabla^*)=\Rc(\bnabla)^\perp,
\quad
\Kc(\bnabla\btimes)=\Rc(\bnabla\btimes)^\perp.
%
%
\end{align}
We are now ready to prove the following corollary of Theorem \ref{thm:self-adjoint_X1GammaX1}.
\begin{corollary}\label{cor:self-adjoint_X1UpsilonX1}
Assume the projection matrix defined in equation \eqref{eq:two-phase_eps} satisfies $X_1:\Hs^1\mapsto\Hs^1$.
Consider the generalized curl operator $\bnabla\btimes$ in equation \eqref{eq:generalized_curl} and its adjoint 
$\bnabla\btimes^*=\bnabla\btimes$.
The operator 
$\Upsilon=\bnabla\btimes(\bnabla\btimes\bnabla\btimes)^{-1}\bnabla\btimes$ 
is the orthogonal projection onto $\Rc(\bnabla\btimes)$ and satisfies
$\Upsilon:\Hs^1\mapsto\Rc(\bnabla\btimes)\cap\Hs^1$. The 
operators $\Upsilon$ and $X_1\Upsilon X_1$ are self-adjoint on the Hilbert space $\Hs^1$.
The spectrum of the self-adjoint operator $X_1\Upsilon X_1$ is contained in the interval $[0,1]$.
\end{corollary}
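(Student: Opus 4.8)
The plan is to mirror the proof of \thmref{thm:self-adjoint_X1GammaX1} essentially line for line, with the self-adjoint curl $\bnabla\btimes$ playing the role of the gradient $\bnabla$ and the modulus-square operator $\bnabla\btimes\bnabla\btimes=(\bnabla\btimes)^*(\bnabla\btimes):\Hs^2\mapsto\Hs$ playing the role of $\bnabla^*\bnabla$. Since the skew-symmetric Levi-Civita structure makes $\bnabla\btimes$ symmetric on $\Hs^1$, I would first record that $\bnabla\btimes\bnabla\btimes$ is symmetric and positive on $\Hs^2$, because $\langle\bnabla\btimes\bnabla\btimes\,\bxi,\bxi\rangle=\langle\bnabla\btimes\bxi,\bnabla\btimes\bxi\rangle=\|\bnabla\btimes\bxi\|^2\ge0$; as in \thmref{thm:self-adjoint_X1GammaX1} it then admits a positive self-adjoint extension and may be treated as a positive self-adjoint operator.

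The one place the argument genuinely diverges from the gradient case, and the step I expect to be the main obstacle, is the construction of $(\bnabla\btimes\bnabla\btimes)^{-1}$. The positivity computation only yields $\Kc(\bnabla\btimes\bnabla\btimes)=\Kc(\bnabla\btimes)$, and by \eqref{eq:curl_grad_div_Ran_Ker} this kernel contains $\Rc(\bnabla)$ and is therefore nontrivial, so the Poincar\'e/mean-zero argument that made $\bnabla^*\bnabla$ one-to-one is unavailable here. Instead I would exploit the self-adjointness of the curl: the kernel-range identity \eqref{eq:kernel_range} gives $\Kc(\bnabla\btimes)=\Rc(\bnabla\btimes)^\perp$, so $\bnabla\btimes\bnabla\btimes$ is one-to-one on the complement $\overline{\Rc(\bnabla\btimes)}\cap\Hs^2$ of its kernel. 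I would then define $(\bnabla\btimes\bnabla\btimes)^{-1}:\Rc(\bnabla\btimes\bnabla\btimes)\mapsto\overline{\Rc(\bnabla\btimes)}\cap\Hs^2$ on that reduced space and check, by the same boundedness and symmetry computation used for $(\bnabla^*\bnabla)^{-1}$, that it is a bounded self-adjoint operator.

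With the inverse in hand, the identification of $\Upsilon$ as the orthogonal projection onto $\Rc(\bnabla\btimes)$ proceeds exactly as for $\Gamma$, following the chain of mappings $\bnabla\btimes:\Hs^1\mapsto\Rc(\bnabla\btimes)\cap\Hs$, $(\bnabla\btimes\bnabla\btimes)^{-1}:\Rc(\bnabla\btimes\bnabla\btimes)\mapsto\overline{\Rc(\bnabla\btimes)}\cap\Hs^2$, and $\bnabla\btimes:\Hs^2\mapsto\Rc(\bnabla\btimes)\cap\Hs^1$. A direct calculation gives $\Upsilon^2=\Upsilon$; the self-adjointness of both $\bnabla\btimes$ and $(\bnabla\btimes\bnabla\btimes)^{-1}$ gives $\Upsilon^*=\Upsilon$; and $\Upsilon\,\bnabla\btimes\bxi=\bnabla\btimes\bxi$ holds weakly because $\bnabla\btimes\bnabla\btimes(\bnabla\btimes\bnabla\btimes)^{-1}$ is the identity on $\Rc(\bnabla\btimes\bnabla\btimes)$. \lemref{lem:orthogonal} then identifies $\Upsilon$ as the orthogonal projection onto the closed subspace $\Rc(\bnabla\btimes)$, so that $\Upsilon:\Hs^1\mapsto\Rc(\bnabla\btimes)\cap\Hs^1$ and $\|\Upsilon\|\le1$.

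The closing argument for $X_1\Upsilon X_1$ is then identical to that of \thmref{thm:self-adjoint_X1GammaX1}. Using the hypothesis $X_1:\Hs^1\mapsto\Hs^1$ together with $X_1^*=X_1$, $X_1^2=X_1$, and $\Upsilon^2=\Upsilon$, the computation $\langle X_1\Upsilon X_1\vecxi,\vecxi\rangle=\|\Upsilon X_1\vecxi\|^2\ge0$ shows $X_1\Upsilon X_1$ is positive, while $\|X_1\Upsilon X_1\|\le1$ follows from $\|X_1\|\le1$ and $\|\Upsilon\|\le1$. A bounded positive symmetric operator has a positive self-adjoint extension, so $X_1\Upsilon X_1$ is self-adjoint on $\Hs^1$; positivity forces its spectrum into $[0,\infty)$ and the norm bound forces it into $[-1,1]$, hence into $[0,1]$. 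The only genuinely new bookkeeping relative to \thmref{thm:self-adjoint_X1GammaX1} is handling the nontrivial kernel of $\bnabla\btimes\bnabla\btimes$; the 2D case, where $\bnabla\btimes$ sends vector fields to scalars, requires replacing $\bnabla\btimes\bnabla\btimes$ by $(\bnabla\btimes)^*(\bnabla\btimes)$ but is otherwise the same.
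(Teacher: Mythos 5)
Your proposal is correct and reaches the same conclusions, but it handles the one genuinely delicate step --- inverting $\bnabla\btimes\bnabla\btimes$ despite its nontrivial kernel --- by a different route than the paper. You argue abstractly: $\Kc(\bnabla\btimes\bnabla\btimes)=\Kc(\bnabla\btimes)=\Rc(\bnabla\btimes)^\perp$ by \eqref{eq:kernel_range}, so the operator is injective on the orthogonal complement $\overline{\Rc(\bnabla\btimes)}\cap\Hs^2$ of its kernel, and the inverse is defined on the range accordingly. The paper instead decomposes $\Hs^2=\Rc(\bnabla)\oplus\Kc(\bnabla^*)$, kills the $\Rc(\bnabla)$ part using $\Rc(\bnabla)\subseteq\Kc(\bnabla\btimes)$, and then invokes the explicit identity \eqref{eq:curl_curl}, $\bnabla\btimes\bnabla\btimes\bgamma=\mathrm{diag}(\bnabla^*\bnabla)\bgamma-\bnabla\bnabla^*\bgamma$, to reduce invertibility on $\Kc(\bnabla^*)$ to the already-established injectivity of the Laplacian in \corref{cor:inverse_Laplacian}. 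Your argument is shorter and more structural; the paper's buys the curl--curl identity explicitly, which is reused immediately afterward in \thmref{thm:resolution_identity} to show $\Rc_0=\{0\}$ and obtain $\Gamma+\Upsilon=I$, so the extra work is not wasted. Everything else --- positivity and self-adjoint extension of the modulus-square operator, the chain of mappings identifying $\Upsilon$ as the orthogonal projection onto $\Rc(\bnabla\btimes)$ via \lemref{lem:orthogonal}, and the closing argument that $X_1\Upsilon X_1$ is positive, bounded by one, self-adjoint with spectrum in $[0,1]$ --- matches the paper's proof essentially line for line, at the same level of rigor (both treat the surjectivity of $\bnabla\btimes\bnabla\btimes$ onto the domain needed for the composition only implicitly). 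Your closing remark about the 2D case is a sensible addition the paper does not spell out, since it declares a focus on the 3D setting when introducing \eqref{eq:generalized_curl}.
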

\begin{proof}
Since $\Dc(\bnabla\btimes)=\Hs^1$ and $\Rc(\bnabla\btimes)\subseteq\Hs$,  i.e., 
$\bnabla\btimes:\Hs^1\mapsto\Hs$, it follows that the ``modulus square''
operator $(\bnabla\btimes)^*\bnabla\btimes=\bnabla\btimes\bnabla\btimes$ is 
well defined for $\Dc(\bnabla\btimes\bnabla\btimes)=\Hs^2$ and $\Rc(\bnabla\btimes\bnabla\btimes)\subseteq\Hs$,  i.e., 
$\bnabla\btimes\bnabla\btimes:\Hs^2\mapsto\Hs$.
Analogous to the proof of Theorem \ref{thm:self-adjoint_X1GammaX1}, it can be shown that the operator $\bnabla\btimes\bnabla\btimes$ 
is positive on $\Hs^2$. Moreover, the ``modulus square'' operator $(\bnabla\btimes)^*\bnabla\btimes$ can 
be considered to be a positive self-adjoint operator on $\Hs^2$ 
\cite{Zoltzan:2023:03081087,Schmudgen:2012:2012942602}. 

We now establish 
that the operator $\bnabla\btimes\bnabla\btimes$ has a well defined inverse
$(\bnabla\btimes\bnabla\btimes)^{-1}$ satisfying
$(\bnabla\btimes\bnabla\btimes)^{-1}:\Rc(\bnabla\btimes\bnabla\btimes)\mapsto\Hs^2$. Since $\Gamma$ 
is a projection operator, its range $\Rc(\bnabla)$ is closed. Therefore, by Lemma \ref{lem:orthogonal} we have $\Hs^2=\Rc(\bnabla)\oplus\Rc(\bnabla)^\perp$. From equation \eqref{eq:kernel_range} it follows that 
$\Hs^2=\Rc(\bnabla)\oplus\Kc(\bnabla^*)$, therefore each $\bxi\in\Hs^2$ can be uniquely written as
$\bxi=\bzeta+\bgamma$ with $\bzeta\in\Rc(\bnabla)$, $\bgamma\in\Kc(\bnabla^*)$, and $\bzeta\perp\bgamma$. From equation \eqref{eq:curl_grad_div_Ran_Ker} we have $\bnabla\btimes\bnabla\btimes\bzeta=0$ as $\Rc(\bnabla)\subseteq\Kc(\bnabla\btimes)$, hence 
$\bnabla\btimes\bnabla\btimes\bxi=\bnabla\btimes\bnabla\btimes\bgamma$. 
Since $\bgamma\in\Hs^2$ and the operators $L_i$ commute on $\Hs^2_0$ \cite{Papanicolaou:RF-835}, 
a direct calculation shows that
\begin{align}\label{eq:curl_curl}
    \bnabla\btimes\bnabla\btimes\bgamma=\text{diag}(\bnabla^*\bnabla)\bgamma-\bnabla\bnabla^*\bgamma.
\end{align}
However, $\bgamma\in\Kc(\bnabla^*)$ implies that $\bnabla\bnabla^*\bgamma=0$. Therefore the 
action of the operator $\bnabla\btimes\bnabla\btimes$ with domain 
$\Dc(\bnabla\btimes\bnabla\btimes)=\Kc(\bnabla^*)$ is given by
$\bnabla\btimes\bnabla\btimes\bgamma=\text{diag}(\bnabla^*\bnabla)\bgamma$, for all 
$\bgamma\in\Kc(\bnabla^*)$. Since 
$\Rc(\bnabla\btimes\bnabla\btimes)\subseteq\Rc(\bnabla\btimes)\subseteq\Kc(\bnabla^*)$, 
it follows from Corollary \ref{cor:inverse_Laplacian} that $\bnabla\btimes\bnabla\btimes$ has a well defined inverse
$(\bnabla\btimes\bnabla\btimes)^{-1}$ satisfying
$(\bnabla\btimes\bnabla\btimes)^{-1}:\Rc(\bnabla\btimes\bnabla\btimes)\mapsto\Hs^2$.

Analogous to the proof of Theorem \ref{thm:self-adjoint_X1GammaX1}, the bounded symmetric operator $(\bnabla\btimes\bnabla\btimes)^{-1}$ can be considered to be a bounded positive self-adjoint operator on 
$\Hs^2$. Moreover, the operator $\Upsilon=\bnabla\btimes(\bnabla\btimes\bnabla\btimes)^{-1}\bnabla\btimes$ 
with $\Dc(\Upsilon)=\Hs^1$
is the orthogonal projection onto $\Rc(\bnabla\btimes)$ and satisfies $\Upsilon^*=\Upsilon$, 
$\Upsilon^2=\Upsilon$, $\|\Upsilon\|\le1$, and $\Upsilon:\Hs^1\mapsto\Hs^1$. More specifically, 
recalling that $\Rc(\bnabla\btimes\bnabla\btimes)\subseteq\Rc(\bnabla\btimes)\subseteq\Kc(\bnabla^*)$,
\begin{align}
    \bnabla\btimes:\Hs^1\mapsto\Rc(\bnabla\btimes)\cap\Hs,
    \quad
    (\bnabla\btimes\bnabla\btimes)^{-1}:\Rc(\bnabla\btimes\bnabla\btimes)\cap\Hs\mapsto\Hs^2,
    \quad
    \bnabla\btimes:\Hs^2\mapsto\Rc(\bnabla\btimes)\cap\Hs^1.
\end{align}
Moreover, the operator $X_1\Upsilon X_1$ satisfies 
$X_1\Upsilon X_1:\Hs^1\mapsto\Hs^1$. Both $\Upsilon$ and $X_1\Upsilon X_1$ can be considered to 
be positive self-adjoint operators with operator norms less that or equal to unity. Analogous to the proof of 
Theorem \ref{thm:self-adjoint_X1GammaX1}, the spectrum of the self-adjoint operator $X_1\Upsilon X_1$ is contained
in the interval $[0,1]$.
\end{proof}

The following corollary immediately follows from Theorem \ref{thm:self-adjoint_X1GammaX1} and Corollary \ref{cor:self-adjoint_X1UpsilonX1}.
\begin{corollary}
If all the projection operators $X_i$, $i=1,2$, defined below equation \eqref{eq:two-phase_eps} satisfy
$X_i:\Hs^1\mapsto\Hs^1$, then all the operators $X_i\Gamma X_i$ and $X_i\Upsilon X_i$, $i=1,2$, 
with domain defined to be $\Hs^2$ are self-adjoint.
\end{corollary}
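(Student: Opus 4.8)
The plan is to exploit the fact that the case $i=1$ has already been settled: self-adjointness of $X_1\Gamma X_1$ is the content of Theorem \ref{thm:self-adjoint_X1GammaX1}, and self-adjointness of $X_1\Upsilon X_1$ is the content of Corollary \ref{cor:self-adjoint_X1UpsilonX1}. Crucially, both of those results establish that $\Gamma$ and $\Upsilon$ are bounded orthogonal projections on $\Hs^1$ \emph{independently} of any choice of $X_i$; the operators $X_i$ enter only in the final composition step. The entire task therefore reduces to verifying that the $i=2$ operators can be handled by the same argument, with $X_1$ replaced by $X_2$.

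The first thing I would record is that $X_2$ carries exactly the algebraic structure used for $X_1$. From equation \eqref{eq:two-phase_eps} we have $X_2=R^{\,T}(I-C)R$ with $C=\text{diag}(\vece_1)$; since $I-C$ is a real-symmetric projection matrix and $R$ is a rotation with $R^TR=I$, a direct check gives $X_2^2=R^{\,T}(I-C)^2R=X_2$ and $X_2^T=R^{\,T}(I-C)^TR=X_2$. Hence $X_2$ is a real-symmetric projection matrix, precisely the property of $X_1$ invoked in the proof of Theorem \ref{thm:self-adjoint_X1GammaX1}. (Equivalently, this follows from $X_2=I-X_1$ together with $X_1^2=X_1=X_1^T$.)

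With this observation the composition argument runs verbatim. The hypothesis supplies $X_2:\Hs^1\mapsto\Hs^1$, so $X_2\Gamma X_2:\Hs^1\mapsto\Hs^1$ and $X_2\Upsilon X_2:\Hs^1\mapsto\Hs^1$, and boundedness $\|X_2\Gamma X_2\|\le1$, $\|X_2\Upsilon X_2\|\le1$ follows since $X_2$ is a projection and $\|\Gamma\|\le1$, $\|\Upsilon\|\le1$. Positivity and symmetry follow from the same identity as before,
\begin{align}
\langle X_2\Gamma X_2\vecxi,\vecxi\rangle=\langle\Gamma X_2\vecxi,\Gamma X_2\vecxi\rangle=\|\Gamma X_2\vecxi\|^2\ge0,
\end{align}
which uses only $\Gamma^*=\Gamma$, $\Gamma^2=\Gamma$, and $X_2^*=X_2$, and identically with $\Gamma$ replaced by $\Upsilon$. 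As in the cited results, a bounded positive symmetric operator admits a bounded positive self-adjoint extension \cite{Zoltzan:2023:03081087,Schmudgen:2012:2012942602}; since these operators are bounded, the declared domain $\Hs^2\subset\Hs^1$ is immaterial to the conclusion. Thus $X_2\Gamma X_2$ and $X_2\Upsilon X_2$ are self-adjoint, completing the case $i=2$.

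I do not anticipate a genuine obstacle: no step of the proofs of Theorem \ref{thm:self-adjoint_X1GammaX1} or Corollary \ref{cor:self-adjoint_X1UpsilonX1} uses anything about $X_1$ beyond its being a bounded real-symmetric idempotent mapping $\Hs^1$ into itself. The single point deserving care is therefore the verification that $X_2$ is itself a genuine real-symmetric projection, which is exactly where the decomposition $X_2=R^{\,T}(I-C)R$ (equivalently $X_1+X_2=I$) is used.
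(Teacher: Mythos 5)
Your proof is correct and follows exactly the route the paper intends: the paper states this corollary "immediately follows" from Theorem \ref{thm:self-adjoint_X1GammaX1} and Corollary \ref{cor:self-adjoint_X1UpsilonX1}, and your verification that $X_2=R^{\,T}(I-C)R=I-X_1$ is itself a bounded real-symmetric idempotent mapping $\Hs^1$ into itself is precisely the observation that makes the earlier arguments carry over verbatim. Your remark that boundedness makes the declared domain $\Hs^2\subset\Hs^1$ immaterial is a sensible handling of a point the paper leaves implicit.
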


We now prove a Theorem regarding the setting of a two-component composite material briefly
introduced in the introduction of Section \ref{sec:Analytic_properties}, which follows from the proofs of Theorem \ref{thm:self-adjoint_X1GammaX1} and Corollary \ref{cor:self-adjoint_X1UpsilonX1} and the remarkable 
symmetries between equations \eqref{eq:projection_decomp_sigma_multicomp} and 
\eqref{eq:projection_decomp_sigma_poly} for the two-component setting and uniaxial polycrystalline setting. 
\begin{theorem}
\label{thm:self-adjoint_two-component}
Consider a locally isotropic two-component composite material with local conductivity matrix given 
by $\bsigma(\vecx,\omega)=\sigma(\vecx,\omega)I$ and resistivity matrix 
$\brho(\vecx,\omega)=\rho(\vecx,\omega)I$, where $I$ 
denotes the identity operator on $\mathbb{R}^d$. The scalar valued functions $\sigma(\vecx,\omega)$ and 
$\rho(\vecx,\omega)$ are given by 
\begin{align}
    \sigma(\vecx,\omega)=\sigma_1\chi_1(\vecx,\omega)+\sigma_2\chi_2(\vecx,\omega),
    \quad
    \rho(\vecx,\omega)=\rho_1\chi_1(\vecx,\omega)+\rho_2\chi_2(\vecx,\omega).
\end{align}
Here, $\rho_i=1/\sigma_i$, $i=1,2$, and the characteristic function, $\chi_i$, takes the value 
$\chi_i(\vecx,\omega)=1$ when the material has component $i$ located at $\vecx$ for 
$\omega\in\Omega$ and $\chi_i(\vecx,\omega)=0$ otherwise and satisfies 
\begin{align}\label{eq:chi_projection}
    \chi_1^*=\chi_1\,,
    \quad
    \chi_i \chi_j=\chi_i\delta_{ij}\,,
    \quad
    \chi_1+\chi_2=1\,.
\end{align}
If all the projection operators $\chi_i$, $i=1,2$ satisfy
$\chi_i:\Hs^1\mapsto\Hs^1$, then all the operators $\chi_i\Gamma \chi_i$ and $\chi_i\Upsilon \chi_i$, $i=1,2$, 
with domain defined to be $\Hs^2$ are self-adjoint. 

Moreover, as the matrix-valued projection operators $X_i$, $i=1,2$, also satisfy the properties displayed in 
equation \eqref{eq:chi_projection}
\begin{align}\label{eq:X_projection}
    X_1^*=X_1\,,
    \quad
    X_i X_j=X_i\delta_{ij}\,,
    \quad
    X_1+X_2=I\,,
\end{align}
also assuming $X_i:\Hs^1\mapsto\Hs^1$, every theorem, lemma, and corollary stated in Sections \ref{sec:Analytic_properties} and  \ref{sec:Representation_formulas} involving the $X_i$ for the
uniaxial polycrystalline setting, also holds for the two-component setting with the simple exchange of $X_i$ 
with $\chi_i$, $i=1,2$. 
\end{theorem}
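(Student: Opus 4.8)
The plan is to recognize that the proofs of \thmref{thm:self-adjoint_X1GammaX1} and \corref{cor:self-adjoint_X1UpsilonX1} never invoked the polycrystalline origin of the matrix $X_1$ beyond three purely algebraic facts: that $X_1$ is self-adjoint on $\Hs$, that it is idempotent, and that it maps $\Hs^1$ into $\Hs^1$. From the first two, $X_1$ is an orthogonal projection, so its operator norm is at most unity automatically. I would first isolate these facts and check, step by step, that every ingredient of those earlier proofs depends only on them: in particular the positivity identity $\langle X_1\Gamma X_1\vecxi,\vecxi\rangle=\|\Gamma X_1\vecxi\|^2\ge0$, the norm bound $\|X_1\Gamma X_1\|\le1$, and the resulting spectral containment $\Sigma\subseteq[0,1]$ all use nothing about $X_1$ save $X_1^*=X_1$, $X_1^2=X_1$, and $\|X_1\|\le1$.

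Next I would verify that the scalar characteristic functions $\chi_i$, regarded as the operators $\chi_i I$ acting diagonally on $\Hs=\bigotimes_{i=1}^d\Hs_0$, carry exactly these properties. Self-adjointness $\chi_i^*=\chi_i$ is the first relation in \eqref{eq:chi_projection} (and is transparent since $\chi_i$ is real-valued), idempotency $\chi_i^2=\chi_i$ is the $i=j$ case of $\chi_i\chi_j=\chi_i\delta_{ij}$, and $\chi_i:\Hs^1\mapsto\Hs^1$ is hypothesized. Thus $\chi_i I$ is a self-adjoint projection on $\Hs$ of operator norm at most unity, precisely as $X_i$ was. Substituting $\chi_i$ for $X_i$ throughout the two earlier proofs then yields, verbatim, that $\chi_i\Gamma\chi_i$ and $\chi_i\Upsilon\chi_i$ are positive self-adjoint on $\Hs^1$ with spectrum in $[0,1]$; the complementarity $\chi_1+\chi_2=1$ plays the role that $X_1+X_2=I$ did in passing from $i=1$ to $i=2$.

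For the final, sweeping assertion that every result of Sections~\ref{sec:Analytic_properties} and~\ref{sec:Representation_formulas} transfers, I would argue that each such result invokes the $X_i$ only through the defining relations \eqref{eq:X_projection}, which are formally identical to \eqref{eq:chi_projection}. In the coercivity and analyticity arguments of \thmref{thm:Analytic_h}, for example, the sole property of $X_1$ used is that $\gamma=\langle X_1\vecxi\bcdot\vecxi\rangle/\|\vecxi\|^2$ obeys $0\le\gamma\le1$, which rests only on $X_1$ being an orthogonal projection; the identical computation with $\chi_1$ gives the same bound. Hence the assignment $X_i\mapsto\chi_i$ is an exact correspondence at the level of the algebraic structure driving every proof, and each theorem, lemma, and corollary persists under the substitution.

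I expect the only genuinely subtle point — the one I would state explicitly rather than gloss — to be the dimensional mismatch between the matrix-valued $X_i$ and the scalar-valued $\chi_i$. One must confirm that wherever a proof exploited the matrix identity $X_1^*=X_1^T$, the scalar analogue $\chi_1^*=\chi_1$ on $\Hs_0$, lifted diagonally to $\Hs$, serves identically, and that the scalar $\chi_i$ commutes past the vector-component bookkeeping inside $\Gamma$ and $\Upsilon$ exactly as $X_i$ did. Once this is verified, no step distinguishes the two settings and the correspondence is complete.
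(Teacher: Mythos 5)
Your proposal is correct and follows essentially the same route as the paper: both arguments observe that the earlier proofs use only the algebraic projection properties $X_1^*=X_1$, $X_iX_j=X_i\delta_{ij}$, $X_1+X_2=I$ together with $X_i:\Hs^1\mapsto\Hs^1$, note that the $\chi_i$ satisfy the identical relations, and conclude that every statement transfers under the substitution $X_i\mapsto\chi_i$. Your write-up is in fact somewhat more careful than the paper's, which does not explicitly address the scalar-versus-matrix point you flag at the end.
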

\begin{proof}
The only assumptions made in the proofs of Theorem \ref{thm:self-adjoint_X1GammaX1} and Corollary 
\ref{cor:self-adjoint_X1UpsilonX1} regarding the projection operators $X_i$, $i=1,2$, are those given 
in equation \eqref{eq:X_projection} --- along with $X_i:\Hs^1\mapsto\Hs^1$. Since the projection operators 
$\chi_i$, $i=1,2$ satisfy the analogous properties shown in equation \eqref{eq:chi_projection}, 
if all the projection operators $\chi_i$, $i=1,2$ satisfy $\chi_i:\Hs^1\mapsto\Hs^1$,
it follows that all the operators $\chi_i\Gamma \chi_i$ and $\chi_i\Upsilon \chi_i$, $i=1,2$, 
with domain defined to be $\Hs^2$ are self-adjoint. From the symmetries given in equations \eqref{eq:projection_decomp_sigma_multicomp} and 
\eqref{eq:projection_decomp_sigma_poly}, it follows that every theorem, lemma, and corollary stated in Sections 
\ref{sec:Analytic_properties} and  \ref{sec:Representation_formulas} involving the $X_i$ for the uniaxial polycrystalline setting also 
holds for the two-component setting with the simple exchange of $X_i$ with $\chi_i$, $i=1,2$.

\end{proof}

We now show that the operators $\Gamma$ and $\Upsilon$ are mutually orthogonal projections and together
provide a resolution of the identity.
\begin{theorem}\label{thm:resolution_identity}
On $\Hs^2$ the orthogonal projections $\Gamma$ and $\Upsilon$ are mutually orthogonal operators 
satisfying the following resolution of the identity formula
\begin{align}\label{eq:mutually_orthogonal}
    \Gamma+\Upsilon=I.
\end{align}
More specifically, let $\Dc(\bnabla)=\Hs^2$ and $\Dc(\bnabla\btimes)=\Hs^2$ then
\begin{align}\label{eq:orthogonal_decomp_grad_curl}
    \Hs^2=\Rc(\bnabla)\oplus\Rc(\bnabla\btimes)\,,
\end{align}
and equation \eqref{eq:curl_grad_div_Ran_Ker} can be refined to
\begin{align}\label{eq:curl_grad_div_Ran_Ker_equality}
    \Rc(\bnabla\btimes)=\Kc(\bnabla^*),
    \quad
    \Rc(\bnabla)=\Kc(\bnabla\btimes).
\end{align}
\end{theorem}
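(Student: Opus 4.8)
The plan is to derive all four assertions from a single Helmholtz-type orthogonal splitting of $\Hs^2$, using the kernel/range relations already recorded in \eqref{eq:curl_grad_div_Ran_Ker} and \eqref{eq:kernel_range}, the curl--curl identity \eqref{eq:curl_curl}, and the triviality of $\Kc(\bnabla^*\bnabla)$ from Corollary \ref{eq:Laplacian_trivial_kernel}. First I would record mutual orthogonality of the two ranges. Combining the inclusion $\Rc(\bnabla)\subseteq\Kc(\bnabla\btimes)$ from \eqref{eq:curl_grad_div_Ran_Ker} with the identity $\Kc(\bnabla\btimes)=\Rc(\bnabla\btimes)^\perp$ from \eqref{eq:kernel_range} gives $\Rc(\bnabla)\perp\Rc(\bnabla\btimes)$. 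Since $\Gamma$ and $\Upsilon$ are the orthogonal projections onto these two closed, mutually orthogonal subspaces (by Theorem \ref{thm:self-adjoint_X1GammaX1} and Corollary \ref{cor:self-adjoint_X1UpsilonX1}), it follows at once that $\Gamma\Upsilon=\Upsilon\Gamma=0$, establishing that they are mutually orthogonal projections.

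The heart of the argument is to show that the orthogonal sum exhausts $\Hs^2$, which I would prove by showing its orthogonal complement is trivial. Both $\Rc(\bnabla)$ and $\Rc(\bnabla\btimes)$ are closed (they are ranges of the orthogonal projections $\Gamma$ and $\Upsilon$), so $\Rc(\bnabla)\oplus\Rc(\bnabla\btimes)$ is a closed subspace of $\Hs^2$, and by \eqref{eq:kernel_range} its orthogonal complement is $\Rc(\bnabla)^\perp\cap\Rc(\bnabla\btimes)^\perp=\Kc(\bnabla^*)\cap\Kc(\bnabla\btimes)$. Let $\bxi$ belong to this intersection; the restriction $\Dc(\bnabla\btimes)=\Hs^2$ ensures the second-order operators below are defined. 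Then $\bnabla\btimes\bxi=0$ and $\bnabla^*\bxi=0$, whence $\bnabla\btimes\bnabla\btimes\bxi=0$ and $\bnabla\bnabla^*\bxi=0$; substituting into the curl--curl identity \eqref{eq:curl_curl} forces $\mathrm{diag}(\bnabla^*\bnabla)\bxi=0$, i.e.\ each scalar component $\xi_i\in\Hs^2_0$ satisfies $\bnabla^*\bnabla\xi_i=0$. By Corollary \ref{eq:Laplacian_trivial_kernel} the kernel of $\bnabla^*\bnabla$ is trivial, so $\xi_i=0$ for every $i$ and thus $\bxi=0$. Hence $\Kc(\bnabla^*)\cap\Kc(\bnabla\btimes)=\{0\}$, and a closed subspace with trivial orthogonal complement equals the whole space, yielding the decomposition \eqref{eq:orthogonal_decomp_grad_curl}.

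Given \eqref{eq:orthogonal_decomp_grad_curl}, the remaining claims are immediate. Every $\bxi\in\Hs^2$ decomposes uniquely as its $\Rc(\bnabla)$-component plus its $\Rc(\bnabla\btimes)$-component, which by Lemma \ref{lem:orthogonal} are precisely $\Gamma\bxi$ and $\Upsilon\bxi$; therefore $(\Gamma+\Upsilon)\bxi=\bxi$, giving the resolution of the identity \eqref{eq:mutually_orthogonal}. For the refined equalities \eqref{eq:curl_grad_div_Ran_Ker_equality}, the orthogonal splitting gives $\Rc(\bnabla)^\perp=\Rc(\bnabla\btimes)$ and $\Rc(\bnabla\btimes)^\perp=\Rc(\bnabla)$ inside $\Hs^2$; comparing with $\Kc(\bnabla^*)=\Rc(\bnabla)^\perp$ and $\Kc(\bnabla\btimes)=\Rc(\bnabla\btimes)^\perp$ from \eqref{eq:kernel_range} upgrades the inclusions of \eqref{eq:curl_grad_div_Ran_Ker} to the equalities $\Rc(\bnabla\btimes)=\Kc(\bnabla^*)$ and $\Rc(\bnabla)=\Kc(\bnabla\btimes)$.

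The main obstacle is the triviality of $\Kc(\bnabla^*)\cap\Kc(\bnabla\btimes)$, and specifically the correct invocation of the curl--curl identity \eqref{eq:curl_curl} on the full space $\Hs^2$ (rather than only on $\Kc(\bnabla^*)$, where it was used in Corollary \ref{cor:self-adjoint_X1UpsilonX1}); this is exactly why the theorem is phrased on $\Hs^2$, since only there are $\bnabla\btimes\bnabla\btimes$, $\bnabla\bnabla^*$, and $\mathrm{diag}(\bnabla^*\bnabla)$ simultaneously defined. A secondary point requiring care is the ambient space in which orthogonal complements are taken, and I would make sure the relations \eqref{eq:kernel_range} are applied with complements in $\Hs^2$ consistently. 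Finally, the computation above is the $3$D one; in $2$D the curl is scalar-valued and the divergence-free subspace is the range of $(\bnabla\btimes)^*$, but the identical kernel/range bookkeeping and Laplacian-kernel argument give the same splitting and hence the same conclusions mutatis mutandis.
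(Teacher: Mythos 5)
Your proposal is correct and follows essentially the same route as the paper's proof: mutual orthogonality from the inclusions \eqref{eq:curl_grad_div_Ran_Ker} together with \eqref{eq:kernel_range}, triviality of the residual subspace $\Kc(\bnabla^*)\cap\Kc(\bnabla\btimes)$ (the paper's $\Rc_0$) via the curl--curl identity \eqref{eq:curl_curl} and Corollary \ref{eq:Laplacian_trivial_kernel}, and then the resolution of the identity and the refined equalities as immediate consequences. The only cosmetic difference is that the paper passes through the quadratic form $\langle\mathrm{diag}(\bnabla^*\bnabla)\bgamma,\bgamma\rangle=0$ componentwise, whereas you invoke the operator identity and the trivial kernel directly.
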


\begin{proof}
From equation \eqref{eq:curl_grad_div_Ran_Ker} we have
\begin{align}
    \Upsilon\Gamma\bxi=\Gamma\Upsilon\bxi=0,
\end{align}
for all $\bxi\in\Hs^2$, i.e., $\Upsilon$ and $\Gamma$ are mutually orthogonal projection operators. 	
By Theorem \ref{thm:self-adjoint_X1GammaX1}, $\Gamma$ is a projection onto $\Rc(\bnabla)$. Therefore,
by Lemma \ref{lem:orthogonal} and equation \eqref{eq:kernel_range}, we have 
$\Hs^1=\Rc(\bnabla)\oplus\Kc(\bnabla^*)$. By equation \eqref{eq:curl_grad_div_Ran_Ker} we have
$\Rc(\bnabla\btimes)\subseteq\Kc(\bnabla^*)$. By Corollary \ref{cor:self-adjoint_X1UpsilonX1}, $\Upsilon$
is a projection onto $\Rc(\bnabla\btimes)$. Therefore by Lemma \ref{lem:orthogonal},
$\Kc(\bnabla^*)=\Rc(\bnabla\btimes)\oplus\Rc(\bnabla\btimes)^\perp$. 
Combining everything, we have
\begin{align}
    \Hs^2=\Rc(\bnabla)\oplus\Rc(\bnabla\btimes)\oplus\Rc_0\,,
\end{align}
where we have denoted $\Rc_0=\Rc(\bnabla\btimes)^\perp$ to emphasize that $\Rc_0$ is mutually orthogonal  
to both $\Rc(\bnabla)$ and $\Rc(\bnabla\btimes)$.

Let $\bgamma\in\Rc_0$ so that 
$\bgamma\in\Kc(\bnabla^*)$, $\bgamma\in\Rc(\bnabla)^\perp$ and $\bgamma\in\Rc(\bnabla\btimes)^\perp$.
Therefore, from equation \eqref{eq:curl_curl}, if $\bgamma\in\Hs^2$,
\begin{align}\label{eq:curl_curl_resolution_identity}
    \langle\text{diag}(\bnabla^*\bnabla)\bgamma,\bgamma\rangle=
    \langle\bnabla\btimes\bnabla\btimes\bgamma,\bgamma\rangle+
    \langle\bnabla\bnabla^*\bgamma,\bgamma\rangle=0.
\end{align}
Therefore, for each component $\gamma_i$, $i=1,\ldots,d$, of the vector $\bgamma$, we have
\begin{align}
    0=
    \langle\bnabla^*\bnabla\gamma_i,\gamma_i\rangle=
    \langle\bnabla\gamma_i,\bnabla\gamma_i\rangle=
    \|\bnabla\gamma_i\|^2.
\end{align}
Consequently, from Corollary \ref{eq:Laplacian_trivial_kernel}, $\gamma_i=0$ for all $i=1,\ldots,d$, hence 
$\Rc_0=\{0\}$.

It follows that equation \eqref{eq:orthogonal_decomp_grad_curl} holds	
which implies equation \eqref{eq:mutually_orthogonal}. Moreover, this analysis demonstrates that when we 
define $\Dc(\bnabla)=\Hs^2$ and $\Dc(\bnabla\btimes)=\Hs^2$, then equation \eqref{eq:curl_grad_div_Ran_Ker}
can be refined to equation \eqref{eq:curl_grad_div_Ran_Ker_equality}.
\end{proof}

We conclude this section with the following theorem which will be used
in Section \ref{sec:Resolvent_Formulas} to provide resolvent representations for $X_1\vecE$ and $X_1\vecJ$
involving the self-adjoint operators $X_1\Gamma X_1$ and $X_1\Upsilon X_1$, respectively, etc.
\begin{theorem}
\label{thm:H2_decomp_Hx_H.}	
Let $\Dc(\bnabla)=\Hs^2$ and $\Dc(\bnabla\btimes)=\Hs^2$ and re-define the Hilbert spaces in 
equation \eqref{eq:curlfreeHilbert} as
\begin{align}\label{eq:curlfreeHilbert_H2}
    \Hs_\times=\Hs^2\cap\Kc(\bnabla\btimes),
    \quad
    \Hs_\bullet=\Hs^2\cap\Kc(\bnabla^*).
\end{align}  
Then Theorem \ref{thm:existenceE} and Corollary \ref{cor:existenceJ} still hold and 
$\Hs_\times=\Rc(\bnabla)$ and  $\Hs_\bullet=\Rc(\bnabla\btimes)$, hence
\begin{align}\label{eq:DecompH2_times_bullet}
    \Hs^2=\Hs_\times\oplus\Hs_\bullet\,.
\end{align}
Moreover, $\Gamma$ is the orthogonal projection onto 
$\Hs_\times$ and the electric field $\vecE$ in Theorem 
\ref{thm:existenceE} satisfies 
\begin{align}\label{eq:E_Gamma}
    \vecE=\vecE_0+\vecE_f\,,
    \quad 
    \Gamma\vecE_0=0,
    \quad
    \Gamma\vecE_f=\vecE_f\,.
\end{align}
On the other hand, $\Upsilon $ is the orthogonal 
projection onto $\Hs_\bullet$ and the current field $\vecJ$ in Corollary 
\ref{cor:existenceJ} satisfies 
\begin{align}\label{eq:J_Upsilon}
    \vecJ=\vecJ_0+\vecJ_f\,,
    \quad
    \Upsilon\vecJ_0=0,
    \quad
    \Upsilon\vecJ_f=\vecJ_f\,.
\end{align}
\end{theorem}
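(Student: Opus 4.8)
The plan is to obtain this theorem as a direct consequence of the Helmholtz-type decomposition established in Theorem \ref{thm:resolution_identity}, together with the projection identifications from Theorem \ref{thm:self-adjoint_X1GammaX1} and Corollary \ref{cor:self-adjoint_X1UpsilonX1}. First I would substitute the refined kernel--range identities of equation \eqref{eq:curl_grad_div_Ran_Ker_equality}, namely $\Kc(\bnabla\btimes)=\Rc(\bnabla)$ and $\Kc(\bnabla^*)=\Rc(\bnabla\btimes)$, into the redefinitions \eqref{eq:curlfreeHilbert_H2}, giving $\Hs_\times=\Hs^2\cap\Rc(\bnabla)$ and $\Hs_\bullet=\Hs^2\cap\Rc(\bnabla\btimes)$. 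Because equation \eqref{eq:orthogonal_decomp_grad_curl} already places both $\Rc(\bnabla)$ and $\Rc(\bnabla\btimes)$ inside $\Hs^2$, these intersections collapse to $\Hs_\times=\Rc(\bnabla)$ and $\Hs_\bullet=\Rc(\bnabla\btimes)$, and the splitting \eqref{eq:DecompH2_times_bullet} is then nothing more than a restatement of \eqref{eq:orthogonal_decomp_grad_curl}.

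The projection assertions then follow immediately: by Theorem \ref{thm:self-adjoint_X1GammaX1} the operator $\Gamma$ is the orthogonal projection onto $\Rc(\bnabla)=\Hs_\times$, and by Corollary \ref{cor:self-adjoint_X1UpsilonX1} the operator $\Upsilon$ is the orthogonal projection onto $\Rc(\bnabla\btimes)=\Hs_\bullet$. To verify \eqref{eq:E_Gamma} I would use that $\vecE_f\in\Hs_\times=\Rc(\bnabla)$ lies in the range of $\Gamma$, so $\Gamma\vecE_f=\vecE_f$, while the constant field $\vecE_0$ satisfies $\bnabla^*\vecE_0=-\sum_{i}L_i(\vecE_0)_i=0$ since each generator $L_i$ annihilates constants; hence $\Gamma\vecE_0=\bnabla(\bnabla^*\bnabla)^{-1}\bnabla^*\vecE_0=0$. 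The identities \eqref{eq:J_Upsilon} are the mirror image: $\vecJ_f\in\Hs_\bullet=\Rc(\bnabla\btimes)$ gives $\Upsilon\vecJ_f=\vecJ_f$, whereas $\bnabla\btimes\vecJ_0=0$ for the constant field $\vecJ_0$, combined with the symmetry $(\bnabla\btimes)^*=\bnabla\btimes$, forces $\vecJ_0\perp\Rc(\bnabla\btimes)$ and therefore $\Upsilon\vecJ_0=0$.

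The step requiring the most care, which I expect to be the principal obstacle, is the assertion that Theorem \ref{thm:existenceE} and Corollary \ref{cor:existenceJ} survive the redefinition of the function spaces in \eqref{eq:curlfreeHilbert_H2}. The spaces in \eqref{eq:curlfreeHilbert} were obtained as $\Hs$-norm completions of the mean-zero curl-free and divergence-free fields, while the redefined spaces sit inside $\Hs^2$; one must check that the variational solution is consistent with the smaller space. I would reconcile the two by observing that the unique solution $\vecE_f$ produced by the Lax--Milgram argument is, by its construction in Theorem \ref{thm:existenceE}, both mean-zero and weakly curl-free, so that $\vecE_f\in\Kc(\bnabla\btimes)$; invoking $\Kc(\bnabla\btimes)=\Rc(\bnabla)$ from \eqref{eq:curl_grad_div_Ran_Ker_equality} then places $\vecE_f$ in the redefined $\Hs_\times=\Rc(\bnabla)$. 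Since $\Hs_\times=\Rc(\bnabla)$ is a closed subspace of $\Hs^2$, being an orthogonal summand in \eqref{eq:orthogonal_decomp_grad_curl}, and the bilinear functional $\Phi$ of \eqref{eq:Bilinear_functional_E} remains bounded and, for $h\notin(-\infty,0]$, coercive on it by Theorem \ref{thm:Analytic_h}, the Lax--Milgram theorem applies verbatim on the redefined space and produces the same unique $\vecE_f$. The argument for Corollary \ref{cor:existenceJ}, with $\Hs_\bullet=\Rc(\bnabla\btimes)$ and the current field $\vecJ_f$, is entirely analogous.
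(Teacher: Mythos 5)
Your proposal is correct and follows essentially the same route as the paper's proof: identify $\Hs_\times=\Rc(\bnabla)$ and $\Hs_\bullet=\Rc(\bnabla\btimes)$ via \eqref{eq:curl_grad_div_Ran_Ker_equality}, read off \eqref{eq:DecompH2_times_bullet} from \eqref{eq:orthogonal_decomp_grad_curl}, invoke the projection characterizations of $\Gamma$ and $\Upsilon$, and note that the variational existence arguments carry over to the redefined spaces. Your treatment of the last point (explicitly re-running Lax--Milgram on the closed subspace and computing $\Gamma\vecE_0=0$ from $\bnabla^*\vecE_0=0$) is somewhat more explicit than the paper's, which simply asserts that the proofs only use the spaces implicitly through the weak formulation.
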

\begin{proof}
The proofs of Theorem \ref{thm:existenceE} and Corollary \ref{cor:existenceJ} only take in account the 
Hilbert spaces $\Hs_\times$ and $\Hs_\bullet$ defined in equation \eqref{eq:curlfreeHilbert} \emph{implicitly} through the statement of the weak 
variational formulation of Maxwell's equations in \eqref{eq:Weak_Alternate_Maxwell_E} and 
\eqref{eq:Weak_Curl_Free_Variational_Form_E} involving $\bsigma$ and $\vecE$, and the analogous weak 
variational problem involving $\brho$ and $\vecJ$. Therefore, the proofs of Theorem \ref{thm:existenceE} and 
Corollary \ref{cor:existenceJ} still hold by instead using the definition of the 
Hilbert spaces $\Hs_\times$ and $\Hs_\bullet$ in equation \eqref{eq:curlfreeHilbert_H2}, which simply include
additional regularity by using the base Hilbert space $\Hs^2$ instead of $\Hs$. 

Equation \eqref{eq:DecompH2_times_bullet} follows from equations \eqref{eq:orthogonal_decomp_grad_curl}, 
\eqref{eq:curl_grad_div_Ran_Ker_equality}, and \eqref{eq:curlfreeHilbert_H2}. In particular, when
$\Dc(\bnabla)=\Hs^2$ and $\Dc(\bnabla\btimes)=\Hs^2$ we have $\Hs_\times=\Rc(\bnabla)$ and  $\Hs_\bullet=\Rc(\bnabla\btimes)$. It follows that, $\Gamma$ with 
domain $\Dc(\Gamma)=\Hs^2$ is the orthogonal projection onto $\Hs_\times$ and  $\Upsilon$ with 
domain $\Dc(\Upsilon)=\Hs^2$ is the orthogonal projection onto $\Hs_\bullet\,$. Therefore, from Theorem \ref{thm:existenceE} we have $\vecE=\vecE_0+\vecE_f$, where $\vecE_0$ is a constant and 
$\vecE_f\in\Hs_\times$, so $\Gamma\vecE_0=0$ and $\Gamma\vecE_f=\vecE_f$, hence 
$\Gamma\vecE=\vecE_f$. This establishes equation \eqref{eq:E_Gamma}. Similarly, from Corollary \ref{cor:existenceJ} we have $\vecJ=\vecJ_0+\vecJ_f$, where $\vecJ_0$ is a constant and 
$\vecJ_f\in\Hs_\bullet$, so $\Upsilon\vecJ_0=0$ and $\Upsilon\vecJ_f=\vecJ_f$, hence 
$\Upsilon\vecJ=\vecJ_f$. This establishes equation \eqref{eq:J_Upsilon}.
\end{proof}

\subsection{Resolvent formulas}
\label{sec:Resolvent_Formulas}

Define the complex material contrast variables $s=1/(1-h)$ and $t=1/(1-z)=1-s$, and use equations
\eqref{eq:projection_decomp_sigma_poly} and \eqref{eq:two-phase_eps} to write
$\bsigma$ and $\brho$ in~\eqref{eq:two-phase_eps} as
%
\begin{align}\label{eq:sig_rho_s_t}
\bsigma=\sigma_2(I-X_1/s)=\sigma_1(I-X_2/t),
\qquad
\brho=\sigma_2^{-1}(I-X_1/t)=\sigma_1^{-1}(I-X_2/s).
\end{align}
Equations~\eqref{eq:avg_J} and \eqref{eq:eps*_rho*},
imply the components 
$\sigma^*_{jk}=\bsigma^*\vece _j\bcdot\vece _k$ and
$\rho^*_{jk}=\brho^*\vece _j\bcdot\vece _k$
of the matrices $\bsigma^*$ and $\brho^*$ have the following functional 
forms 
\begin{align}\label{eq:Eff_Cond_Tens_Def}
\sigma^*_{jk}
&=\sigma_2(\delta_{jk} -\langle X_1\vecE\bcdot\vece_k \rangle/(sE_0))
=\sigma_1(\delta_{jk} -\langle X_2\vecE\bcdot\vece_k \rangle/(tE_0)),
\\
\rho^*_{jk}
&=\sigma_2^{-1}(\delta_{jk} -\langle X_1\vecJ\bcdot\vece_k\rangle/(tJ_0))
=\sigma_1^{-1}(\delta_{jk} -\langle X_2\vecJ\bcdot\vece_k\rangle/(sJ_0)),
\notag
\end{align}
which are functions of the material \emph{contrast parameters} $h$ and $z$.
We define the matrix-valued functions $\mathbf{m}(h)=\bsigma^*/\sigma_2$,
$\mathbf{w}(z)=\bsigma^*/\sigma_1$, $\tilde{\mathbf{m}}(h)=\sigma_1\brho^*$, and
$\tilde{\mathbf{w}}(z)=\sigma_2\brho^*$ with components  
\begin{align}\label{eq:m_h}
m_{jk}(h)=\sigma_{jk}^*/\sigma_2, 
\quad
w_{jk}(z)=\sigma_{jk}^*/\sigma_1,
\quad
\tilde{m}_{jk}(h)=\sigma_1\rho_{jk}^*, 
\quad
\tilde{w}_{jk}(z)=\sigma_2\rho_{jk}^*,
\end{align}
The above analysis~\cite{Golden:CMP-473} demonstrates that the
dimensionless functions $m_{jk}(h)$ and $\tilde{m}_{jk}(h)$ in
\eqref{eq:m_h} are analytic off the negative real axis in the
$h$-plane, while $w_{jk}(z)$ and $\tilde{w}_{jk}(z)$ are analytic off
the negative real axis in the $z$-plane. By equations
\eqref{eq:eps*_rho*}, each
take the corresponding upper half plane to the upper half plane and
are therefore examples of Herglotz functions
\cite{Deift:2000:RMT,Golden:CMP-473}.

We now derive a resolvent representation for the field $X_1\vecE$ appearing in the linear functional 
$\langle X_1\vecE\bcdot\vece_k \rangle$ in equation \eqref{eq:Eff_Cond_Tens_Def}.
\begin{lemma}\label{lem:resolvent_X1E}
For a uniaxial composite material, the field $X_1\vecE$ has the following resolvent representation involving 
the self-adjoint operator
$X_1\Gamma X_1$.
\begin{align}\label{eq:Resolvent_representations_X1E}
    X_1\vecE =s(sI-X_1\Gamma X_1)^{-1}X_1\vecE _0 ,
    \quad
    s\in\mathbb{C}\backslash[0,1].
\end{align}
\end{lemma}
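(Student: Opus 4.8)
The plan is to convert the weak variational characterization of the electric field into an operator identity built from the range projection $\Gamma$, and then to strip off a factor of $X_1$ so that the operator being inverted is exactly $X_1\Gamma X_1$. All of the required ingredients are already available: \thmref{thm:H2_decomp_Hx_H.} gives the decomposition $\vecE=\vecE_0+\vecE_f$ with $\Gamma$ the orthogonal projection onto $\Hs_\times$, so that $\Gamma\vecE=\vecE_f$ and $\Gamma\vecE_0=0$ as recorded in \eqref{eq:E_Gamma}; the resolvent form $\bsigma=\sigma_2(I-X_1/s)$ of the conductivity is supplied by \eqref{eq:sig_rho_s_t}; and \thmref{thm:self-adjoint_X1GammaX1} guarantees that $X_1\Gamma X_1$ is self-adjoint with spectrum $\Sigma\subseteq[0,1]$, which is precisely what licenses the inversion of $sI-X_1\Gamma X_1$ for $s$ off $[0,1]$.

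First I would substitute $\bsigma=\sigma_2(I-X_1/s)$ into the weak equation \eqref{eq:Weak_Curl_Free_Variational_Form_E}, $\langle\bsigma\vecE\bcdot\vecxi\rangle=0$ for all $\vecxi\in\Hs_\times$, and cancel $\sigma_2$ to obtain $\langle\vecE\bcdot\vecxi\rangle=s^{-1}\langle X_1\vecE\bcdot\vecxi\rangle$. Since every test field $\vecxi\in\Hs_\times$ has mean zero, the constant part contributes $\langle\vecE_0\bcdot\vecxi\rangle=\vecE_0\bcdot\overline{\langle\vecxi\rangle}=0$, so the left-hand side collapses to $\langle\vecE_f\bcdot\vecxi\rangle$. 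On the right I would use $\vecxi=\Gamma\vecxi$ together with self-adjointness of $\Gamma$ to rewrite $\langle X_1\vecE\bcdot\vecxi\rangle=\langle\Gamma X_1\vecE\bcdot\vecxi\rangle$. Both $\vecE_f$ and $\Gamma X_1\vecE$ lie in $\Hs_\times=\Rc(\bnabla)$, so $\langle(\vecE_f-s^{-1}\Gamma X_1\vecE)\bcdot\vecxi\rangle=0$ for all $\vecxi\in\Hs_\times$ forces the enclosed element of $\Hs_\times$ to vanish, giving $\vecE_f=s^{-1}\Gamma X_1\vecE$. Applying $X_1$, and using $X_1\Gamma X_1\vecE=X_1\Gamma X_1(X_1\vecE)$ because $X_1^2=X_1$, I would add $X_1\vecE_0$ to reach $X_1\vecE=X_1\vecE_0+s^{-1}X_1\Gamma X_1(X_1\vecE)$; multiplying by $s$ and rearranging produces the closed identity $(sI-X_1\Gamma X_1)(X_1\vecE)=s\,X_1\vecE_0$.

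To finish I would invoke \thmref{thm:self-adjoint_X1GammaX1}: since $X_1\Gamma X_1$ is self-adjoint with $\Sigma\subseteq[0,1]$, its resolvent $(sI-X_1\Gamma X_1)^{-1}$ exists as a bounded operator for every $s\in\mathbb{C}\backslash[0,1]$, and applying it to both sides of the closed identity yields \eqref{eq:Resolvent_representations_X1E}. The boundedness and resolvent-set estimates invoked here are the standard ones for bounded self-adjoint operators and require no extra work.

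I expect the main obstacle to be the functional-analytic bookkeeping rather than the algebra. The components of $X_1$ are discontinuous in $\vecx$, so $X_1\vecE_0$ need not inherit the mean-square differentiability that defines $\Hs^1$; the clean way around this is to regard $\Gamma$ and $X_1\Gamma X_1$ as bounded (operator norm at most $1$) self-adjoint operators on the $L^2$-completion $\Hs$, in which $X_1\vecE$ and $X_1\vecE_0$ both sit and on which the resolvent acts. The one genuinely substantive step is the passage from variational orthogonality to the operator relation $\vecE_f=s^{-1}\Gamma X_1\vecE$: it rests on recognizing both sides as elements of the closed subspace $\Hs_\times$ and on the fact, established in \thmref{thm:H2_decomp_Hx_H.}, that $\Gamma$ is genuinely the orthogonal projection onto $\Rc(\bnabla)=\Hs_\times$, so that a member of $\Hs_\times$ orthogonal to all of $\Hs_\times$ must vanish.
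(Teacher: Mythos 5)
Your proposal is correct and follows essentially the same route as the paper: both arguments reduce to the key identity $\vecE_f=s^{-1}\Gamma X_1\vecE$, then apply $X_1$, use $X_1^2=X_1$, and invert $sI-X_1\Gamma X_1$ via \thmref{thm:self-adjoint_X1GammaX1}. The only (cosmetic) difference is that you extract that identity directly from the weak variational equation \eqref{eq:Weak_Curl_Free_Variational_Form_E} and the fact that $\Gamma$ projects onto $\Hs_\times$, whereas the paper packages the same orthogonality as $\Gamma\vecJ=0$ using Corollary \ref{cor:existenceJ}, \thmref{thm:resolution_identity}, and \thmref{thm:H2_decomp_Hx_H.}; the two statements are equivalent.
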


\begin{proof}
From Theorem \ref{thm:existenceE}, the electric field satisfies $\vecE =\vecE _0+\vecE _f$, where $\vecE _0$ is
constant and $\vecE _f\in\Hs_\times$. From Theorem \ref{thm:H2_decomp_Hx_H.} we have $\Gamma\vecE_0=0$
and $\Gamma\vecE_f=\vecE_f$, hence $\Gamma\vecE=\vecE_f$. From Corollary \ref{cor:existenceJ}, the 
current field satisfies $\vecJ =\vecJ _0+\vecJ _f$, where $\vecJ _0$ is constant and $\vecJ _f\in\Hs_\bullet$. 
From Theorem \ref{thm:H2_decomp_Hx_H.} we have $\Upsilon\vecJ_0=0$
and $\Upsilon\vecJ_f=\vecJ_f$, hence $\Upsilon\vecJ=\vecJ_f$. From Theorem \ref{thm:resolution_identity} we 
have $\Gamma\vecJ_f=0$ hence $\Gamma\vecJ=0$, which together with $\vecJ=\bsigma\vecE$ in \eqref{eq:Maxwells_Equations_E0} and equation \eqref{eq:sig_rho_s_t} yield
\begin{align}\label{eq:Proj_rep_Ef}
    \vecE _f=\frac{1}{s}\Gamma X_1\vecE. 
\end{align}
Adding $\vecE_0$ to both sides of \eqref{eq:Proj_rep_Ef} then applying $X_1$ to the left of both sides, using 
$X_1^2=X_1$, and rearranging establishes equation \eqref{eq:Resolvent_representations_X1E}, 
which holds \cite{Schmudgen:2012:2012942602,Stone:64} for $s\in\mathbb{C}$ not in the 
spectrum of the operator $X_1\Gamma X_1$, which by Theorem \ref{thm:self-adjoint_X1GammaX1} is contained 
in the interval $[0,1]$.
\end{proof}

The proofs of the following two corollaries are analogous to the proof of Lemma \ref{lem:resolvent_X1E}.
\begin{corollary}
\label{cor:resolvent_X2E_X1J_X2J}
For a uniaxial composite material, the fields $X_i\vecE$ and $X_i\vecJ$, $i=1,2$, have the following 
resolvent representations involving the self-adjoint operators $X_i\Gamma X_i$ and $X_i\Upsilon X_i$, 
$i=1,2$.
\begin{align}\label{eq:Resolvent_representations_X}
    &X_1\vecE =s(sI-X_1\Gamma X_1)^{-1}X_1\vecE _0 ,
    \quad
    s\in\mathbb{C}\backslash[0,1].
    \\\notag
    &X_2\vecE =t(tI-X_2\Gamma X_2)^{-1}X_2\vecE _0 ,
    \quad
    t\in\mathbb{C}\backslash[0,1].
    \\\notag
    &X_1\vecJ=t(tI-X_1\Upsilon X_1)^{-1}X_1\vecJ_0 ,
    \quad
    t\in\mathbb{C}\backslash[0,1].
    \\\notag
    &X_2\vecJ=s(sI-X_2\Upsilon X_2)^{-1}X_2\vecJ_0 ,
    \quad
    s\in\mathbb{C}\backslash[0,1].
    \\\notag
\end{align}
\end{corollary}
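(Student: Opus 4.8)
The plan is to derive all four formulas in \eqref{eq:Resolvent_representations_X} by repeating the three-step argument of Lemma \ref{lem:resolvent_X1E}, changing only which factorization of $\bsigma$ or $\brho$ from \eqref{eq:sig_rho_s_t} is invoked and which projection $X_i$ is applied on the left. The first line is exactly Lemma \ref{lem:resolvent_X1E}. For the second line I would retain the relation $\Gamma\vecJ=0$ supplied by Theorem \ref{thm:resolution_identity} but substitute $\bsigma=\sigma_1(I-X_2/t)$: expanding $0=\Gamma\vecJ=\Gamma\bsigma\vecE=\sigma_1(\Gamma\vecE-t^{-1}\Gamma X_2\vecE)$ gives $\vecE_f=\Gamma\vecE=t^{-1}\Gamma X_2\vecE$, and then adding $\vecE_0$, applying $X_2$, using $X_2^2=X_2$, and rearranging produces the resolvent formula with operator $X_2\Gamma X_2$ and spectral parameter $t$.

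For the two current-field formulas the key step, dual to the relation $\Gamma\vecJ=0$ used in Lemma \ref{lem:resolvent_X1E}, is to establish $\Upsilon\vecE=0$. This follows symmetrically: writing $\vecE=\vecE_0+\vecE_f$ from Theorem \ref{thm:existenceE}, the fluctuation $\vecE_f\in\Hs_\times=\Rc(\bnabla)$ is annihilated by $\Upsilon$ because $\Upsilon$ projects onto the mutually orthogonal subspace $\Rc(\bnabla\btimes)$ (Theorem \ref{thm:resolution_identity}), while the constant $\vecE_0$ is annihilated by $\Upsilon$ exactly as it is by $\Gamma$ in \eqref{eq:E_Gamma}. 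With $\Upsilon\vecE=0$ in hand, I would combine it with $\vecE=\brho\vecJ$ and $\brho=\sigma_2^{-1}(I-X_1/t)$ from \eqref{eq:sig_rho_s_t}: expanding $0=\Upsilon\vecE=\sigma_2^{-1}(\Upsilon\vecJ-t^{-1}\Upsilon X_1\vecJ)$ yields $\vecJ_f=\Upsilon\vecJ=t^{-1}\Upsilon X_1\vecJ$, and adding $\vecJ_0$, applying $X_1$, and rearranging gives the third line. The fourth line is identical using $\brho=\sigma_1^{-1}(I-X_2/s)$, producing the operator $X_2\Upsilon X_2$ and variable $s$.

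In each case the final step converts a fixed-point identity of the form $X_i\vecE=X_i\vecE_0+c^{-1}X_i\Gamma X_i\vecE$ (or its $\vecJ$, $\Upsilon$ analogue, with $c\in\{s,t\}$) into $c^{-1}(cI-X_i\Gamma X_i)X_i\vecE=X_i\vecE_0$, using $X_i\Gamma X_i\vecE=X_i\Gamma X_i(X_i\vecE)$ by idempotency, and then inverting $cI-X_i\Gamma X_i$. This inverse exists precisely when $c$ lies outside the spectrum, which Theorem \ref{thm:self-adjoint_X1GammaX1} and Corollary \ref{cor:self-adjoint_X1UpsilonX1} place in $[0,1]$, accounting for the stated restrictions to $s,t\in\mathbb{C}\backslash[0,1]$. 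I expect no genuine obstacle here; the only care needed is bookkeeping — pairing each field and projection with the correct contrast variable so that the spectral parameter matches the coefficient appearing in the corresponding factorization of $\bsigma$ or $\brho$ in \eqref{eq:sig_rho_s_t}.
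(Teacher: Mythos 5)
Your proposal is correct and is precisely the argument the paper intends: the paper states only that the proof is ``analogous to the proof of Lemma \ref{lem:resolvent_X1E},'' and you carry out exactly that analogy, swapping in the appropriate factorization of $\bsigma$ or $\brho$ from \eqref{eq:sig_rho_s_t} and replacing the key identity $\Gamma\vecJ=0$ by its dual $\Upsilon\vecE=0$ (which follows from Theorems \ref{thm:resolution_identity} and \ref{thm:H2_decomp_Hx_H.} just as you say). No gaps; the bookkeeping of contrast variables $s$ versus $t$ matches the statement.
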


From Theorem \ref{thm:self-adjoint_two-component} and Corollary \ref{cor:resolvent_X2E_X1J_X2J}  
we also have the following result for the setting of a two-component material.
\begin{corollary}
\label{cor:resolvent_chi2E_chi1J_chi2J}
For a two-component composite material, the fields $\chi_i\vecE$ and $\chi_i\vecJ$, $i=1,2$, have the 
following resolvent representations involving the self-adjoint operators $\chi_i\Gamma \chi_i$ and 
$\chi_i\Upsilon \chi_i$, $i=1,2$.
\begin{align}\label{eq:Resolvent_representations_chi}
    &\chi_1\vecE =s(sI-\chi_1\Gamma \chi_1)^{-1}\chi_1\vecE _0 ,
    \quad
    s\in\mathbb{C}\backslash[0,1].
    \\\notag
    &\chi_2\vecE =t(tI-\chi_2\Gamma \chi_2)^{-1}\chi_2\vecE _0 ,
    \quad
    t\in\mathbb{C}\backslash[0,1].
    \\\notag
    &\chi_1\vecJ=t(tI-\chi_1\Upsilon \chi_1)^{-1}\chi_1\vecJ_0 ,
    \quad
    t\in\mathbb{C}\backslash[0,1].
    \\\notag
    &\chi_2\vecJ=s(sI-\chi_2\Upsilon \chi_2)^{-1}\chi_2\vecJ_0 ,
    \quad
    s\in\mathbb{C}\backslash[0,1].
    \\\notag
\end{align}
\end{corollary}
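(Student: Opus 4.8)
The plan is to derive this corollary entirely from the correspondence established in Theorem \ref{thm:self-adjoint_two-component}, applied to the resolvent formulas already proved for the polycrystalline setting in Corollary \ref{cor:resolvent_X2E_X1J_X2J}. The key observation is that the derivation of the polycrystalline resolvent representations (Lemma \ref{lem:resolvent_X1E} and its analogues) never uses any feature particular to the matrix-valued projections $X_i$ beyond the algebraic identities recorded in \eqref{eq:X_projection}, namely $X_i^*=X_i$, $X_iX_j=X_i\delta_{ij}$, and $X_1+X_2=I$. Since the scalar characteristic functions satisfy the matching identities \eqref{eq:chi_projection}, every step transfers verbatim under the substitution $X_i\mapsto\chi_i$.

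First I would note that for a two-component material the decompositions $\bsigma=\sigma_1\chi_1+\sigma_2\chi_2$ and $\brho=\rho_1\chi_1+\rho_2\chi_2$ have exactly the form of \eqref{eq:two-phase_eps} with $X_i$ replaced by $\chi_i$, so the contrast-variable factorizations in \eqref{eq:sig_rho_s_t}, e.g.\ $\bsigma=\sigma_2(I-\chi_1/s)=\sigma_1(I-\chi_2/t)$ and the analogue for $\brho$, hold with the same variables $s=1/(1-h)$ and $t=1-s$. The existence and decomposition results feeding the resolvent derivation — Theorem \ref{thm:existenceE}, Corollary \ref{cor:existenceJ}, Theorem \ref{thm:H2_decomp_Hx_H.}, and Theorem \ref{thm:resolution_identity} — concern only $\vecE_0$, $\vecE_f$, $\vecJ_0$, $\vecJ_f$, $\Gamma$, and $\Upsilon$ and make no reference to the projections, so they apply unchanged; in particular $\Gamma\vecE_0=0$, $\Gamma\vecE_f=\vecE_f$, $\Upsilon\vecJ_0=0$, $\Upsilon\vecJ_f=\vecJ_f$, together with the mutual orthogonality $\Gamma\vecJ=0$ and $\Upsilon\vecE=0$, continue to hold.

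With these in place I would reproduce the computation of Lemma \ref{lem:resolvent_X1E}: using $\Gamma\vecJ=0$, $\vecJ=\bsigma\vecE$, and $\bsigma=\sigma_2(I-\chi_1/s)$ yields $\vecE_f=s^{-1}\Gamma\chi_1\vecE$; adding $\vecE_0$, applying $\chi_1$ on the left, and invoking $\chi_1^2=\chi_1$ gives $\chi_1\vecE=s(sI-\chi_1\Gamma\chi_1)^{-1}\chi_1\vecE_0$. The three remaining formulas follow by the identical manipulations that produced Corollary \ref{cor:resolvent_X2E_X1J_X2J}, using $t$, $X_2$, and $\Upsilon$ in the appropriate places. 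Finally, by Theorem \ref{thm:self-adjoint_two-component} the operators $\chi_i\Gamma\chi_i$ and $\chi_i\Upsilon\chi_i$ are self-adjoint with spectrum contained in $[0,1]$, so each resolvent exists precisely for the stated parameter range $\mathbb{C}\backslash[0,1]$.

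The argument presents no genuine obstacle; the only point requiring care is confirming that the self-adjointness and spectral containment underlying the resolvents — not merely the algebraic rearrangements — also transfer, which is exactly the content supplied by Theorem \ref{thm:self-adjoint_two-component}. Thus the entire result is a direct consequence of that correspondence, and the purpose of this corollary is simply to record the two-component specialization explicitly.
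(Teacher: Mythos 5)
Your proposal matches the paper's own derivation: the paper obtains this corollary by combining Theorem \ref{thm:self-adjoint_two-component} (the $X_i\mapsto\chi_i$ correspondence) with Corollary \ref{cor:resolvent_X2E_X1J_X2J}, exactly as you do. Your additional verification that the algebra of Lemma \ref{lem:resolvent_X1E} uses only the projection identities \eqref{eq:chi_projection} is a correct and slightly more explicit elaboration of the same argument.
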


\subsection{Stieltjes integral representations for effective parameters}
\label{sec:Stieltjes_integral}
Consider the functional formulas in equation \eqref{eq:Eff_Cond_Tens_Def} for the 
components $\sigma^*_{jk}$ and $\rho^*_{jk}$, $j,k=1,\ldots,d$, of the effective conductivity $\bsigma^*$ 
and resistivity $\brho^*$ matrices, involving the fields $X_i\vecE$ and $X_i\vecJ$, $i=1,2$. Also, consider the
resolvent representations for the fields $X_i\vecE$ and $X_i\vecJ$, $i=1,2$, given in equation \eqref{eq:Resolvent_representations_X}, involving the self-adjoint operators 
$X_i\Gamma X_i$ and $X_i\Upsilon X_i$, $i=1,2$. In this section, we show that the linear functionals in  
\eqref{eq:Eff_Cond_Tens_Def} involving the resolvent representations of fields in \eqref{eq:Resolvent_representations_X} have Stieltjes integral representations involving spectral measures
of the self-adjoint operators.

We start with a review of relevant aspects of the spectral
theorem for bounded self-adjoint operators in Hilbert space.
The spectrum $\Sigma$ of a self-adjoint operator $T$ on a Hilbert space
$\Hc$ is real-valued~\cite{Reed-1980,Stone:64}. We will assume that
$T$ is a \emph{bounded} operator. In this case, its spectral radius
equal to its operator norm $\|T\|$~\cite{Reed-1980}, i.e.,  
\begin{align}\label{eq:Spectral_Radius_Phi}
\Sigma\subseteq[-\|T\|,\|T\|\,].
\end{align}
The spectral theorem states that there is a one-to-one
correspondence between the self-adjoint operator $T$ and a family of
self-adjoint projection operators $\{Q(\lambda)\}_{\lambda\in\Sigma}$ --- the resolution of
the identity --- that satisfies~\cite{Reed-1980,Stone:64} 
%
\begin{align}\label{eq:Res_Identity_limits}
\lim_{\lambda\to\,\inf{\Sigma}}Q(\lambda)=0, \quad
\lim_{\lambda\to\,\sup{\Sigma}}Q(\lambda)=I.
\end{align}
Here, $0$ and $I$ denote the null and identity operators on
$\Hc$. Furthermore, the \emph{complex-valued} function of the spectral 
variable $\lambda$ defined by $\mu_{\psi\varphi}(\lambda)=\langle Q(\lambda)\psi,\varphi\,\rangle$ is 
of bounded variation for all
$\psi,\varphi\in\Hc$~\cite{Stone:64}. By the sesquilinearity of the inner-product
and the fact that the projection operator $Q(\lambda)$ is self-adjoint, the
function $\mu_{\psi\varphi}(\lambda)$ satisfies
$\mu_{\varphi\psi}(\lambda)=\overline{\mu}_{\psi\varphi}(\lambda)$. Moreover, the function $\mu_{\psi\psi}(\lambda)$
is real-valued and positive $\mu_{\psi\psi}(\lambda)=\langle Q(\lambda)\psi,\psi\rangle=\langle
Q(\lambda)\psi,Q(\lambda)\psi\rangle=\|Q(\lambda)\psi\,\|^2\geq0$.  Since the function $\mu_{\psi\varphi}(\lambda)$ is of bounded variation, it has an associated
Stieltjes
measure~\cite{Stieltjes:1995,Stone:64,Folland:99:RealAnalysis}    
$\d\mu_{\psi\varphi}(\lambda)=\d\langle Q(\lambda)\psi,\varphi\rangle$ which we will denote by $\mu_{\psi\varphi}$ for brevity.

The spectral theorem also provides an operational calculus in Hilbert
space which yields integral representations involving the Stieltjes
measure $\mu_{\psi\varphi}$. A summary of the relevant details are as   
follows. Let $Y(\lambda)$ and $Z(\lambda)$ be arbitrary complex-valued functions
and denote by $\Ds(Y)$ the set of all $\psi\in\Hc$ such that $Y\in
L^2(\mu_{\psi\psi})$, i.e., $Y$ is square integrable on the set $\Sigma$ with
respect to the \emph{positive} measure $\mu_{\psi\psi}$, and similarly define 
$\Ds(Z)$. Then $\Ds(Y)$ and $\Ds(Z)$ are linear manifolds and there
exist linear operators denoted by $Y(T)$ and $Z(T)$ with domains
$\Ds(Y)$ and $\Ds(Z)$, respectively, which are defined in terms of the
following Radon--Stieltjes integrals~\cite{Stone:64}  
\begin{align}\label{eq:Spectral_Theorem}
\langle Y(T)\psi,\varphi\rangle&=\int_\Sigma Y(\lambda)\,\d\mu_{\psi\varphi}(\lambda), \qquad
\hspace{1em}
\forall \, \psi\in\mathscr{D}(Y), \ \varphi\in\Hc,  
\\
\langle Y(T)\psi,Z(T)\varphi\rangle&=\int_\Sigma Y(\lambda)\overline{Z}(\lambda)\,\d\mu_{\psi\varphi}(\lambda),
\quad
\forall \, \psi\in\mathscr{D}(Y), \ \varphi\in\mathscr{D}(Z),
\notag
\end{align}
where the integration in~\eqref{eq:Spectral_Theorem} is over the
spectrum $\Sigma$ of $T$~\cite{Reed-1980,Stone:64}.

The mass $\mu^0_{\psi\varphi}=\int_\Sigma\d\mu_{\psi\varphi}(\lambda)$ of the \emph{Stieltjes
measure} $\mu_{\psi\varphi}$ satisfies~\cite{Stone:64} 
$\mu^0_{\psi\varphi}=\lim_{\lambda\to\sup\Sigma}\mu_{\psi\varphi}(\lambda)-\lim_{\lambda\to\inf\Sigma}\mu_{\psi\varphi}(\lambda)$. Consequently, equation~\eqref{eq:Res_Identity_limits}, the continuity of inner-products 
\cite{Folland:99:RealAnalysis}, and the Cauchy-Schwartz inequality yield \cite{Schmudgen:2012:2012942602,Stone:64}
\begin{align}\label{eq:Mass_General}
\mu^0_{\psi\varphi}=\int_\Sigma\d\langle Q(\lambda)\psi,\varphi\,\rangle=\langle\psi,\varphi\rangle,
\qquad
|\mu^0_{\psi\varphi}|\leq\|\psi\|\,\|\varphi\|<\infty.
\end{align}
Equation~\eqref{eq:Mass_General} demonstrates that the measures
$\mu_{\psi\varphi}$ are \emph{finite measures}, i.e., they have bounded
mass~\cite{Stone:64}. The spectral theorem demonstrates that the moments 
$\mu^n_{\psi\varphi}$, $n=1,2,\ldots$, of the measure are given by
\begin{align}\label{eq:Moments_General}
\mu^n_{\psi\varphi}=\langle T^n\psi,\varphi\rangle=
\int_\Sigma\lambda^n\d\langle Q(\lambda)\psi,\varphi\,\rangle.
\end{align}
The function $Y(\lambda)$ associated with the
resolvent operator $R_s=(sI-T)^{-1}$, for $s\not\in\Sigma$, is given by
$Y(\lambda)=1/(s-\lambda)$. For $s$ a positive distance away from $\Sigma$, i.e.,
$|s-\lambda|>\delta>0$ for some $\delta>0$ and all $\lambda\in\Sigma$, the function $Y(\lambda)$ is
uniformly bounded in modulus. In this case, equation~\eqref{eq:Mass_General} implies that $\mathscr{D}(Y)$
coincides with the entire Hilbert space $\Hc$. To preclude the degenerate case of a spectral measure with 
moments satisfying $\mu^n_{\psi\varphi}=0$ for all $n=1,2,\ldots$, we will only consider $\psi,\varphi\not\in\Kc(T)$.

We are now ready to state the main theorem in this section.
\begin{theorem}\label{thm:X1E}
Consider the self-adjoint operator $\bM=X_1\Gamma X_1$ with domain 
$\Dc(\bM)=\{\bxi\in\Hs^2:X_1\bxi\neq0\}$, with $X_1:\Hs^2\mapsto\Hs^2$, 
where the condition $X_1\bxi\neq0$ is to preclude the degenerate case 
described above. Then the components $\sigma^*_{jk}=\sigma_2 m_{jk}(h)$,  
$j,k=1,\ldots,d$, of the matrix $\bsigma^*$ have 
the following Stieltjes integral representations
\begin{align}\label{eq:Stieltjes_F}
    &m_{jk}(h)=\delta_{jk}-F_{jk}(s), 
    \quad
    F_{jk}(s)
    =\langle(sI-X_1\Gamma X_1)^{-1}X_1\vece _j\bcdot\vece _k\rangle
    =\int_0^1\frac{\d\mu_{jk}(\lambda)}{s-\lambda}\,.
    \quad
    s\in\mathbb{C}\backslash[0,1].
\end{align}
Here, $\mu_{jk}$ is a  \emph{spectral measure} associated
with the self-adjoint random operator $X_1\Gamma X_1$; the measure $\mu_{jk}$ is a
\emph{signed measure} for $j\ne k$ and the measure $\mu_{kk}$ is a 
\emph{positive measure}.
There exists a family of self-adjoint projection operators 
$\{Q(\lambda)\}_{\lambda\in\Sigma}$ --- the resolution of the identity --- that is  uniquely determined 
by $X_1\Gamma X_1$ and satisfies  $Q(0)=0$ and $Q(1)=I$. The real-valued function  
$\mu_{jk}(\lambda)=\langle Q(\lambda)X_1\vece _j,\vece _k\,\rangle$ 
for $\lambda\in\Sigma$ is of bounded variation and the Stieltjes measure in 
\eqref{eq:Stieltjes_F} is given by    
$\d\mu_{jk}(\lambda)=\d\langle Q(\lambda)X_1\vece _j,\vece _k\rangle$. The 
operators $Q(\lambda)$ and $X_1$ commute. 
\end{theorem}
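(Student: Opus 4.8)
The plan is to combine the resolvent representation of \lemref{lem:resolvent_X1E} with the functional formula \eqref{eq:Eff_Cond_Tens_Def} and then invoke the spectral theorem recorded in \eqref{eq:Spectral_Theorem}. First I would orient the coordinate system so that $\vecE_0=E_0\,\vece_j$, which is permitted because $\vecE_0$ is arbitrary. Substituting $X_1\vecE=sE_0(sI-X_1\Gamma X_1)^{-1}X_1\vece_j$ from \eqref{eq:Resolvent_representations_X1E} into $m_{jk}(h)=\sigma^*_{jk}/\sigma_2=\delta_{jk}-\langle X_1\vecE\bcdot\vece_k\rangle/(sE_0)$ and cancelling the common factor $sE_0$ immediately gives $m_{jk}(h)=\delta_{jk}-F_{jk}(s)$ with $F_{jk}(s)=\langle(sI-X_1\Gamma X_1)^{-1}X_1\vece_j\bcdot\vece_k\rangle$. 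Here I would note that ensemble averaging the dot product against the constant real vector $\vece_k$ coincides with the $\Hs$-inner-product $\langle\,\cdot\,,\vece_k\rangle$, so that $F_{jk}(s)=\langle(sI-X_1\Gamma X_1)^{-1}X_1\vece_j,\vece_k\rangle$ is a genuine resolvent matrix element.

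Next I would apply the operational calculus \eqref{eq:Spectral_Theorem} to the bounded self-adjoint operator $\bM=X_1\Gamma X_1$ of \thmref{thm:self-adjoint_X1GammaX1}, taking $Y(\lambda)=1/(s-\lambda)$ (uniformly bounded on $\Sigma$ for $s\notin\Sigma$, so the admissible domain is all of $\Hs^2$), with $\psi=X_1\vece_j$ and $\varphi=\vece_k$. This yields $F_{jk}(s)=\int_\Sigma d\mu_{jk}(\lambda)/(s-\lambda)$, where $\mu_{jk}(\lambda)=\langle Q(\lambda)X_1\vece_j,\vece_k\rangle$ is of bounded variation by the general theory. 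Since \thmref{thm:self-adjoint_X1GammaX1} gives $\Sigma\subseteq[0,1]$, the limits in \eqref{eq:Res_Identity_limits} let me normalize the resolution of the identity so that $Q(\lambda)=0$ for $\lambda\le 0$ and $Q(\lambda)=I$ for $\lambda\ge 1$, i.e.\ $Q(0)=0$ and $Q(1)=I$, collapsing the integral to $\int_0^1$.

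The commutation and sign claims I would extract from the algebraic identity $X_1\bM=\bM X_1=\bM$, which follows from $X_1^2=X_1$. Because a bounded operator commuting with a bounded self-adjoint operator commutes with every projection in its resolution of the identity, $X_1$ and $Q(\lambda)$ commute. Using this together with $X_1^*=X_1$, $X_1^2=X_1$, and $Q(\lambda)^*=Q(\lambda)=Q(\lambda)^2$, I would rewrite the diagonal masses as $\mu_{kk}(\lambda)=\langle Q(\lambda)X_1\vece_k,X_1\vece_k\rangle=\|Q(\lambda)X_1\vece_k\|^2$, which is nonnegative and nondecreasing in $\lambda$; hence $\mu_{kk}$ is a positive measure. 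For $j\ne k$, the real-symmetry of $\bM$ (real matrix entries and real-valued $\Gamma$) makes $\mu_{jk}(\lambda)$ real-valued, but lacking monotonicity it is only of bounded variation, hence a signed measure.

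The main obstacle I anticipate is the bookkeeping at the endpoints: justifying $Q(0)=0$ rigorously requires care about whether $0\in\Sigma$ and about the continuity convention chosen for the resolution of the identity, in conjunction with the domain restriction $X_1\bxi\neq0$ of \thmref{thm:X1E} that excludes the degenerate measure whose moments all vanish. Everything else --- the resolvent substitution, the application of \eqref{eq:Spectral_Theorem}, and the positivity computation --- is routine once the commutation $X_1Q(\lambda)=Q(\lambda)X_1$ is in hand.
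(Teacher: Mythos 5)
Your proposal is correct and follows the paper's main line for the integral representation: substitute the resolvent formula of \lemref{lem:resolvent_X1E} into the functional form \eqref{eq:Eff_Cond_Tens_Def}, then apply the operational calculus \eqref{eq:Spectral_Theorem} with $Y(\lambda)=1/(s-\lambda)$, $\psi=X_1\vece_j$, $\varphi=\vece_k$, using $\Sigma\subseteq[0,1]$ from \thmref{thm:self-adjoint_X1GammaX1} to restrict the integral to $[0,1]$. Where you genuinely diverge is the commutation of $Q(\lambda)$ with $X_1$: you derive it from the algebraic identity $X_1\bM=\bM X_1=\bM$ (valid since $X_1^2=X_1$) together with the standard fact that a bounded operator commuting with a bounded self-adjoint operator commutes with every spectral projection in its resolution of the identity. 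The paper instead observes that $\langle X_1\vecE\bcdot\vece_k\rangle=\langle X_1\vecE\bcdot X_1\vece_k\rangle$ yields two Stieltjes transforms of the same function of $s$, invokes the Stieltjes--Perron inversion theorem to conclude the two measures agree, and infers the commutation from the resulting equality $\langle Q(\lambda)X_1\vece_k\bcdot\vece_k\rangle=\langle X_1Q(\lambda)X_1\vece_k\bcdot\vece_k\rangle$. Your route is arguably cleaner: it establishes commutation as an operator identity directly, whereas the paper's argument only produces equality of particular matrix elements and then asserts operator commutation, which is a nontrivial extra step. The payoff is the same in both cases --- once $X_1Q(\lambda)=Q(\lambda)X_1$ is in hand, $\mu_{kk}(\lambda)=\|Q(\lambda)X_1\vece_k\|^2$ gives positivity of the diagonal measures, exactly as you write. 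Your caution about the endpoint normalization $Q(0)=0$, $Q(1)=I$ is warranted but is a convention issue the paper also passes over without comment.
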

\begin{proof}
Since $X_1^2=X_1$, we we have $\|X_1\vece_j\|\le1$, for all $j=1,\ldots,d$. 
Moreover, the matrix components of $X_1$ are infinitely differentiable trigonometric functions except for
along the crystal boundaries. We assume that these boundaries are regular enough to ensure
$X_1:\Hs^2\mapsto\Hs^2$, hence $X_1\vece_j\in\Hs^2$. Since the vector $X_1\vece_j$ contains terms that are
strictly positive almost everywhere, we have $X_1\vece_j\neq0$, for all $j=1,\ldots,d$. Consequently,
$X_1\vece_j\in\Dc(\bM)$. Taking $Y(\lambda)=(s-\lambda)^{-1}$ in \eqref{eq:Spectral_Theorem} with 
$s\in\mathbb{C}\backslash[0,1]$ ensures the domain of the operator $\Ds(Y)$ coincides with the
entire Hilbert space $\Hs^2$.	

Substituting the resolvent representation for the field $X_1\vecE$ displayed in equation 
\eqref{eq:Resolvent_representations_X} into the linear functional for $\sigma^*_{jk}(s)$ in equation
\eqref{eq:Eff_Cond_Tens_Def}, using the first formula in equation \eqref{eq:Spectral_Theorem} with 
$Y(\lambda)=(s-\lambda)^{-1}$, using the first formula in equation \eqref{eq:m_h}, and noting by Theorem \ref{thm:self-adjoint_X1GammaX1} the spectrum $\Sigma$ of the operator $\bM=X_1\Gamma X_1$ satisfies $\Sigma\subseteq[0,1]$, establishes equation \eqref{eq:Stieltjes_F}, where we have denoted $\mu_{\vece_j\vece_k}=\mu_{jk}$. 
Since $X_1^2=X_1$ and $X_1^*=X_1^T=X_1$, we also have 
\begin{align}
    \label{eq:two_reps_X1}
    \int_0^1\frac{\d\langle Q(\lambda)X_1\vece_k\bcdot\vece_k\rangle(\lambda)}{s-\lambda}=
    \langle X_1\vecE\cdot\vece_k\rangle/(sE_0)=
    \langle X_1\vecE\cdot X_1\vece_k\rangle/(sE_0)=
    \int_0^1\frac{\d\langle Q(\lambda)X_1\vece_k\bcdot X_1\vece_k\rangle(\lambda)}{s-\lambda}\,.
\end{align}
Due to the linearity of Radon-Stieltjes integrals \cite{Schmudgen:2012:2012942602,Stone:64} in the function $\mu_{\phi\psi}(\lambda)$, e.g., in equation
\eqref{eq:Spectral_Theorem} $\int Y(\lambda)\d(\mu+\alpha)(\lambda)=\int Y(\lambda)\d\mu(\lambda) +\int Y(\lambda)\d\alpha(\lambda)$, 
from equation \eqref{eq:two_reps_X1} we have 
\begin{align}
    \int_0^1\frac{\d\left(
        \langle Q(\lambda)X_1\vece_k\bcdot\vece_k\rangle-
        \langle Q(\lambda)X_1\vece_k\bcdot X_1\vece_k\rangle
        \right)(\lambda)}{s-\lambda}=0,
    \text{ for all }
    s\in\mathbb{C}\backslash[0,1]\,.
\end{align}
By the Stieltjes-Perron inversion theorem, a measure is uniquely determined by its Stieltjes transform on 
$\mathbb{C}\backslash\mathbb{R}$ \cite{Schmudgen:2012:2012942602,Stone:64}. It follows that 
$\langle Q(\lambda)X_1\vece_k\bcdot\vece_k\rangle=\langle Q(\lambda)X_1\vece_k\bcdot X_1\vece_k\rangle$
almost everywhere, hence
\begin{align}
    \langle Q(\lambda)X_1\vece_k\bcdot\vece_k\rangle=
    \langle Q(\lambda)X_1^2\vece_k\bcdot\vece_k\rangle=
    \langle X_1 Q(\lambda)X_1\vece_k\bcdot\vece_k\rangle=
    \langle Q(\lambda)X_1\vece_k\bcdot X_1\vece_k\rangle,
\end{align}
which implies that the operators $Q(\lambda)$ and $X_1$ commute. 
\end{proof}

From equation~\eqref{eq:Mass_General}, the mass $\mu_{jk}^0$ of the measure $\mu_{jk}$ is given by 
\begin{align}\label{eq:Mass_mu}
\mu_{jk}^0=\langle X_1\vece _j\bcdot\vece _k\rangle, \quad
\mu_{kk}^0=\langle |X_1\vece _k|^2\rangle,
\end{align}
where the second equality follows from the fact that $X_1$ is a
real-symmetric projection matrix.  
The statistical average $\langle |X_1\vece _k|^2\rangle$ in~\eqref{eq:Mass_mu}
can be thought of as the ``mean orientation,'' or as the percentage of
crystallites oriented in the $k^{\text{th}}$ direction.
Since the \emph{projection} matrix $X_1$ is bounded by one in operator norm, the
Cauchy--Schwartz inequality and~\eqref{eq:Mass_mu} imply that
$0\leq\mu_{kk}^0\leq1$. By the spectral theorem, the moments $\mu_{jk}^n$, $n=1,2,\ldots$, 
of the measure $\mu_{jk}$ satisfy
\begin{align}\label{eq:Moments_mu}
\mu_{jk}^n=\int_0^1\lambda^n\mu_{jk}(\d\lambda)
=\langle [X_1\Gamma X_1]^nX_1\vece _j\bcdot\vece _k\rangle,
\quad n=0,1,2,\ldots.,
\end{align}
where the second equality follows from the first formula of the
spectral theorem in \eqref{eq:Spectral_Theorem} with
$F(\lambda)=\lambda^n$, $\psi=X_1\vece _j$, and $\phi=\vece _k$.

The proof of the following corollary is analogous to the proof of Theorem \ref{thm:X1E}.
\begin{corollary}\label{eq:Stieltjes_parameters}
The Herglotz functions in equation \eqref{eq:m_h} have the following Stieltjes integral representations, involving
spectral measures of self-adjoint operators which, in turn, involve the projection operators $X_i$, $i=1,2$. 

\begin{align}\label{eq:Stieltjes_G_X}
    &m_{jk}(h)=\delta_{jk}-F_{jk}(s), 
    \qquad
    F_{jk}(s)
    =\langle(sI-X_1\Gamma X_1)^{-1}X_1\vece _j\bcdot\vece _k\rangle
    =\int_0^1\frac{\d\mu_{jk}(\lambda)}{s-\lambda}\,,
    \\
    &w_{jk}(z)=\delta_{jk}-G_{jk}(t), 
    \qquad
    G_{jk}(t)=\langle(tI-X_2\Gamma X_2)^{-1}X_2\vece _j\bcdot\vece _k\rangle
    =\int_0^1\frac{\d\alpha_{jk}(\lambda)}{t-\lambda}\,,
    \notag \\
    &\tilde{m}_{jk}(h)=\delta_{jk}-E_{jk}(s), 
    \qquad
    E_{jk}(s)=\langle(sI-X_2\Upsilon X_2)^{-1}X_2\vece _j\bcdot\vece _k\rangle=\int_0^1\frac{\d\eta_{jk}(\lambda)}{s-\lambda}\,,
    \notag \\
    &\tilde{w}_{jk}(z)=\delta_{jk}-H_{jk}(t), 
    \qquad
    H_{jk}(t)=\langle(tI-X_1\Upsilon X_1)^{-1}X_1\vece _j\bcdot\vece _k\rangle=\int_0^1\frac{\d\kappa_{jk}(\lambda)}{t-\lambda}\,.
    \notag
\end{align}
\end{corollary}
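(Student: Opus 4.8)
The plan is to repeat the argument of Theorem~\ref{thm:X1E} once for each of the three remaining dimensionless functions $w_{jk}$, $\tilde{m}_{jk}$, and $\tilde{w}_{jk}$, since the representation for $m_{jk}$ is exactly the content of that theorem and already appears as the first line of \eqref{eq:Stieltjes_G_X}. Each of these functions is assembled from the same three ingredients used there: a functional formula from equation~\eqref{eq:Eff_Cond_Tens_Def}, a resolvent representation from equation~\eqref{eq:Resolvent_representations_X}, and the Radon--Stieltjes form of the spectral theorem~\eqref{eq:Spectral_Theorem}. The entire proof is therefore a matter of pairing, in each line of \eqref{eq:Stieltjes_G_X}, the correct functional formula with the correct resolvent, contrast variable, and self-adjoint operator, and then reading off the measure.

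I would carry out the three cases as follows. For $w_{jk}(z)=\sigma^*_{jk}/\sigma_1$ I use the second equality on the $\sigma^*_{jk}$ line of \eqref{eq:Eff_Cond_Tens_Def}, giving $w_{jk}(z)=\delta_{jk}-\langle X_2\vecE\bcdot\vece_k\rangle/(tE_0)$, and substitute the $X_2\vecE$ resolvent from \eqref{eq:Resolvent_representations_X} together with $\vecE_0=E_0\vece_j$; the factor $tE_0$ cancels, leaving $G_{jk}(t)=\langle(tI-X_2\Gamma X_2)^{-1}X_2\vece_j\bcdot\vece_k\rangle$. For $\tilde{m}_{jk}(h)=\sigma_1\rho^*_{jk}$ I use the second equality on the $\rho^*_{jk}$ line of \eqref{eq:Eff_Cond_Tens_Def} and the $X_2\vecJ$ resolvent with $\vecJ_0=J_0\vece_j$, obtaining $E_{jk}(s)=\langle(sI-X_2\Upsilon X_2)^{-1}X_2\vece_j\bcdot\vece_k\rangle$. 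For $\tilde{w}_{jk}(z)=\sigma_2\rho^*_{jk}$ I use the first equality on the $\rho^*_{jk}$ line and the $X_1\vecJ$ resolvent with $\vecJ_0=J_0\vece_j$, obtaining $H_{jk}(t)=\langle(tI-X_1\Upsilon X_1)^{-1}X_1\vece_j\bcdot\vece_k\rangle$.

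In each case I then apply the first identity in \eqref{eq:Spectral_Theorem} with $Y(\lambda)=(s-\lambda)^{-1}$ (for the $s$-cases) or $Y(\lambda)=(t-\lambda)^{-1}$ (for the $t$-cases), $\psi=X_i\vece_j$, and $\varphi=\vece_k$, where the relevant operator $X_2\Gamma X_2$, $X_2\Upsilon X_2$, or $X_1\Upsilon X_1$ is self-adjoint by Theorem~\ref{thm:self-adjoint_X1GammaX1}, Corollary~\ref{cor:self-adjoint_X1UpsilonX1}, and the corollary following Corollary~\ref{cor:self-adjoint_X1UpsilonX1}. Since each of these operators has spectrum contained in $[0,1]$ by those same results, the integration collapses onto $[0,1]$, and the associated spectral measure is $\alpha_{jk}(\lambda)=\langle Q(\lambda)X_2\vece_j\bcdot\vece_k\rangle$, $\eta_{jk}(\lambda)=\langle Q(\lambda)X_2\vece_j\bcdot\vece_k\rangle$, or $\kappa_{jk}(\lambda)=\langle Q(\lambda)X_1\vece_j\bcdot\vece_k\rangle$, with $\{Q(\lambda)\}$ the resolution of the identity of the operator appearing in that line. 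This yields $w_{jk}=\delta_{jk}-G_{jk}$, $\tilde{m}_{jk}=\delta_{jk}-E_{jk}$, and $\tilde{w}_{jk}=\delta_{jk}-H_{jk}$ as stated.

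Since the mechanics are identical to Theorem~\ref{thm:X1E}, I do not expect a genuine obstacle; the only points requiring care are the admissibility checks that transfer the domain argument of that theorem to the new settings. Specifically, I must verify $X_2\vece_j\in\Hs^2$ and $X_2\vece_j\neq0$, so that the domain and non-degeneracy conditions of Theorem~\ref{thm:X1E} apply to the operator in each line, together with the same statements for the current-field representations built from $\Upsilon$. These follow exactly as in Theorem~\ref{thm:X1E}: the components of $X_2=I-X_1$ are, away from the crystal boundaries, infinitely differentiable trigonometric functions, and $X_2\vece_j$ is nonzero almost everywhere because $X_1$ is a rank-one projection onto the local crystal axis, so $X_2$ has rank $d-1$ and cannot annihilate a fixed basis vector over a positive-measure set of orientations. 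With these checks in place, the spectral-theorem step applies verbatim and completes each of the three remaining representations.
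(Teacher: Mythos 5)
Your proposal is correct and is essentially the paper's own argument: the paper proves this corollary simply by declaring it ``analogous to the proof of Theorem~\ref{thm:X1E},'' and your case-by-case pairing of the functional formulas in \eqref{eq:Eff_Cond_Tens_Def} with the resolvents in \eqref{eq:Resolvent_representations_X} and the spectral theorem \eqref{eq:Spectral_Theorem} is exactly that analogy carried out explicitly, with the correct operators, contrast variables, and admissibility checks in each line.
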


From Theorem \ref{thm:self-adjoint_two-component} we also have the following result for the setting of 
a two-component material.
\begin{corollary}
For the setting of a two-component composite material, the Herglotz functions in equation \eqref{eq:m_h} have 
the following Stieltjes integral representations, involving spectral measures of self-adjoint operators which, in turn, involve the projection operators $\chi_i$, $i=1,2$,
associated with the setting of a two-component composite material. We emphasize that, while we use the same notation for the spectral measures in equations \eqref{eq:Stieltjes_G_X} and  \eqref{eq:Stieltjes_G_chi}, the
measures are not equal and are denoted by the same symbol for the sake of convention.
\begin{align}\label{eq:Stieltjes_G_chi}
    &m_{jk}(h)=\delta_{jk}-F_{jk}(s), 
    \qquad
    F_{jk}(s)
    =\langle(sI-\chi_1\Gamma \chi_1)^{-1}\chi_1\vece _j\bcdot\vece _k\rangle
    =\int_0^1\frac{\d\mu_{jk}(\lambda)}{s-\lambda}\,,
    \\
    &w_{jk}(z)=\delta_{jk}-G_{jk}(t), 
    \qquad
    G_{jk}(t)=\langle(tI-\chi_2\Gamma \chi_2)^{-1}\chi_2\vece _j\bcdot\vece _k\rangle
    =\int_0^1\frac{\d\alpha_{jk}(\lambda)}{t-\lambda}\,,
    \notag \\
    &\tilde{m}_{jk}(h)=\delta_{jk}-E_{jk}(s), 
    \qquad
    E_{jk}(s)=\langle(sI-\chi_2\Upsilon \chi_2)^{-1}\chi_2\vece _j\bcdot\vece _k\rangle=\int_0^1\frac{\d\eta_{jk}(\lambda)}{s-\lambda}\,,
    \notag \\
    &\tilde{w}_{jk}(z)=\delta_{jk}-H_{jk}(t), 
    \qquad
    H_{jk}(t)=\langle(tI-\chi_1\Upsilon \chi_1)^{-1}\chi_1\vece _j\bcdot\vece _k\rangle=\int_0^1\frac{\d\kappa_{jk}(\lambda)}{t-\lambda}\,.
    \notag
\end{align}
\end{corollary}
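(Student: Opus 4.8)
The plan is to obtain \eqref{eq:Stieltjes_G_chi} as an immediate specialization of the substitution principle established in Theorem \ref{thm:self-adjoint_two-component}, applied to the already-proven Corollary \ref{eq:Stieltjes_parameters}. First I would recall that the scalar characteristic functions $\chi_i$, $i=1,2$, satisfy in \eqref{eq:chi_projection} exactly the same algebraic identities as the matrix-valued projections $X_i$ satisfy in \eqref{eq:X_projection}, namely $\chi_i^*=\chi_i$, $\chi_i\chi_j=\chi_i\delta_{ij}$, and $\chi_1+\chi_2=1$, together with the mapping hypothesis $\chi_i:\Hs^1\mapsto\Hs^1$. By Theorem \ref{thm:self-adjoint_two-component} these are the only properties of the projections used in the earlier arguments, so the operators $\chi_1\Gamma\chi_1$, $\chi_2\Gamma\chi_2$, $\chi_1\Upsilon\chi_1$, and $\chi_2\Upsilon\chi_2$ are self-adjoint on $\Hs^1$ with spectra contained in $[0,1]$, precisely as for their $X_i$ counterparts.

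Next I would invoke Corollary \ref{cor:resolvent_chi2E_chi1J_chi2J}, which already supplies the resolvent representations for the fields $\chi_i\vecE$ and $\chi_i\vecJ$ in the two-component setting, in direct analogy with Corollary \ref{cor:resolvent_X2E_X1J_X2J}. Substituting these resolvent representations into the functional formulas \eqref{eq:Eff_Cond_Tens_Def} for $\sigma^*_{jk}$ and $\rho^*_{jk}$, with each $X_i$ replaced by $\chi_i$, and then applying the first formula of the spectral theorem \eqref{eq:Spectral_Theorem} with $Y(\lambda)=(s-\lambda)^{-1}$ (respectively $(t-\lambda)^{-1}$), exactly as in the proof of Theorem \ref{thm:X1E}, yields the four Stieltjes integrals in \eqref{eq:Stieltjes_G_chi}. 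The measures $\mu_{jk}$, $\alpha_{jk}$, $\eta_{jk}$, and $\kappa_{jk}$ are the resolution-of-identity measures attached to the respective self-adjoint operators, and the argument that these resolution-of-identity operators commute with $\chi_i$ transfers verbatim from the proof of Theorem \ref{thm:X1E}, using $\chi_i^2=\chi_i$ and $\chi_i^*=\chi_i$ in place of $X_i^2=X_i$ and $X_i^*=X_i$.

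Since every link in the chain---self-adjointness, resolvent representation, and spectral integration---depends only on the identities \eqref{eq:chi_projection} and the mapping property $\chi_i:\Hs^1\mapsto\Hs^1$ that $\chi_i$ share with $X_i$, the entire derivation carries over with no new analytic input, and I expect no genuine obstacle: the substantive work was already discharged in Theorem \ref{thm:self-adjoint_two-component}. The one point I would flag explicitly, as the statement itself emphasizes, is that although the spectral measures in \eqref{eq:Stieltjes_G_chi} are denoted by the same symbols as those in \eqref{eq:Stieltjes_G_X}, they are genuinely distinct objects---attached to $\chi_i\Gamma\chi_i$ and $\chi_i\Upsilon\chi_i$ rather than to $X_i\Gamma X_i$ and $X_i\Upsilon X_i$---and encode the two-component microgeometry rather than the polycrystalline crystal-orientation statistics.
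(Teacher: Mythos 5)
Your proposal is correct and matches the paper's route exactly: the paper derives this corollary directly from the substitution principle of Theorem \ref{thm:self-adjoint_two-component}, applied to the resolvent representations of Corollary \ref{cor:resolvent_chi2E_chi1J_chi2J} and the spectral-theorem argument of Theorem \ref{thm:X1E}, just as you describe. Your version merely spells out the chain of references more explicitly than the paper does.
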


By the Stieltjes--Perron inversion theorem 
\cite{Henrici:1974:v2,Schmudgen:2012:2012942602}, the matrix-valued spectral
measure $\bmu$ with components $\mu_{jk}$, $j,k=1,\ldots,d$, is given by the
weak limit $\d\bmu(\lambda)=-\lim_{\epsilon\downarrow0}\text{Im}(\mathbf{F}(\lambda+\imath\epsilon))(\d\lambda/\pi)$, i.e.
\begin{align}\label{eq:weak_limit_mu}
\int_0^1\xi(\lambda)\;\d\bmu(\lambda)
=-\frac{1}{\pi}\lim_{\epsilon\downarrow0}
\int_0^1\xi(\lambda)\;\text{Im}(\mathbf{F}(\lambda+\imath\epsilon))\, \d\lambda,
\end{align}
for all smooth scalar test functions $\xi(\lambda)$, where
$(\mathbf{F}(s))_{jk}=F_{jk}(s)$. From equation
\eqref{eq:weak_limit_mu} and the identities 
\begin{align}
m_{jk}(h)=h\,w_{jk}(z),
\quad
\tilde{m}_{jk}(h)=h\,\tilde{w}_{jk}(z),
\end{align}
which follow from equation~\eqref{eq:m_h}, it can be
shown~\cite{Murphy:JMP:063506} that the measures $\mu_{jk}$ and
$\alpha_{jk}$, and the measures $\eta_{jk}$ and $\kappa_{jk}$ are related by  
\begin{align}\label{eq:Measure_Relations}
&\lambda\alpha_{jk}(\lambda)=(1-\lambda)\mu_{jk}(1-\lambda) +
\lambda\,(\,m_{jk}(0)\delta_0(\d\lambda)+w_{jk}(0)(\lambda-1)\delta_1(\d\lambda)\,),
\\
&\lambda\kappa_{jk}(\lambda)=(1-\lambda)\eta_{jk}(1-\lambda) +
\lambda\,(\,\tilde{m}_{jk}(0)\delta_0(\d\lambda)+\tilde{w}_{jk}(0)(\lambda-1)\delta_1(\d\lambda)\,).
\notag     
\end{align}
Here, $m_{jk}(0)=m_{jk}(h)|_{h=0}$ and $w_{jk}(0)=w_{jk}(z)|_{z=0}$, for example, and
$\delta_a(\d\lambda)$ is the delta measure concentrated at $\lambda=a$. Equations 
\eqref{eq:Stieltjes_F} and~\eqref{eq:Stieltjes_G_X} demonstrate
the many symmetries between the functions $m_{jk}(h)$, $w_{jk}(z)$,
$\tilde{m}_{jk}(h)$, and $\tilde{w}_{jk}(z)$, and the respective 
measures $\mu_{jk}$, $\alpha_{jk}$, $\eta_{jk}$, and $\kappa_{jk}$.

\subsection{Sobolev space approach}
In this section we provide a rigorous mathematical foundation for an alternative approach 
\cite{Barabash:JPCM:10323,Milton:JAP-5294,Bergman:PRC-377:9,Bergman:1979:PRB:2359:19:4}, 
for providing Stieltjes integral representations for the bulk transport coefficients for composite 
materials. In this alternative approach, instead of utilizing the resolvent representations for the electric 
field $\vecE$ in \eqref{eq:Resolvent_representations_X}, a resolvent representation for the electric potential
$\phi$ is utilized, where $\vecE = \vecE_0 + \bnabla\phi$. This suggests use of the Hilbert space
\begin{align}
\Hs^{1,2}_0=\{f\in\Hs_0:\langle|\bnabla f|^2\rangle<\infty, \ \langle f\rangle=0\},
\end{align}
which is an example of a Sobolev space \cite{Folland:95:PDEs}. It is clear that $\Hs^{1,2}_0\subseteq\Hs^1_0$.
The following lemma 
\cite{Murphy:ADSTPF-2017} establishes that the Hilbert spaces $\Hs^{1,2}_0$ and $\Hs_\times$ 
are in one-to-one isometric correspondence.
\begin{lemma}\label{lem:isometry}
The Hilbert spaces $\Hs^{1,2}_0$ and $\Hs_\times$ are in one-to-one isometric correspondence. 
More specifically, temporarily denote the inner-product induced norm of the Hilbert space $\Hs^{1,2}_0$ by 
$\|f\|_{1,2} =\langle\bnabla f\bcdot\bnabla f\rangle^{1/2}$ and the inner-product induced norm of 
the Hilbert space $\Hs_\times$ by $\|\bpsi\|_\times =\langle\bpsi\bcdot\bpsi\rangle^{1/2}$.
Then, for every $f\in\Hs^{1,2}_0$ we have $\bnabla f\in\Hs_\times$ and $\|\bnabla f\|_\times=\|f\|_{1,2}$. 
Conversely, for every $\bpsi\in\Hs_\times$ there exists unique $f\in\Hs^{1,2}_0$ 
such that $\bpsi=\bnabla f$ and $\|f\|_{1,2}=\|\bpsi\|_\times$.
\end{lemma}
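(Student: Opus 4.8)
The plan is to show that the gradient map $\Lambda\colon f\mapsto\bnabla f$ furnishes the claimed isometry, establishing each assertion of the lemma in turn. I would first treat the forward direction: fix $f\in\Hs^{1,2}_0$ and verify $\bnabla f\in\Hs_\times$. The finiteness $\langle|\bnabla f|^2\rangle<\infty$ built into the definition of $\Hs^{1,2}_0$ gives $\bnabla f\in\Hs$ directly. For the weak curl-free condition I would test against a smooth compactly supported $\phi$ and move the generators off of $f$ using the skew-symmetry of the $L_i$, obtaining $\langle L_iL_jf-L_jL_if,\phi\rangle_0=\langle f,(L_jL_i-L_iL_j)\phi\rangle_0=0$, where the commutation $L_iL_j=L_jL_i$ is applied to the smooth test function $\phi\in\Hs^2_0$ rather than to $f$ itself. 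The mean-zero condition $\langle\bnabla f\rangle=0$ follows componentwise from $\langle L_if\rangle=\langle L_if,1\rangle_0=-\langle f,L_i1\rangle_0=0$, using skew-symmetry together with the fact that the constant function is fixed by the translation group, so that $L_i1=0$. The isometry $\|\bnabla f\|_\times=\|f\|_{1,2}$ is then immediate from the very definitions of the two norms.

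For the converse, I would exploit the identification $\Hs_\times=\Rc(\bnabla)$ established in Theorem~\ref{thm:H2_decomp_Hx_H.} together with the explicit projection $\Gamma=\bnabla(\bnabla^*\bnabla)^{-1}\bnabla^*$ of Theorem~\ref{thm:self-adjoint_X1GammaX1}. Given $\bpsi\in\Hs_\times=\Rc(\bnabla)=\Rc(\Gamma)$, I would set $f=(\bnabla^*\bnabla)^{-1}\bnabla^*\bpsi$; the mapping properties recorded in Theorem~\ref{thm:self-adjoint_X1GammaX1} give $f\in\Hs^2_0$, and since $\Hs^2_0\subseteq\Hs^{1,2}_0$ we obtain $f\in\Hs^{1,2}_0$. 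Because $\bpsi$ lies in the range of the projection $\Gamma$, we have $\bnabla f=\Gamma\bpsi=\bpsi$, which gives existence. For uniqueness, if $\bnabla f=\bnabla g$ with $f,g\in\Hs^{1,2}_0$ then $\|\bnabla(f-g)\|^2=\langle\bnabla^*\bnabla(f-g),f-g\rangle_0=0$, so $f-g$ is constant almost everywhere by the trivial-kernel argument of Corollary~\ref{eq:Laplacian_trivial_kernel}; the constraint $\langle f-g\rangle=0$ then forces $f=g$. The isometry in this direction, $\|f\|_{1,2}=\|\bpsi\|_\times$, is again immediate from the definitions once $\bpsi=\bnabla f$ is known.

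The main obstacle is reconciling the regularity implicit in the two descriptions of $\Hs_\times$. As introduced in equation~\eqref{eq:curlfreeHilbert}, $\Hs_\times$ is the $\Hs$-completion of curl-free mean-zero fields, whereas the construction of the potential $f=(\bnabla^*\bnabla)^{-1}\bnabla^*\bpsi$ requires $\bpsi$ to lie in the domain of $\bnabla^*$ and to be an honest element of $\Rc(\bnabla)$ rather than merely a limit of gradients. I would resolve this by working throughout with the refined description of Theorem~\ref{thm:H2_decomp_Hx_H.}, in which $\Dc(\bnabla)=\Hs^2$ and $\Hs_\times=\Rc(\bnabla)$ is the closed range of the projection $\Gamma$; this is precisely the content guaranteeing that every curl-free mean-zero field is a genuine gradient and not just an $\Hs$-limit of such. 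With that identification in hand, the remaining steps are the routine verifications above, and neither isometry statement requires any estimate beyond unwinding the definitions of $\|\cdot\|_{1,2}$ and $\|\cdot\|_\times$.
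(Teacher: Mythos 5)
Your proposal is correct and follows the same overall route as the paper's proof: the gradient map is the isometry, the forward direction reduces to unwinding the two norm definitions, and the converse constructs the potential as $f=(\bnabla^*\bnabla)^{-1}\bnabla^*\bpsi$ and uses $\Gamma\bpsi=\bpsi$ on $\Hs_\times=\Rc(\bnabla)$. Two points of difference are worth recording. First, in the forward direction the paper simply cites Theorem~\ref{thm:H2_decomp_Hx_H.} and equation~\eqref{eq:curl_grad_div_Ran_Ker_equality} to place $\bnabla f$ in $\Rc(\bnabla)\subseteq\Hs_\times$, whereas you verify the defining conditions of $\Hs_\times$ in \eqref{eq:curlfreeHilbert} directly (weak curl-freeness via skew-symmetry and commutation of the $L_i$ applied to the test function, mean-zero via $L_i1=0$); this is more self-contained but equivalent. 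Second, your uniqueness argument is genuinely stronger than the paper's: the paper only checks that the formula $f=(\bnabla^*\bnabla)^{-1}\bnabla^*\bpsi$ is well defined (comparing $f_1$ and $f_2$ given by the \emph{same} expression), while you show that \emph{any} two $f,g\in\Hs^{1,2}_0$ with $\bnabla f=\bnabla g$ coincide, using $\|\bnabla(f-g)\|=0$, constancy a.e., and the mean-zero normalization. That is the statement the lemma actually requires. One small cleanup: you do not need to invoke $\bnabla^*\bnabla$ (and hence $f-g\in\Hs^2_0$) to get $\|\bnabla(f-g)\|=0$; it follows immediately from $\bnabla f=\bnabla g$.
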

\begin{proof}
By Theorem \eqref{thm:H2_decomp_Hx_H.} and equation \eqref{eq:curl_grad_div_Ran_Ker_equality}, 
if $f \in \Hs^{1,2}_0$ then $\bnabla f \in\Rc({\bnabla})$ and $\Gamma\bnabla f = \bnabla f$. Moreover, 
$\|\bnabla f\|^2_\times=\langle\bnabla f\bcdot\bnabla f\rangle=\| f\|^2_{1,2} < \infty$, so that 
$\bnabla f\in\Hs_\times$. Consequently, for every 
$f \in \Hs^{1,2}_0$ we have $\bnabla f \in\Hs_\times$ and 
$\|\bnabla f\|^2_\times = \|f\|^2_{1,2}$. Conversely, $\bpsi \in\Hs_\times$ implies $\bpsi = \Gamma\bpsi = \bnabla f$, where we have defined the scalar-valued function $f=(\bnabla^*\bnabla)^{-1}\bnabla^*\bpsi$. Since
$\bpsi = \bnabla f$, the $\Hs^{1,2}_0$ norm of $f$ satisfies 
$\|f\|^2_{1,2}=\langle\bpsi\bcdot\bpsi\rangle=\|\bpsi\|^2_\times<\infty$, so that $f \in \Hs^{1,2}_0$. Moreover, $f$ is uniquely determined by $\bpsi$ (up to equivalence class \cite{Folland:99:RealAnalysis}), since if $f_1=(\bnabla^*\bnabla)^{-1}\bnabla^*\bpsi$ 
and $f_2=(\bnabla^*\bnabla)^{-1}\bnabla^*\bpsi$ then $\Gamma\bpsi=\bpsi$ implies that 
$\|f_1-f_2\|_{1,2} = \|\bpsi-\bpsi\|_\times=0$. Consequently, for every $\bpsi \in\Hs_\times$ there exists unique 
$f \in \Hs^{1,2}_0$ such that $\bpsi = \bnabla f$ and $\|f\|_{1,2} = \|\bpsi\|_\times$. In summary, the Hilbert spaces $\Hs^{1,2}_0$ and $\Hs_\times$ are in one-to-one isometric correspondence.
\end{proof}

We now establish that the potential $\phi$ has a resolvent representation involving a linear operator.
\begin{lemma}
\label{lem:resolvent_phi}
The potential $\phi$ has the following resolvent formula involving a linear operator $M$,
\begin{align}
    \label{eq:resolvent_phi}
    \phi=(s-M)^{-1}M\phi_0,
    \quad
    M=(\bnabla^*\bnabla)^{-1}[\bnabla^*X_1\bnabla]\,.
\end{align}
\end{lemma}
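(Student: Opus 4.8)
The plan is to transcribe the already-established resolvent relation~\eqref{eq:Proj_rep_Ef} for the fluctuating field $\vecE_f$ into a statement about the scalar potential $\phi$, passing through the isometry between $\Hs^{1,2}_0$ and $\Hs_\times$ furnished by \lemref{lem:isometry}. First I would record the dictionary between fields and potentials. By \thmref{thm:existenceE} and \thmref{thm:H2_decomp_Hx_H.} the electric field decomposes as $\vecE=\vecE_0+\vecE_f$ with $\vecE_f\in\Hs_\times=\Rc(\bnabla)$, so \lemref{lem:isometry} supplies a unique mean-zero $\phi\in\Hs^{1,2}_0$ with $\vecE_f=\bnabla\phi$. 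The constant mean field is written $\vecE_0=\bnabla\phi_0$, where $\phi_0$ is interpreted formally: its gradient $\vecE_0$ is a genuine (constant, hence stationary) field, so that $X_1\vecE_0$ and therefore $M\phi_0=(\bnabla^*\bnabla)^{-1}\bnabla^*X_1\vecE_0$ are well defined even though $\phi_0$ itself is not a stationary process.

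Next I would carry out the algebra. Starting from \eqref{eq:Proj_rep_Ef}, namely $\vecE_f=\tfrac1s\Gamma X_1\vecE$, I substitute $\vecE_f=\bnabla\phi$ together with the representation $\Gamma=\bnabla(\bnabla^*\bnabla)^{-1}\bnabla^*$ from \thmref{thm:self-adjoint_X1GammaX1} to obtain
\begin{align*}
\bnabla\phi=\frac1s\,\bnabla(\bnabla^*\bnabla)^{-1}\bnabla^*X_1\vecE .
\end{align*}
Both sides are explicitly images of $\bnabla$ applied to mean-zero elements of $\Hs^{1,2}_0$, and $\bnabla$ is injective there (being an isometry onto $\Hs_\times$ by \lemref{lem:isometry}); cancelling the outer $\bnabla$ yields $\phi=\tfrac1s(\bnabla^*\bnabla)^{-1}\bnabla^*X_1\vecE$. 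Writing $\vecE=\vecE_0+\bnabla\phi=\bnabla(\phi_0+\phi)$ and reading off $M=(\bnabla^*\bnabla)^{-1}\bnabla^*X_1\bnabla$ then gives $\phi=\tfrac1s M(\phi_0+\phi)$, i.e.\ $(s-M)\phi=M\phi_0$, and inverting $s-M$ produces the claimed formula $\phi=(s-M)^{-1}M\phi_0$.

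The main obstacle is the bookkeeping for the formal potential $\phi_0$ together with verifying that each operator application is legitimate on the relevant domains: one must check that $X_1\vecE\in\Dc(\bnabla^*)$ and that $\Gamma X_1\vecE\in\Rc(\bnabla)$ so the cancellation of $\bnabla$ is valid, both of which follow from $\Gamma$ being the orthogonal projection onto $\Rc(\bnabla)$ (\thmref{thm:self-adjoint_X1GammaX1}) and from the assumption $X_1:\Hs^2\mapsto\Hs^2$. One must also confirm that $(s-M)^{-1}$ exists for the relevant $s$. The identity $\bnabla M=\Gamma X_1\bnabla$ shows that $\bnabla$ intertwines $M$ with the self-adjoint operator $\Gamma X_1$ restricted to $\Rc(\bnabla)$; via the isometry of \lemref{lem:isometry} this makes $M$ unitarily equivalent to an operator with spectrum in $[0,1]$ (the nonzero spectrum coinciding with that of $X_1\Gamma X_1$ from \thmref{thm:self-adjoint_X1GammaX1}), so $(s-M)^{-1}$ is well defined for $s\in\mathbb{C}\backslash[0,1]$. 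Beyond these domain checks I expect no analytic difficulty, since the whole argument is an isometric transcription of \lemref{lem:resolvent_X1E}.
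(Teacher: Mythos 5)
Your proof is correct and follows essentially the same route as the paper's: both rest on the decomposition $\vecE=\bnabla\phi_0+\bnabla\phi$, the constitutive relation \eqref{eq:sig_rho_s_t}, and the divergence-free condition on $\vecJ$, the only difference being that the paper applies $(\bnabla^*\bnabla)^{-1}$ directly to $\bnabla^*\vecJ=0$ whereas you start from \eqref{eq:Proj_rep_Ef} and cancel the outer $\bnabla$ using the injectivity supplied by Lemma~\ref{lem:isometry}. Your additional care about the formal potential $\phi_0$ and the invertibility of $s-M$ matches what the paper handles implicitly and in Theorem~\ref{thm:M_pos_bound_symm}, respectively.
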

\begin{proof}
From Lemma \eqref{lem:isometry} we have 
$\vecE = \vecE_0 + \vecE_f$, where $\vecE_f=\Gamma\vecE_f=\bnabla\phi$ with  $\phi=(\bnabla^*\bnabla)^{-1}\bnabla^*\vecE_f$. 
We can write $\vecE_0=\bnabla\phi_0$ where
$\phi_0=\vecE_0\bcdot\vecx$. In the proof of Lemma \ref{lem:resolvent_X1E} which establishes a resolvent
representation for the field $X_1\vecE$, we utilized the formula 
$\Gamma\vecJ=0$, which can be obtained by applying the operator $\bnabla(\bnabla^*\bnabla)^{-1}$ to the 
formula $0=\bnabla\bcdot\vecJ=-\bnabla^*\vecJ$ in \eqref{eq:Maxwells_Equations_E0}. Instead, if 
we only apply the operator $(\bnabla^*\bnabla)^{-1}$ to the formula $\bnabla^*\vecJ=0$, and use equation \eqref{eq:sig_rho_s_t} to write $\vecJ=\bsigma\vecE=\sigma_2(1-X_1/s)(\vecE_0+\vecE_f)$, with $\vecE_0=\bnabla\phi_0$ and
$\vecE_f=\bnabla\phi$, using $\bnabla^*\vecE_0=0$ and rearrange, we obtain equation \eqref{eq:resolvent_phi}.
\end{proof}

Now, we use $\vecE = \bnabla\phi_0+\bnabla\phi$, equation \eqref{eq:sig_rho_s_t}, and $\langle\vecJ\bcdot\vecE\rangle=\langle\vecJ\bcdot\vecE_0\rangle$
to write the following functional formula for a diagonal component $\sigma^*_{jj}$, $j=1,\ldots,d$, of $\bsigma$ 
which is analogous to equation \eqref{eq:Energy_Reps} 
\begin{align}
\label{eq:functional_phi}
\sigma^*_{jj}E_0^2=
\sigma_2(E_0^2-\langle X_1\vecE\bcdot\vecE_0\rangle/s)=
\sigma_2(
E_0^2-
\langle X_1\vecE_0\bcdot\vecE_0\rangle/s-
\langle X_1\bnabla\phi\bcdot\bnabla\phi_0\rangle/s).
\end{align}
Equation \eqref{eq:functional_phi} indicates we should consider the inner-product 
$\langle f,g\rangle_{1,2}=\langle X_1\bnabla f\bcdot\bnabla g\rangle$, which is clearly a bi-sesquilinear 
form. However, it is not an inner-product unless \cite{Folland:99:RealAnalysis} we have 
$\langle f,f\rangle_{1,2}\in(0,\infty)$ for all non-zero $f$. This requires us to instead consider the Hilbert space 
\begin{align}
\Hs^{1,2}_{X_1}=\{f\in\Hs^{1,2}_0:X_1\bnabla f\neq0\}.
\end{align}

The properties of $M$ in \eqref{eq:resolvent_phi} with respect to the $\Hs^{1,2}_{X_1}$-inner-product follow 
from the properties of the operator $X_1\Gamma X_1$ with respect to the $\Hs^1$-inner-product, which we
establish in the following theorem.
\begin{theorem}
\label{thm:M_pos_bound_symm}
The operator $M$ with $\Dc(M)=\Hs^{1,2}_{X_1}$ is a positive, bounded, self-adjoint operator  
with spectrum $\Sigma\subseteq[0,1]$, satisfying
\begin{align}
    \label{eq:Mn_Gamma}
    \bnabla M^n=[\Gamma X_1]^n\bnabla\,.
\end{align}
\end{theorem}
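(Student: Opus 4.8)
The plan is to make the intertwining relation \eqref{eq:Mn_Gamma} the backbone of the whole argument: once it is established, every claimed property of $M$ on the weighted space $\Hs^{1,2}_{X_1}$ can be transported from the corresponding property of the self-adjoint operator $X_1\Gamma X_1$ on $\Hs^1$ proved in Theorem \ref{thm:self-adjoint_X1GammaX1}. I would therefore prove \eqref{eq:Mn_Gamma} first, by induction on $n$. The base case $n=1$ is immediate from the definition $M=(\bnabla^*\bnabla)^{-1}\bnabla^*X_1\bnabla$ in Lemma \ref{lem:resolvent_phi} together with the factorization $\Gamma=\bnabla(\bnabla^*\bnabla)^{-1}\bnabla^*$ of Theorem \ref{thm:self-adjoint_X1GammaX1}:
\begin{align}
\bnabla M=\left[\bnabla(\bnabla^*\bnabla)^{-1}\bnabla^*\right]X_1\bnabla=\Gamma X_1\bnabla\,.
\end{align}
The inductive step is then a one-line computation, $\bnabla M^n=(\bnabla M)M^{n-1}=\Gamma X_1(\bnabla M^{n-1})=\Gamma X_1[\Gamma X_1]^{n-1}\bnabla=[\Gamma X_1]^n\bnabla$.

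Second, I would establish symmetry of $M$ with respect to the inner product $\langle f,g\rangle_{1,2}=\langle X_1\bnabla f\bcdot\bnabla g\rangle$. Substituting $\bnabla(Mf)=\Gamma X_1\bnabla f$ from the $n=1$ case gives $\langle Mf,g\rangle_{1,2}=\langle X_1\Gamma X_1\bnabla f\bcdot\bnabla g\rangle$. The key move is to invoke self-adjointness of $X_1\Gamma X_1$ on $\Hs^1$ (Theorem \ref{thm:self-adjoint_X1GammaX1}) to pass the operator onto the second factor, and then to use $X_1^*=X_1$ to slide a single copy of $X_1$ across the pairing, yielding $\langle X_1\bnabla f\bcdot\Gamma X_1\bnabla g\rangle=\langle X_1\bnabla f\bcdot\bnabla(Mg)\rangle=\langle f,Mg\rangle_{1,2}$. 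Positivity follows from the same substitution combined with the orthogonal-projection identity $\langle X_1\Gamma X_1\vecu\bcdot\vecu\rangle=\|\Gamma X_1\vecu\|^2$ used in the proof of Theorem \ref{thm:self-adjoint_X1GammaX1}, which here gives $\langle Mf,f\rangle_{1,2}=\|\Gamma X_1\bnabla f\|^2\ge0$.

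Third, for boundedness I would compute $\|Mf\|_{1,2}^2=\langle X_1\Gamma X_1\bnabla f\bcdot\Gamma X_1\bnabla f\rangle=\|X_1\Gamma X_1\bnabla f\|^2\le\|\Gamma X_1\bnabla f\|^2\le\|X_1\bnabla f\|^2=\|f\|_{1,2}^2$, where the two inequalities use the operator-norm bounds $\|X_1\|\le1$ and $\|\Gamma\|\le1$ from Theorem \ref{thm:self-adjoint_X1GammaX1}; hence $\|M\|\le1$. Being bounded, everywhere defined on $\Hs^{1,2}_{X_1}$, and symmetric, $M$ is self-adjoint. Finally, positivity forces $\Sigma\subseteq[0,\infty)$ while $\|M\|\le1$ forces $\Sigma\subseteq[-1,1]$, so $\Sigma\subseteq[0,1]$.

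I expect the main obstacle to be the bookkeeping across two distinct inner products: the self-adjointness of $M$ lives on the weighted Sobolev space $\Hs^{1,2}_{X_1}$ with $\langle\cdot,\cdot\rangle_{1,2}$, whereas the self-adjointness and norm bounds I am borrowing hold for $X_1\Gamma X_1$ on $\Hs^1$ with the $\Hs$-inner-product. The intertwining relation \eqref{eq:Mn_Gamma} and the isometry of Lemma \ref{lem:isometry} are precisely the bridge between them, and care is needed to verify that $\bnabla f\in\Rc(\bnabla)=\Hs_\times$ (so that $\Gamma\bnabla f=\bnabla f$ wherever that is used) and that $Mf$ again lies in $\Hs^{1,2}_{X_1}$ — the non-degeneracy $X_1\bnabla(Mf)\neq0$ being exactly the condition built into the definition of $\Hs^{1,2}_{X_1}$.
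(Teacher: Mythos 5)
Your proposal is correct and follows essentially the same route as the paper: establish the intertwining relation $\bnabla M^n=[\Gamma X_1]^n\bnabla$ from the definitions of $M$ and $\Gamma$, then transport positivity, boundedness, and symmetry of $X_1\Gamma X_1$ on $\Hs^1$ to $M$ on $\Hs^{1,2}_{X_1}$ via that relation, concluding $\Sigma\subseteq[0,1]$ from positivity and $\|M\|\le1$. Your norm chain $\|Mf\|_{1,2}=\|X_1\Gamma X_1\bnabla f\|\le\|X_1\bnabla f\|=\|f\|_{1,2}$ is in fact slightly cleaner than the paper's (which stops at $\|\bnabla f\|$), but this is a presentational refinement rather than a different argument.
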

\begin{proof}
From the definition of $\Gamma=\bnabla(\bnabla^*\bnabla)^{-1}\bnabla^*$ and
$M=(\bnabla^*\bnabla)^{-1}[\bnabla^*X_1\bnabla]$, it is clear that
\begin{align}\label{eq:M_Gamma}
    \bnabla M=\Gamma X_1\bnabla,
\end{align}
which implies that $\bnabla M^2=\Gamma X_1\bnabla M=[\Gamma X_1]^2\bnabla$. Consequently, iteratively applying the formula \eqref{eq:M_Gamma} establishes equation \eqref{eq:Mn_Gamma}.

Since the operator $X_1\Gamma X_1$ is positive on $\Hs^1$, by equation \eqref{eq:M_Gamma}
$M$ is also a positive operator on $\Hs^{1,2}_{X_1}$:
\begin{align}
    \langle M f,f\rangle_{1,2}=
    \langle X_1\Gamma X_1\bnabla f\bcdot\bnabla f\rangle
    \ge0.
\end{align}
Since the operator $X_1\Gamma X_1$ is bounded with $\|X_1\Gamma X_1\|\le1$ and $X_1^2=X_1$, 
by equation \eqref{eq:M_Gamma} the operator norm of $M$ is also 
bounded by 1:
\begin{align}
    \label{eq:M_Gamma_norm}
    \|M f\|_{1,2}=
    \|X_1\Gamma X_1\bnabla f \|\le
    \|X_1\Gamma X_1\|\|\bnabla f \|\le
    \|\bnabla f \|.
\end{align}
Since the operator $X_1\Gamma X_1$ is symmetric on $\Hs^1$ and $X_1^2=X_1$, 
by equation \eqref{eq:M_Gamma} $M$ is also symmetric with respect to the $\Hs^{1,2}_{X_1}$-inner-product:
\begin{align}
    \langle M f,f\rangle_{1,2}=
    \langle X_1\Gamma X_1\bnabla f\bcdot\bnabla f\rangle=
    \langle \bnabla f\bcdot X_1\Gamma X_1\bnabla f\rangle=
    \langle X_1\bnabla f\bcdot \Gamma X_1\bnabla f\rangle=
    \langle f,M f\rangle_{1,2}
\end{align}
The positive, bounded, symmetric operator $M$ has positive, bounded, self-adjoint extensions
\cite{Zoltzan:2023:03081087,Schmudgen:2012:2012942602} and can therefore be considered a self-adjoint
operator. Just as in Theorem \ref{thm:self-adjoint_X1GammaX1}, since $M$ is positive and bounded by one
in operator norm, the spectrum $\Sigma$ of $M$ satisfies $\Sigma\subseteq[0,1]$.
\end{proof}
\begin{theorem}
\label{thm:isometry_M_Mb}
Consider the operators $M=(\bnabla^*\bnabla)^{-1}[\bnabla^*X_1\bnabla]$ and 
$\bM=X_1\Gamma X_1$ with domains 
$\Dc(M)=\{\xi\in\Hs^{1,2}_0:X_1\bnabla \xi\neq0\}$ and $\Dc(\bM)=\{\bxi\in\Hs_\times:X_1\bxi\neq0 \}$. 
The domains $\Dc(M)$ and $\Dc(\bM)$ and ranges $\Rc(M)$ and $\Rc(\bM)$ are in one-to-one isometric 
correspondence. More specifically, temporarily denote the inner-product induced norm of the Hilbert space 
$\Hs^{1,2}_0$ by $\|f\|_{1,2} =\langle\bnabla f\bcdot\bnabla f\rangle^{1/2}$ and the inner-product induced 
norm of the Hilbert space $\Hs_\times$ by $\|\bpsi\|_\times =\langle\bpsi\bcdot\bpsi\rangle^{1/2}$.
Then, for every $f\in\Dc(M)$ we have $\bnabla f\in\Dc(\bM)$ with $\|\bnabla f\|_\times=\|f\|_{1,2}$ and $\|\bM\bnabla f\|_\times=\|M f\|_{1,2}$. 
Conversely, for every $\bpsi\in\Dc(\Mb)$ there exists unique $f\in\Hs^{1,2}_0$ 
such that $\bpsi=\bnabla f$, $f\in\Dc(M)$, and $\|Mf\|_{1,2}=\|\bM\bpsi\|_\times$.
\end{theorem}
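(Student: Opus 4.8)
The plan is to build everything on the single isometry $U=\bnabla:\Hs^{1,2}_0\to\Hs_\times$ supplied by Lemma \ref{lem:isometry}, and then to transport the operator $M$ to $\bM$ through the intertwining identities of Theorem \ref{thm:M_pos_bound_symm}. First I would dispose of the domain correspondence, which needs no new work: given $f\in\Dc(M)$ we have $f\in\Hs^{1,2}_0$ with $X_1\bnabla f\neq0$, so by Lemma \ref{lem:isometry} the image $\bpsi=\bnabla f$ lies in $\Hs_\times$ and satisfies $X_1\bpsi=X_1\bnabla f\neq0$, whence $\bpsi\in\Dc(\bM)$; conversely any $\bpsi\in\Dc(\bM)$ equals $\bnabla f$ for a unique $f\in\Hs^{1,2}_0$, and $X_1\bnabla f=X_1\bpsi\neq0$ forces $f\in\Dc(M)$. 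The same lemma yields the first isometry $\|\bnabla f\|_\times=\|f\|_{1,2}$ verbatim, so that $U$ already restricts to a norm-preserving bijection $\Dc(M)\to\Dc(\bM)$.

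The heart of the matter is the action isometry $\|\bM\bnabla f\|_\times=\|Mf\|_{1,2}$, and this is where I would lean on the relation \eqref{eq:M_Gamma}, $\bnabla M=\Gamma X_1\bnabla$, together with the fact (Theorem \ref{thm:H2_decomp_Hx_H.} and equation \eqref{eq:curl_grad_div_Ran_Ker_equality}) that $\Gamma$ fixes $\Rc(\bnabla)=\Hs_\times$. These let me rewrite $\|Mf\|_{1,2}=\|\bnabla Mf\|=\|\Gamma X_1\bnabla f\|$ and $\bM\bnabla f=X_1\Gamma X_1\bnabla f$, after which I would reduce every quantity in sight to bilinear forms in $\bnabla f$ using the self-adjointness of $\Gamma$ and $X_1$ and the idempotency $\Gamma^2=\Gamma$, $X_1^2=X_1$. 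The crucial algebraic simplification is the telescoping identity $(\Gamma X_1)^nX_1=(\Gamma X_1)^n$, valid for $n\ge1$ because the trailing $X_1X_1$ collapses; combined with the iterated intertwining \eqref{eq:Mn_Gamma}, $\bnabla M^n=[\Gamma X_1]^n\bnabla$, it shows that all the paired moments $\langle\bM^{n}\bnabla f,\bnabla g\rangle$ coincide with the corresponding quantities built from $M^{n}f$. This moment matching is the genuine content required downstream, since it forces the $L^2$-type and Sobolev-type spectral measures to be identical.

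The hard part will be the careful bookkeeping of the two non-commuting self-adjoint projections $X_1$ and $\Gamma$: read literally, $U=\bnabla$ intertwines $M$ with the \emph{compression} $\Gamma X_1\Gamma$ of $X_1$ to $\Hs_\times$, namely $\bnabla Mf=\Gamma X_1\bnabla f$, rather than directly with $\bM=X_1\Gamma X_1$, whose action carries the extra outer projection $X_1$. Reconciling these two operators is the main obstacle, and I would resolve it through the projection identities above — equivalently, through the standard two-projection correspondence identifying $\overline{\Rc(\Gamma X_1\Gamma)}$ with $\overline{\Rc(X_1\Gamma X_1)}$ — so that the range $\Rc(M)$ is carried isometrically onto $\Rc(\bM)$. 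Once the telescoping identity $(\Gamma X_1)^nX_1=(\Gamma X_1)^n$ is established and the compression is matched to $\bM$, the range correspondence and the action isometry follow by the same isometric transport used for the domains, completing the one-to-one isometric identification of $\Dc(M),\Rc(M)$ with $\Dc(\bM),\Rc(\bM)$.
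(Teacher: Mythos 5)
Your handling of the domain correspondence and of the first isometry $\|\bnabla f\|_\times=\|f\|_{1,2}$ is exactly the paper's argument: both are read off from Lemma \ref{lem:isometry} together with the observation that $X_1\bnabla f\neq0$ if and only if $X_1\bpsi\neq0$ under $\bpsi=\bnabla f$. Where you diverge is the action isometry $\|\bM\bnabla f\|_\times=\|Mf\|_{1,2}$: the paper disposes of it in one line by citing equation \eqref{eq:M_Gamma_norm}, whereas you route it through moment matching and the two-projection correspondence between $\Gamma X_1\Gamma$ and $X_1\Gamma X_1$.

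You have correctly put your finger on the one genuine subtlety, namely that \eqref{eq:M_Gamma} gives $\bnabla Mf=\Gamma X_1\bnabla f$, so that $\bnabla$ intertwines $M$ with $\Gamma X_1$ restricted to $\Hs_\times$ and not, on the nose, with $\bM=X_1\Gamma X_1$. But your proposed repair does not close this as stated. The telescoping identity $(\Gamma X_1)^nX_1=(\Gamma X_1)^n$ concerns the trailing factor and is irrelevant to the extra \emph{leading} $X_1$; and the abstract two-projection correspondence produces a unitary equivalence between $\Gamma X_1\Gamma$ and $X_1\Gamma X_1$ on their generic parts via a partial isometry built from $X_1\Gamma$, which is not the map $f\mapsto\bnabla f$ that the theorem asserts is the isometry. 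The point that actually makes the identity $\|Mf\|_{1,2}=\|X_1\Gamma X_1\bnabla f\|$ in \eqref{eq:M_Gamma_norm} true is that there the norm on the Sobolev side is the $\Hs^{1,2}_{X_1}$-norm induced by $\langle f,g\rangle_{1,2}=\langle X_1\bnabla f\bcdot\bnabla g\rangle$ (cf.\ Theorem \ref{thm:M_pos_bound_symm}), so that $\|Mf\|_{1,2}=\|X_1\bnabla Mf\|=\|X_1\Gamma X_1\bnabla f\|=\|\bM\bnabla f\|_\times$ becomes an algebraic identity. With the unweighted norm $\|g\|_{1,2}=\|\bnabla g\|$ that the theorem statement declares, one only gets $\|Mf\|_{1,2}=\|\Gamma X_1\bnabla f\|\geq\|X_1\Gamma X_1\bnabla f\|$, with equality precisely when $\Gamma X_1\bnabla f\in\Rc(X_1)$. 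Your moment-matching computation $\langle\bM^n\bnabla f,\bnabla g\rangle=\langle X_1(\Gamma X_1)^n\bnabla f\bcdot\bnabla g\rangle=\langle M^nf,g\rangle_{1,2}$ does deliver the action isometry (take $n=2$ and use the symmetry of $M$ and $\bM$), but only because it silently uses that weighted pairing; you should make the choice of inner product explicit, since that choice is the entire content of the step.
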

\begin{proof}
Let $f\in\Dc(M)$ so that $f\in\Hs^{1,2}_0$ and $X_1\bnabla f\neq0$. By Lemma \ref{lem:isometry} we have
$\bnabla f\in\Hs_\times$ with $\|\bnabla f\|_\times=\|f\|_{1,2}$. Since we also have $X_1\bnabla f\neq0$ 
it follows that $\bnabla f\in\Dc(\bM)$. By equation \eqref{eq:M_Gamma_norm} we have 
$\|M f\|_{1,2}=\|\bM\bnabla f \|_\times$. Conversely, let $\bpsi\in\Dc(\bM)$ so that $\bpsi\in\Hs_\times$ and
$X_1\bpsi\neq0$. By Lemma \ref{lem:isometry}, there exists unique $f\in\Hs^{1,2}_0$ such that $\bpsi=\bnabla f$
and $\|\bpsi\|_\times=\|f\|_{1,2}$. Since $X_1\bnabla f\neq0$, we also have $f\in\Dc(M)$. 
By equation \eqref{eq:M_Gamma_norm} we have 
$\|\bM\bnabla f \|=\|M f\|_{1,2}$.

\end{proof}

We emphasize that adding the condition $X_1\bpsi\neq0$ to the definition of the domain of the 
operator $\bM=X_1\Gamma X_1$,
$\Dc(\bM)=\{\bxi\in\Hs_\times:X_1\bxi\neq0 \}$, is needed to remove the degenerate case where 
$\bxi\in\Hs_\times$ also satisfies $\bxi\in\Kc(\bM)$. We now look into details associated with this.
By Lemma \ref{lem:orthogonal} we have $\Dc(\bM)=\Rc(X_1)\oplus\Rc(X_1)^\perp$
and, since $X_1+X_2=I$, $\bpsi$ can be written uniquely as $\bpsi=X_1\bpsi+X_2\bpsi$, with 
$X_1\bpsi\in\Rc(X_1)$ and $X_2\bpsi\in\Rc(X_1)^\perp$. 
From Lemma \ref{eq:projection_decomp_sigma_poly} we have $X_i X_j=X_i\delta_{ij}$, $i,j=1,2$,
hence $\bM X_1\bpsi=\bM\bpsi$ and $\bM X_2\bpsi=0$. Consequently,  since $X_1\bpsi\neq0$, 
by equation \eqref{eq:M_Gamma_norm} we have $\|\bM\bnabla f \|=\|M f\|_{1,2}$.
\begin{align}
\|\bM(X_1\bpsi+X_2\bpsi) \|=
\|\bM X_1\bpsi \|=
\|\bM\bpsi \|=
\|\bM\bnabla f \|=
\|M f\|_{1,2}\,.
\end{align}
If $X_1\bpsi=0$ then we would instead have $0=\|\bM\bpsi \|=\|M f\|_{1,2}\,$.

\begin{theorem}
\label{thm:integral_rep_M}
Define the coordinate system so that $\bnabla\phi_0=\vecE_0=E_0\vece_j$, for some $j=1,\ldots,d$.
The diagonal component $\sigma^*_{jj}$ of the effective
conductivity matrix $\bsigma^*$ for a uniaxial polycrystalline medium has the Stieltjes integral representation
\begin{align}
    \label{eq:Bergman_integral_rearranged}
    \sigma^*_{jj}/\sigma_2=1-\int_0^1\frac{\d\nu_{jj}(\lambda)}{s-\lambda}\,,
\end{align}
where $\d\nu_{jj}(\lambda)=\d\langle \Qt(\lambda)\phi_0/E_0,\phi_0/E_0\rangle_{1,2}=\d\langle X_1\bnabla \Qt(\lambda)\phi_0/E_0\bcdot\bnabla\phi_0/E_0\rangle$, $\Qt(\lambda)$ is the resolution of the identity   
associated with the self-adjoint operator $M$ and $\nu$ is a spectral measure for $M$.
Moreover, let $\d\mu_{jj}(\lambda)=\d\langle Q(\lambda)X_1\vece_j,\vece_j\rangle$ be the spectral measure
in Theorem \ref{thm:X1E} associated with the self-adjoint operator $\bM=X_1\Gamma X_1$, then 
the measures $\nu_{jj}$ and $\mu_{jj}$ are identical,
\begin{align}
    \nu_{jj}\equiv\mu_{jj}\,.
\end{align}
Moreover the operators $Q(\lambda)$ and $\Qt(\lambda)$ are related by
\begin{align}
    \label{eq:Q_Qt}
    X_1(\bnabla \Qt(\lambda)-Q(\lambda)\bnabla)=0. 
\end{align}
\end{theorem}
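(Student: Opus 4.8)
The plan is to treat the three assertions in turn: the Stieltjes representation \eqref{eq:Bergman_integral_rearranged}, the identity $\nu_{jj}\equiv\mu_{jj}$, and the operator relation \eqref{eq:Q_Qt}. First I would derive the representation directly from the energy functional \eqref{eq:functional_phi}. Writing the $\Hs^{1,2}_{X_1}$-inner-product $\langle f,g\rangle_{1,2}=\langle X_1\bnabla f\bcdot\bnabla g\rangle$ and using $\bnabla\phi_0=\vecE_0=E_0\vece_j$, the two bracketed terms in \eqref{eq:functional_phi} become $\langle\phi_0,\phi_0\rangle_{1,2}$ and $\langle\phi,\phi_0\rangle_{1,2}$. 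Substituting the resolvent formula $\phi=(sI-M)^{-1}M\phi_0$ of Lemma \ref{lem:resolvent_phi} and using the algebraic identity $I+(sI-M)^{-1}M=s(sI-M)^{-1}$, the sum of the two $1/s$ terms collapses to $\langle(sI-M)^{-1}\phi_0,\phi_0\rangle_{1,2}$, so that $\sigma^*_{jj}/\sigma_2=1-\langle(sI-M)^{-1}\phi_0/E_0,\phi_0/E_0\rangle_{1,2}$. Since $M$ is bounded and self-adjoint on $\Hs^{1,2}_{X_1}$ with spectrum in $[0,1]$ by Theorem \ref{thm:M_pos_bound_symm}, applying the spectral theorem \eqref{eq:Spectral_Theorem} with $Y(\lambda)=(s-\lambda)^{-1}$ yields exactly \eqref{eq:Bergman_integral_rearranged}, with $\d\nu_{jj}(\lambda)=\d\langle\Qt(\lambda)\phi_0/E_0,\phi_0/E_0\rangle_{1,2}$.

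The equality $\nu_{jj}\equiv\mu_{jj}$ then follows by comparison. Theorem \ref{thm:X1E} gives $\sigma^*_{jj}/\sigma_2=m_{jj}(h)=1-\int_0^1\d\mu_{jj}(\lambda)/(s-\lambda)$ in terms of $\bM=X_1\Gamma X_1$, while the previous paragraph gives the same quantity as $1-\int_0^1\d\nu_{jj}(\lambda)/(s-\lambda)$. As $\mu_{jj}$ and $\nu_{jj}$ are finite measures on $[0,1]$ whose Stieltjes transforms agree for all $s\in\mathbb{C}\backslash[0,1]$, the Stieltjes--Perron inversion theorem forces $\nu_{jj}\equiv\mu_{jj}$.

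For \eqref{eq:Q_Qt} the idea is to realize $X_1\bnabla$ as a bounded intertwiner between $M$ and $\bM$. Starting from $\bnabla M^n=[\Gamma X_1]^n\bnabla$ (equation \eqref{eq:Mn_Gamma}) and the purely algebraic identities $X_1[\Gamma X_1]^n=\bM^n$ and $\bM^n X_1=\bM^n$, both consequences of $X_1^2=X_1$, I would obtain $X_1\bnabla M^n=\bM^n X_1\bnabla$ for every $n\ge0$, hence $X_1\bnabla\, p(M)=p(\bM)\,X_1\bnabla$ for all polynomials $p$, and via the Neumann series together with analytic continuation, $X_1\bnabla(sI-M)^{-1}=(sI-\bM)^{-1}X_1\bnabla$ for $s\in\mathbb{C}\backslash[0,1]$. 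Feeding this resolvent intertwining into the Stieltjes--Perron (Stone) inversion formula that recovers the resolutions of the identity from the resolvents, and using that $X_1\bnabla$ is bounded by the isometry of Theorem \ref{thm:isometry_M_Mb}, promotes the intertwining to the spectral families: $X_1\bnabla\Qt(\lambda)=Q(\lambda)X_1\bnabla$. Finally, since $Q(\lambda)$ commutes with $X_1$ by Theorem \ref{thm:X1E}, the right-hand side equals $X_1 Q(\lambda)\bnabla$, which is precisely $X_1(\bnabla\Qt(\lambda)-Q(\lambda)\bnabla)=0$.

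I expect the main obstacle to be this last promotion from the polynomial/resolvent intertwining to the identity for the spectral projections $\Qt(\lambda)$ and $Q(\lambda)$: one must pass from the continuous functional calculus to the Borel functional calculus for the discontinuous indicator $\mathbf{1}_{[0,\lambda]}$, control the limits in the inversion formula at possible atoms of the spectral measures, and keep careful track of the two distinct inner products (on $\Hs^{1,2}_{X_1}$ and on $\Hs_\times$) between which the bounded operator $X_1\bnabla$ intertwines.
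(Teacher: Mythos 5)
Your proposal is correct and establishes all three claims, but the second and third parts follow genuinely different routes from the paper's. For the integral representation \eqref{eq:Bergman_integral_rearranged} your manipulation is just a reordering of the paper's: you apply $I+(sI-M)^{-1}M=s(sI-M)^{-1}$ at the operator level before invoking the spectral theorem, while the paper first produces an integral of $\lambda/(s-\lambda)$ against $\nu_{jj}$ plus the mass term $\langle X_1\vecE_0\bcdot\vecE_0\rangle/(sE_0^2)$ and then uses $\lambda/(s-\lambda)=-1+s/(s-\lambda)$; the two computations are interchangeable. For $\nu_{jj}\equiv\mu_{jj}$ the paper matches all moments, showing $\nu_{jj}^n=E_0^{-2}\langle M^n\phi_0,\phi_0\rangle_{1,2}=E_0^{-2}\langle X_1[\Gamma X_1]^n\bnabla\phi_0\bcdot\bnabla\phi_0\rangle=\mu_{jj}^n$ via \eqref{eq:Mn_Gamma}, and then appeals to determinacy of the Hausdorff moment problem; you instead compare the two Stieltjes transforms of the same function $\sigma^*_{jj}/\sigma_2$ and invoke Stieltjes--Perron uniqueness. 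Both tools are standard (the paper itself uses Stieltjes--Perron inside the proof of Theorem \ref{thm:X1E}), and your route is slightly shorter, though the paper's moment identity is what it then reuses to get \eqref{eq:Q_Qt}. For \eqref{eq:Q_Qt} the divergence is real: the paper obtains the weak identity $X_1\bnabla\Qt(\lambda)=Q(\lambda)X_1\bnabla$ by subtracting the two spectral functions and declaring $\vecE_0$ (hence $\phi_0$) arbitrary, then commutes $Q(\lambda)$ past $X_1$; you instead build the intertwining $X_1\bnabla\,p(M)=p(\bM)\,X_1\bnabla$ from \eqref{eq:Mn_Gamma} and the identities $X_1[\Gamma X_1]^n=\bM^n=\bM^nX_1$, pass to resolvents by Neumann series and analytic continuation, and promote the relation to the spectral families. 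Your route costs the standard but nontrivial step of carrying a bounded intertwiner from the resolvent level to the Borel functional calculus (which you correctly flag as the main obstacle, and which is handled by Stone's inversion formula since $X_1\bnabla$ is bounded --- indeed isometric --- by Theorem \ref{thm:isometry_M_Mb}); what it buys is an operator identity verified on all of $\Dc(M)$, whereas the paper's argument only tests the difference operator against the $d$-parameter family of linear potentials $\phi_0=\vecE_0\bcdot\vecx$ paired with $\bnabla\phi_0$, so your version is, if anything, the more complete justification of the operator equation \eqref{eq:Q_Qt}.
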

\begin{proof}
Combining equations
\eqref{eq:resolvent_phi} and \eqref{eq:functional_phi} yields
\begin{align}
    \label{eq:functional_phi_X1}
    E_0^2\sigma^*_{jj}/\sigma_2=
    E_0^2-
    \langle X_1\vecE_0\bcdot\vecE_0\rangle/s-
    \langle (s-M)^{-1}M\phi_0,\phi_0\rangle_{1,2}/s\,.
\end{align}
By the spectral theorem formula in \eqref{eq:Spectral_Theorem} and equation \eqref{eq:functional_phi_X1} 
we have
\begin{align}
    \label{eq:Bergman_integral}
    \sigma^*_{jj}/\sigma_2=
    1-
    \frac{1}{sE_0^2}\langle X_1\vecE_0\bcdot\vecE_0\rangle-
    \frac{1}{s}\int_0^1\frac{\lambda\d\nu_{jj}(\lambda)}{s-\lambda}\,,
\end{align}
where $\d\nu_{jj}(\lambda)=\d\langle \Qt(\lambda)\phi_0/E_0,\phi_0/E_0\rangle_{1,2}=\d\langle X_1\bnabla \Qt(\lambda)\phi_0/E_0\bcdot\bnabla\phi_0/E_0\rangle$, $\Qt(\lambda)$ is the resolution of the identity operator in 
one-to-one correspondence with the self-adjoint operator $M$, and $\nu$ is a spectral measure for $M$
with measure mass given \cite{Stone:64} by 
\begin{align}
    \label{eq:measure_mass_M}
    \nu_{jj}^0=
    \langle\phi_0,\phi_0\rangle_{1,2}/E_0^2=
    \langle X_1\vecE_0\bcdot\vecE_0\rangle/E_0^2=
    \langle X_1\vece_j\bcdot\vece_j\rangle\,.
\end{align}
Consequently, since $\lambda/(s-\lambda)=-1+s/(s-\lambda)$, equation 
\eqref{eq:Bergman_integral} can be written as equation \eqref{eq:Bergman_integral_rearranged}.

We now show that the spectral measure $\nu_{jj}$ for the operator $M$ and the 
the spectral measure $\mu_{jj}$ for the operator $\bM=X_1\Gamma X_1$ 
are identical. From equation \eqref{eq:measure_mass_M} $\nu_{jj}^0=\mu_{jj}^0$. Applying equation
\eqref{eq:Mn_Gamma} and using 
$X_1^2=X_1$ and $\vecE_0=\bnabla\phi_0$ yields
\begin{align}
    \nu_{jj}^n=
    E_0^{-2}\langle M^n\phi_0,\phi_0\rangle_{1,2}=
    E_0^{-2}\langle X_1\bnabla M^n\phi_0\bcdot\bnabla\phi_0\rangle=
    E_0^{-2}\langle X_1[\Gamma X_1]^n\bnabla \phi_0\bcdot\bnabla\phi_0\rangle=
    \mu_{jj}^n\,.
\end{align}
Since $\nu_{jj}^n=\mu_{jj}^n$ for all $n=0,1,2,\ldots$, and the Hausdorff 
moment problem is \emph{determinate} \cite{Shohat:1963}, we have that $\nu_{jj}\equiv\mu_{jj}$, i.e., 
the measures $\nu_{jj}$ and $\mu_{jj}$ are identical.  Since $\vecE_0=\bnabla\phi_0$, it follows that 
\begin{align}
    0=
    \langle X_1\bnabla \Qt(\lambda)\phi_0\bcdot\bnabla\phi_0\rangle-
    \langle Q(\lambda)X_1\vecE_0\bcdot\vecE_0\rangle=
    \langle (X_1\bnabla \Qt(\lambda)-Q(\lambda)X_1\bnabla)\phi_0\bcdot\vecE_0\rangle.
\end{align}
Since the vector $\vecE_0$, hence $\phi_0$ is arbitrary, we have the following weak identity
\begin{align}
    X_1\bnabla \Qt(\lambda)=Q(\lambda)X_1\bnabla. 
\end{align}
From Theorem \ref{thm:X1E} the operators $Q(\lambda)$ and $X_1$ commute, which establishes equation
\eqref{eq:Q_Qt}.
\end{proof}

Clearly the results in this section also hold under the substitution $(X_1,s,\sigma_2)\mapsto(X_2,t,\sigma_1)$.
By Theorem \ref{thm:self-adjoint_two-component} we also have the following corollary.
\begin{corollary}
The results of Lemma \ref{lem:resolvent_phi}, Theorems \ref{thm:M_pos_bound_symm}, 
\ref{thm:isometry_M_Mb},  and \ref{thm:integral_rep_M} hold for the setting of a two-component composite material
under the substitution of $X_1$ with $\chi_1$. Consequently, the corresponding results developed in this section 
also hold under the substitution $(\chi_1,s,\sigma_2)\mapsto(\chi_2,t,\sigma_1)$.
\end{corollary}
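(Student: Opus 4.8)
The plan is to recognize that the entire Sobolev-space development depends on $X_1$ \emph{only} through its abstract projection structure --- self-adjointness $X_1^*=X_1$, idempotency $X_1^2=X_1$, complementarity $X_1+X_2=I$ from equation \eqref{eq:X_projection}, and the mapping hypothesis $X_1:\Hs^1\mapsto\Hs^1$ --- and never through the matrix-valued (rotation-conjugate) nature of $X_1$ beyond these facts. Since $\chi_1$, acting componentwise on $\Hs$ as multiplication by a characteristic function, is itself a self-adjoint idempotent satisfying the analogous identities in equation \eqref{eq:chi_projection}, and since Theorem \ref{thm:self-adjoint_two-component} supplies $\chi_1:\Hs^1\mapsto\Hs^1$ together with self-adjointness of $\chi_1\Gamma\chi_1$ on $\Hs^1$, each proof should carry over verbatim under the symbol exchange $X_1\mapsto\chi_1$. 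Concretely I would introduce the two-component operator $M_\chi=(\bnabla^*\bnabla)^{-1}[\bnabla^*\chi_1\bnabla]$ and verify that the single structural identity driving the whole section, equation \eqref{eq:M_Gamma}, becomes $\bnabla M_\chi=\Gamma\chi_1\bnabla$, whence iteration gives the substituted form $\bnabla M_\chi^n=[\Gamma\chi_1]^n\bnabla$ of equation \eqref{eq:Mn_Gamma}.

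First I would reprove Lemma \ref{lem:resolvent_phi}: its derivation used only $\vecE_f=\bnabla\phi$, $\vecE_0=\bnabla\phi_0$ with $\bnabla^*\vecE_0=0$, and the splitting $\bsigma=\sigma_2(I-\chi_1/s)$ coming from the two-component analogue of equation \eqref{eq:sig_rho_s_t}; none of these reference the polycrystalline structure, so $\phi=(s-M_\chi)^{-1}M_\chi\phi_0$ holds. Next, for Theorem \ref{thm:M_pos_bound_symm}, positivity, the bound $\|M_\chi\|\le1$, and symmetry of $M_\chi$ with respect to $\langle\cdot,\cdot\rangle_{1,2}$ follow mechanically from the corresponding properties of $\chi_1\Gamma\chi_1$ on $\Hs^1$ via $\bnabla M_\chi=\Gamma\chi_1\bnabla$ and $\chi_1^2=\chi_1$, yielding $\Sigma\subseteq[0,1]$. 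For Theorem \ref{thm:isometry_M_Mb} I would invoke Lemma \ref{lem:isometry} (which concerns only $\bnabla$ and is independent of the projection) together with the norm identity $\|M_\chi f\|_{1,2}=\|\chi_1\Gamma\chi_1\bnabla f\|$ from the substituted version of equation \eqref{eq:M_Gamma_norm}, establishing the one-to-one isometric correspondence of domains and ranges. Finally for Theorem \ref{thm:integral_rep_M}, the functional identity, the Stieltjes representation \eqref{eq:Bergman_integral_rearranged}, and the moment computation $\nu_{jj}^n=\mu_{jj}^n$ (using the substituted \eqref{eq:Mn_Gamma} and $\chi_1^2=\chi_1$) go through unchanged; determinacy of the Hausdorff moment problem then forces $\nu_{jj}\equiv\mu_{jj}$ and delivers the commutation relation \eqref{eq:Q_Qt} with $X_1$ replaced by $\chi_1$.

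The remaining clause, persistence under $(\chi_1,s,\sigma_2)\mapsto(\chi_2,t,\sigma_1)$, follows from the inter-phase symmetry encoded in $\chi_1+\chi_2=1$ and $t=1-s$: the two components are interchangeable in the splitting $\bsigma=\sigma_2(I-\chi_1/s)=\sigma_1(I-\chi_2/t)$, exactly as reflected in Corollary \ref{cor:resolvent_chi2E_chi1J_chi2J}. Repeating the argument with $\chi_2$ in the role of $\chi_1$ and $t$ in the role of $s$ reproduces every statement.

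I do not anticipate a substantive obstacle; the corollary is a bookkeeping consequence of the substitution principle already secured by Theorem \ref{thm:self-adjoint_two-component}. The sole point demanding genuine care is confirming that no step in the four proofs covertly exploits the matrix (rotation-conjugate) character of $X_1$ rather than its abstract idempotent self-adjointness. Once one checks that $\chi_1$, viewed as componentwise multiplication on $\Hs$, is itself a self-adjoint projection obeying \eqref{eq:chi_projection}, this concern dissolves and the exchange becomes purely notational.
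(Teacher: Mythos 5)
Your proposal is correct and follows essentially the same route as the paper, which justifies this corollary purely by appeal to the substitution principle of Theorem \ref{thm:self-adjoint_two-component} (the $\chi_i$ satisfy the same abstract identities \eqref{eq:chi_projection} as the $X_i$ in \eqref{eq:X_projection}, so every argument carries over verbatim) together with the remark that the section's results persist under $(X_1,s,\sigma_2)\mapsto(X_2,t,\sigma_1)$. Your write-up is in fact more explicit than the paper's, in that you verify the substituted key identity $\bnabla M_\chi=\Gamma\chi_1\bnabla$ and walk through each of the four results, but the underlying idea is identical.
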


\section{Bounding Procedure}\label{sec:Bounding_Procedure}
In this section, we review a mathematical framework which utilizes the Stieltjes integral representations 
for the diagonal components $\sigma^*_{kk}$ and $\rho^*_{kk}$, $k=1,\ldots,d$, of the effective conductivity 
$\bsigma^*$ and resistivity $\brho^*$ matrices in equation \eqref{eq:Stieltjes_G_X}, involving \emph{positive} 
measures $\mu_{kk}$ and $\eta_{kk}$, to provide rigorous bounds for $\sigma^*_{kk}$ and $\rho^*_{kk}$. The 
bounds incorporate known geometric information about the polycrystalline material, given in terms of the moments 
$\mu_{kk}^n$ and $\eta_{kk}^n$ of these measures in \eqref{eq:Moments_mu}. For example, 
$\mu_{kk}^n=\int_0^1\lambda^n \d\mu_{kk}(\lambda)=\langle[X_1\Gamma X_1]^n X_1\vece_k\bcdot\vece_k\rangle$ contains statistical information about the microgeometry of the polycrystalline material through the random projection matrix $X_1$, involving the distribution of the orientation angles of the composite crystals.

For example, in the case of  2D polycrystalline media, $d=2$, equation
\eqref{eq:polycrystal_parameters_2D} implies that 
\begin{align}\label{eq:measure_mass_theta_2D}
\mu_{11}^0=\langle\cos^2\theta\rangle, 
\quad
\mu_{22}^0=\langle\sin^2\theta\rangle. 
\end{align}
If the material is isotropic and $\theta(0 ,\omega)=\theta^{\,\prime}(\omega)$ is uniformly distributed on $[0,2\pi]$ then 
the averages in equation \eqref{eq:measure_mass_theta_2D} yield 
\begin{align}\label{eq:measure_mass_theta_2D_isotropic}
\mu_{11}^0=\mu_{22}^0=1/2.
\end{align}
In the case of  3D polycrystalline media, defining $R=R_1 R_2 R_3$, where $R_1$, $R_2$, and $R_3$ are the 3D rotation matrices in equation \eqref{eq:polycrystal_parameters_3D}, yields 
\begin{align}\label{eq:measure_mass_theta_3D}
\mu_{11}^0=\langle\cos^2{\theta_2}\cos^2{\theta_3}\rangle, 
\quad
\mu_{22}^0=\langle\cos^2{\theta_2}\sin^2{\theta_3}\rangle, 
\quad
\mu_{33}^0=\langle\sin^2{\theta_2}\rangle. 	
\end{align}
If the material is isotropic and the $\theta_i(0 ,\omega)=\theta_i^{\,\prime}(\omega)$ are uniformly distributed on $[0,2\pi]$ then 
the averages in equation \eqref{eq:measure_mass_theta_3D} yield 
\begin{align}\label{eq:measure_mass_theta_3D_isotropic}
\mu_{11}^0=\mu_{22}^0=1/4,
\quad 
\mu_{33}^0=1/2.
\end{align}

Transforming the integrals in \eqref{eq:Stieltjes_G_X} in terms of Stieltjes functions of the parameters $h$ and $z=1/h$ \cite{Murphy:JMP:063506} enables the theory of Pad\'{e} approximants to be applied, which provides bounds for $\sigma^*_{kk}$ and $\rho^*_{kk}$ in terms of the 
approximants \cite{Murphy:2024:arXiv:Spectral_Measures_Iterative_Bounds}. An alternate approach fixes the
contrast parameter $s$ and varies over admissible sets of measures $\mu_{kk}$ and 
$\eta_{kk}$~\cite{Golden:CMP-473,Golden:JMPS-333:4}. 
When $h$ is real-valued, $\sigma^*_{kk}$ is bounded by a real interval, and when $h$ is 
complex valued  $\sigma^*_{kk}$ is bounded by arcs of circles, and similarly for $\rho^*_{kk}$   \cite{Baker:1996:Book:Pade,Golden:CMP-473,Golden:JMPS-333:4}. 
The bounding region becomes progressively smaller as more moments are
known~\cite{Milton:JAP-5294,Golden:JMPS-333:4,Baker:1996:Book:Pade}. Since the bounding
procedure associated with the functions $G_{kk}(t)$ and $H_{kk}(t)$
in~\eqref{eq:Stieltjes_F} is analogous, we will focus on that
involving $F_{kk}(s)$ and $E_{kk}(s)$.

The bounds for $\sigma_{kk}^*$ and $\rho^*_{kk}$ follow from three important
properties of the functions $F_{kk}(s)$ and $E_{kk}(s)$. First, their
integral representations displayed in equation~\eqref{eq:Stieltjes_G_X}
\emph{separate} parameter information in $s$ and $E_0$ from the 
geometry of the composite, which is encoded in the underlying spectral
measures $\mu_{kk}$ and $\eta_{kk}$ via their moments $\mu_{kk}^n$ and
$\eta_{kk}^n$, $n=0,1,2,\ldots$~\cite{Bruno:JSP-365,Golden:CMP-473}. Second,
these integral representations are \emph{linear} functionals of the
spectral measures. Finally, $\mu_{kk}$ and $\eta_{kk}$ are \emph{positive}
measures, in contrast to $\mu_{jk}$ and $\eta_{jk}$ for $j\neq k$ which are 
\emph{signed} measures. In this
section, we review how these three properties yield rigorous bounds
for the diagonal components of the effective parameters $\sigma^*_{kk}$ and
$\rho^*_{kk}$~\cite{Golden:CMP-473,Golden:JMPS-333:4}.

For simplicity, we will focus on one diagonal component $\sigma^*_{kk}$ and
$\rho^*_{kk}$ of the effective conductivity and resistivity tensors
$\bsig^*$ and $\brho^*$, for some $k=1,\ldots,d$, and set $\sigma^*=\sigma_{kk}^*$,
$F(s)=F_{kk}(s)$, $m(h)=m_{kk}(h)$, $\mu=\mu_{kk}$, $E(s)=E_{kk}(s)$,
$\tilde{m}(h)=\tilde{m}_{kk}(h)$, and $\eta=\eta_{kk}$. 
Here,
$F(s)=1-m(h)$ and $E(s)=1-\tilde{m}(h)$.
We will also exploit the symmetries between
$F(s)$ and $E(s)$ in equation~\eqref{eq:Stieltjes_G_X} and initially 
focus on the function $F(s)$ and the measure $\mu$, referring to the
function $E(s)$ and the measure $\eta$ where appropriate.

Bounds for $\sigma^*$ are obtained as follows, while those for $\rho^*$ are
obtained analogously. The support of the measure 
$\mu$ is contained in the interval $[0,1]$ and its mass $\mu^0$ is given in equations
\eqref{eq:measure_mass_theta_2D}--\eqref{eq:measure_mass_theta_3D_isotropic}, where $0\leq \mu^0\leq1$. Consider the set $\mathscr{M}$ of
positive Borel measures on $[0,1]$ with mass less than or equal to 1. By
equation~\eqref{eq:Stieltjes_F}, for fixed $s\in\mathbb{C}\backslash[0,1]$,
$F(s)$ is a linear functional of the measure $\mu$,
$F:\mathscr{M}\mapsto\mathbb{C}$, and we write $F(s)=F(s,\mu)$ and
$m(h)=m(h,\mu)$. Suppose that we know the moments $\mu^n$ of the measure
$\mu$ for $n=0,\ldots,J$. Define the set $\mathscr{M}_J^\mu\subset\mathscr{M}$ of
measures by 
\begin{align}\label{eq:Measure_Set}
\mathscr{M}_J^\mu
=\left\{\nu\in\mathscr{M} \ \Big| \   \int_0^1\lambda^n\d\nu(\lambda)=\mu^n, \  n=0,\ldots,J\right\}. 
\end{align}
The set $A_J^\mu\subset\mathbb{C}$ that represents the possible
values of $m(h,\mu)=1-F(s,\mu)$ which is compatible with the
known information about the random medium is given by
\begin{align}\label{eq:Bounding_Set}
A_J^\mu
=\left\{\ m(h,\mu)\in\mathbb{C} \ | \
\ h\not\in(-\infty,0], \ \mu\in \mathscr{M}_J^\mu\right\}. 
\end{align}

The set of measures $\mathscr{M}_J^\mu$ is a compact, convex
subset of $\mathscr{M}$ with the topology of weak
convergence~\cite{Golden:CMP-473}. Since the mapping $F(s,\mu)$
in~\eqref{eq:Stieltjes_G_X} is linear in $\mu$, it follows that 
$A_J^\mu$ is a compact convex subset of the complex plane
$\mathbb{C}$. The extreme points of $\mathscr{M}_0^\mu$ are the one 
point measures $a\delta_b$, $0\leq a,b\leq1$~\cite{Dunford_Schwartz:LinOp_PtI},
while the extreme points of $\mathscr{M}_J^\mu$ for $J>0$ are weak limits
of convex combinations of measures of the
form~\cite{Karlin_Studden:Book:1966,Golden:CMP-473}   
\begin{align}\label{eq:Discrete_Measure}
\mu_J(\d\lambda)=\sum_{i=1}^{J+1}a_i\delta_{b_i}(\d\lambda), \quad
a_i\geq0, \quad 0\leq b_1<\cdots<b_{J+1}<1, \quad
\sum_{i=1}^{J+1}a_ib_i^n=\mu^n,
\end{align}
for $n=0,1,\ldots,J$. 
In general~\cite{Golden:CMP-473},  not every measure
$\mu\in\mathscr{M}_J^\mu$ gives rise to such a function $m(h,\mu)$. Therefore,
the set $A_J^\mu$ will \emph{contain} the exact range of values of the
effective conductivity~\cite{Golden:CMP-473}. This is sufficient for
the bounding procedure discussed in this section.

\begin{figure}                \centering
\includegraphics[width=\textwidth]{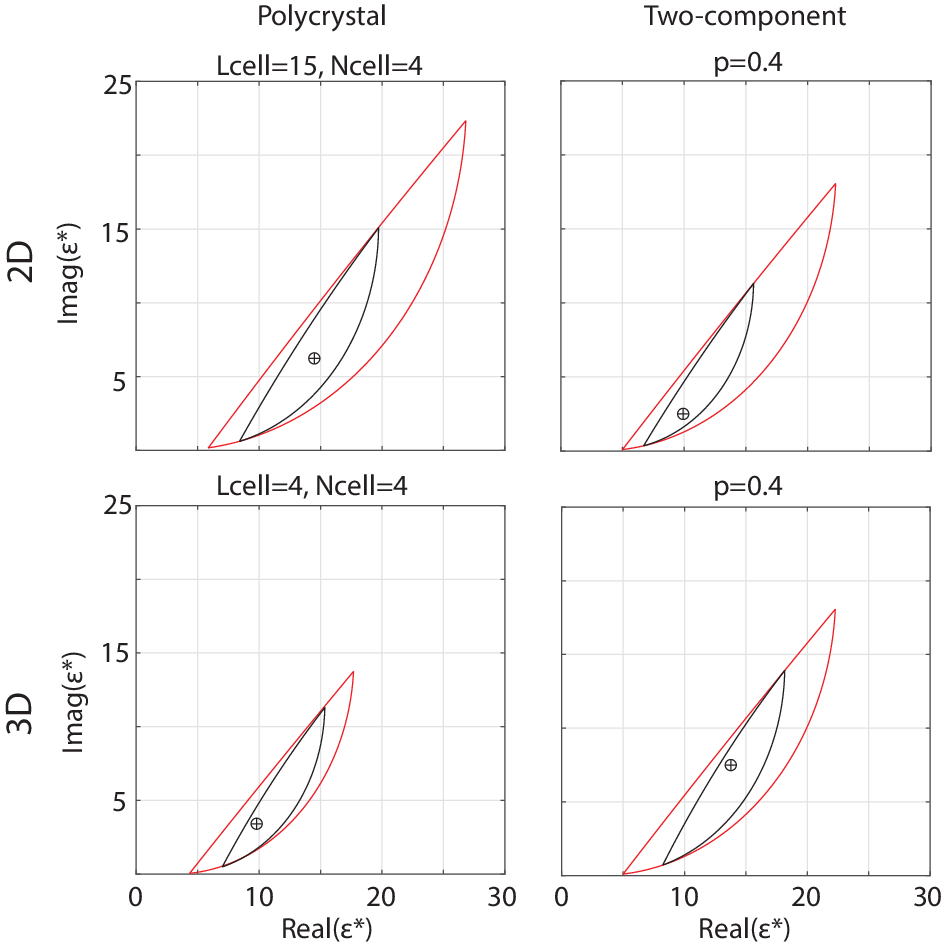}
\caption{\textbf{Bounds on the effective conductivity}. First order bounds (red curves) and second order, isotropic bounds (black curves) of the effective complex conductivity $\bsig^*_{kk}$ for discretized two-dimensional (top row) and three-dimensional (bottom row) polycrystalline and two-component materials. The first order bounds, which utilize the zeroth moment $\mu^0$, are given in \eqref{eq:0th_order_Bounds}. By utilizing the additonal geometric information of the first moment $\mu_1$, the second order bounds are significantly tighter than the first order bounds. Alternatively, the same bounds can be derived without $\mu^1$ by assuming isotropic material geometry \cite{Gully:PRSA:471:2174}. The collocated circles `o' and crosses `+' denote the values $\bsig^*_{11}$ computed in two equivalent ways given in Corollary \eqref{eq:Stieltjes_parameters} using the spectral measures $\text{d}\mu_{11}$ (circles) and $\text{d}\alpha_{11}$ (crosses)  corresponding to the random operators $X_1 \Gamma X_1$ and $X_2 \Gamma X_2$, respectively. Both the polycrystalline and two-component material geometry are square regions (cubic for 3D) and isotropic, with the discretized spatial side lengths set to $L=60$ for 2D and $L=16$ for 3D. For the 2D polycrystal, the spatial domain is divided into a $4 \times 4$ grid of square crystallites and the 3D polycrystal consists of a $4 \times 4 \times 4$ grid of cubic crystallites. Both two-component materials have a phase 1 volume fraction $p=0.4$. The conductivity $\sigma_1=51.074+i45.160$ is assigned to the $x$-direction for the individual crystallites as well as the first phase of the two-component geometry, and the conductivity $\sigma_2 = 3.070 + i0.0019$ is assigned to the $y$ and $z$ directions of the individual crystallites as well as the second phase of the two-component geometry. Details on the direct computation of spectral measures and discretization of the operators $X_1 \Gamma X_1$ and $X_2 \Gamma X_2$ will be the subject of a future publication. }


\label{polycrystal_bounds}

\end{figure}

By the symmetries between the formulas in
equation~\eqref{eq:Stieltjes_G_X}, the support of the measure $\eta$ is
contained in 
the interval $[0,1]$ and its mass is given by $\eta^0=1-\mu^0$, where
$0\leq \eta^0\leq1$. We can therefore define compact, convex sets
$\mathscr{M}_J^\eta\subset\mathscr{M}$ and $A_J^\eta\subset\mathbb{C}$ which are
analogous to those defined in equations~\eqref{eq:Measure_Set}
and~\eqref{eq:Bounding_Set}, respectively, involving the function 
$\tilde{m}(h,\eta)=1-E(s,\eta)$. Moreover, the extreme points of
$\mathscr{M}_0^\eta$ are the one point measures $c\delta_d$, $0\leq c,d\leq1$, 
while the extreme points of $\mathscr{M}_J^\eta$ are weak limits
of convex combinations of measures of the form given in
equation~\eqref{eq:Discrete_Measure}.

Consequently, in order to determine the extreme
points of the sets $A_J^\mu$ and $A_J^\eta$, it suffices to determine the
range of values in $\mathbb{C}$ of the functions $m(h,\mu_J)=1-F(s,\mu_J)$
and $\tilde{m}(h,\eta_J)=1-E(s,\eta_J)$, respectively, where  
\begin{align}\label{eq:Discrete_mh}
	F(s,\mu_J)=\sum_{i=1}^{J+1}\frac{a_i}{s-b_i}\,, \qquad
	E(s,\eta_J)=\sum_{i=1}^{J+1}\frac{c_i}{s-d_i}\,,
\end{align}
as the $a_i$, $b_i$, $c_i$, and $d_i$ vary under the
constraints given in equation ~\eqref{eq:Discrete_Measure}. While
$F(s,\mu_J)$ and $E(s,\eta_J)$ in~\eqref{eq:Discrete_mh} may not run over
all points in $A_J^\mu$ and 
$A_J^\eta$ as these parameters vary, they run over the
extreme points of these sets, which is sufficient due to their
convexity. It is important to note that, as the effective complex
conductivity $\sigma^*$ is given by $\sigma^*=\sigma_2m(h,\mu)=\sigma_1/\tilde{m}(h,\eta)$, the
regions $A_J^\mu$ and $A_J^\eta$ have to be mapped to the common
$\sigma^*$-plane to provide bounds for $\sigma^*$.

Consider the case where $J=0$  in~\eqref{eq:Discrete_mh} and the
distribution of the crystal orientation angles is fixed, so that $F(s,\mu_J)=\mu^0/(s-\lambda)$ and
$E(s,\eta_J)=\eta^0/(s-\tilde{\lambda})$. By the above discussion, the values of 
$F(s,\mu$) and $E(s,\eta)$ lie inside the circles $C_0(\lambda)$ and
$\tilde{C}_0(\tilde{\lambda})$, respectively, given by  
\begin{align}\label{eq:0th_order_Bounds}
	C_0(\lambda)=\frac{\mu^0}{s-\lambda}\,, \quad -\infty\leq\lambda\leq \infty, \qquad
	\tilde{C}_0(\tilde{\lambda})=\frac{\eta^0}{s-\tilde{\lambda}}\,, \quad
	-\infty\leq\tilde{\lambda}\leq \infty. 
\end{align}
In the $\sigma^*$-plane, the intersection of these two regions is bounded by
two circular arcs corresponding to $0\leq\lambda\leq \eta^0$ and $0\leq\tilde{\lambda}\leq \mu^0$
in~\eqref{eq:0th_order_Bounds}, and the values of $\sigma^*$ lie inside
this region~\cite{Golden:JMPS-333:4}. 
The arcs are traced out
as the aspect ratio varies. When the value of the component
conductivities $\sigma_1$ and $\sigma_2$ are real and positive, the bounding
region collapses to the interval
$1/(\mu^0/\sigma_1+\eta^0/\sigma_2)\leq\sigma^*\leq \mu^0\sigma_1+\eta^0\sigma_2$, which are the Wiener
bounds. 

Now consider the case where $J=1$ in~\eqref{eq:Discrete_mh}. 
If the random medium is also known to be statistically
isotropic, so that $\bsig^*$ is diagonal~\cite{Milton:2002:TC}, the first moments $\mu^1$ 
is known to be given by~\cite{Gully:PRSA:471:2174}   
\begin{align}\label{eq:First_Moments}
	\mu^1=\frac{d-1}{d^3}\,, \qquad
\end{align}
This procedure can then be repeated using methods of complex analysis to compute
bounds associated with isotropic materials shown in Figure \ref{polycrystal_bounds}
\cite{Golden:JMPS-333:4,Gully:PRSA:471:2174}.

\section{Conclusions}
We formulated the first rigorous mathematical framework that provides Stieltjes integral representations
for the bulk transport coefficients for uniaxial polycrystalline materials, involving spectral measures of self-adjoint
random operators of the form $X_1\Gamma X1$ and $X_1\Upsilon X_1$, where $\Gamma$ and $\Upsilon$
are non-random projections onto the range of generalized gradient and curl operators, and $X_1$ is a random
projection operator. We accomplished this by providing a detailed analysis, that establishes these operators 
are well defined and indeed self-adjoint on an appropriate Hilbert space. We show that the mathematical
framework describing effective transport for uniaxial polycrystalline materials is a direct analogue to the mathematical framework describing effective transport for locally isotropic two-component materials,
where one mathematical framework maps to the other by simply swapping the random projection operator
$X_1$ with the characteristic function $\chi_1$ associated with two-component media, which is also a random
projection operator. Consequently, we also simultaneously formulated the first rigorous mathematical framework that provides Stieltjes integral representations for the bulk transport coefficients for two-component random media, involving spectral measures of self-adjoint random operators. We provided an abstract formulation of the
Helmholtz theorem which led to the resolution of the identity $\Gamma+\Upsilon=I$, which enabled these results to be extended to inverses of effective parameters, e.g., effective conductivity and resistivity. We briefly reviewed rigorous bounds that follow from such Stieltjes integrals and partial knowledge about the material geometry validated the bounds by numerical calculations of the effective parameters for polycrystalline media.

\medskip

{\bf Acknowledgements.}
We gratefully acknowledge support from the 
Division of Mathematical Sciences at the US National Science 
Foundation (NSF) through Grants DMS-0940249, DMS-1413454,
DMS-1715680, DMS-2136198, and DMS-2206171.
We are also grateful for support from the Applied and 
Computational Analysis Program 
and the Arctic and Global Prediction Program 
at the US Office of Naval 
Research through grants 
N00014-13-1-0291,
N00014-18-1-2552,
N00014-18-1-2041
and
N00014-21-1-2909.

\medskip

\bibliographystyle{plain}
\bibliography{murphy}
\end{document}